\newtheorem{theorem}{Theorem}
\newtheorem{assumptions}{Assumption}
\newtheorem{definition}{Definition}
\newtheorem{lemma}{Lemma}
\newenvironment{proof}[1][Proof]{\textbf{#1.} }{\ \rule{0.5em}{0.5em}}
\newcommand{\Tmin}{{T_*}}
\newcommand{\Tmax}{{T^*}}
\newcommand{\out}{\mathcal{O}}
\newcommand{\outm}{\overline{\mathcal{O}}_m}
\newcommand{\gs}{\widetilde{\gamma}}
\newcommand{\gl}{\overline{\gamma}}
\newcommand{\ds}{\widetilde{d}}
\newcommand{\dl}{\overline{d}}
\newcommand{\Ql}{\overline{Q}}
\newcommand{\Qs}{\widetilde{Q}}
\newcommand{\Pl}{P(\overline{\out})}
\newcommand{\PlN}{P_N(\overline{\out})}
\newcommand{\PlPd}{P_{PD}(\overline{\out})}
\newcommand{\PlPr}{P_{PR}(\overline{\out})}
\newcommand{\PlNs}{P_N(\overline{\out}^s)}
\newcommand{\yl}{\overline{y}}
\newcommand{\Ps}{P(\widetilde{\out})}
\newcommand{\PsN}{P_N(\widetilde{\out})}
\newcommand{\PsNs}{P_N(\widetilde{\out}^s)}
\newcommand{\PsPd}{P_{PD}(\widetilde{\out})}
\newcommand{\PsPr}{P_{PR}(\widetilde{\out})}
\newcommand{\Xl}{\overline{X}}
\newcommand{\Sl}{\overline{S}}
\newcommand{\thl}{\overline{\theta}}
\newcommand{\I}{\mathcal{I}}
\newcommand{\J}{\mathcal{J}}
\newcommand{\Ur}{\mathcal{U}_R}
\newcommand{\Lr}{\mathcal{L}_R}
\newcommand{\vl}{\overline{v}}
\newcommand{\vs}{\widetilde{v}}
\newcommand{\ml}{\overline{m}}
\newcommand{\ms}{\widetilde{m}}
\newcommand{\al}{\overline{\alpha}}
\newcommand{\as}{\widetilde{\alpha}}
\newcommand{\vp}{\mathbf{p}}
\newcommand{\gm}{\overline{\gamma^m}}
\newcommand{\gu}{\overline{\gamma^u}}
\newcommand{\xm}{\xi^m_T}
\newcommand{\outa}{\overline{\mathcal{O}}_{A}}
\begin{document}

\title{Proactive Resource Allocation: Harnessing the Diversity and Multicast Gains}

\author{
\IEEEauthorblockN{John Tadrous, Atilla Eryilmaz, and Hesham El Gamal}
\thanks{Authors are with the Department of Electrical and Computer Engineering at the Ohio State University, Columbus, USA.\newline \indent E-mail: \{tadrousj,eryilmaz,helgamal\}@ece.osu.edu}}
\maketitle

\begin{abstract}
This paper introduces the novel concept of proactive resource allocation through which the predictability of user behavior is exploited to balance the wireless traffic over time, and hence, significantly reduce the bandwidth required to achieve a given blocking/outage probability. We start with a simple model in which the smart wireless devices are assumed to predict the arrival of new requests and submit them to the network $T$ time slots in advance. Using tools from large deviation theory, we quantify the resulting {\bf prediction diversity gain} to establish that the decay rate of the outage event probabilities increases with the prediction duration $T$. This model is then generalized to incorporate the effect of the randomness in the prediction look-ahead time $T$. Remarkably, we also show that, in the cognitive networking scenario, the appropriate use of proactive resource allocation by the primary users improves the diversity gain of the secondary network at no cost in the primary network diversity. We also shed lights on multicasting with predictable demands and show that the proactive multicast networks can achieve a significantly higher diversity gain that scales super-linearly with $T$. Finally, we conclude by a discussion of the new research questions posed under the umbrella of the proposed {\bf proactive (non-causal) wireless networking} framework.
\end{abstract}

\begin{IEEEkeywords}
Scheduling, large deviations, diversity gain, multicast alignment, predictive traffic. 
\end{IEEEkeywords}

\section{Introduction}
\IEEEPARstart{I}deally, wireless networks should be optimized to deliver the best Quality of Service (in terms of reliability, delay, and throughput) to the subscribers with the minimum expenditure in resources. Such resources include transmitted power, transmitter and receiver complexity, and allocated frequency spectrum. Over the last few years, we have experienced an ever increasing demand for wireless spectrum resulting from the adoption of {\em throughput hungry} applications in a variety of civilian, military, and scientific settings.

Since the available spectrum is non renewable and limited, this demand motivates the need for efficient wireless networks that {\em maximally utilize} the spectrum. In this work, we focus our attention on the resource allocation aspect of the problem and propose a new paradigm that offers remarkable spectral gains in a variety of relevant scenarios. More specifically, our proactive resource allocation framework exploits the predictability of our daily usage of wireless devices to smooth out the traffic demand in the network, and hence, reduce the required resources to achieve a certain point on the Quality of Service (QoS) curve. This new approach is motivated by the following observations.

\newcounter{ctr0}
\setcounter{ctr0}{1}
\noindent $\bullet$ While there is a severe shortage in the spectrum, it is well-documented that a significant fraction of the available spectrum
is under-utilized \cite{FCC2002}. In fact, this is the main
motivation for the cognitive networking where secondary
users are allowed to use the spectrum at the off time of the primary so as to maximize the spectral
efficiency~\cite{Mitola}. The cognitive radio
approach, however, is still facing significant technological
hurdles~\cite{Ian},~\cite{Gridlock} and, will offer only a
partial solution to the problem. This limitation is tied to the main reason behind the
under-utilization of the spectrum; namely \emph{the large disparity
between the average and peak traffic demand in the network}.

 Actually, one can see
that the traffic demand at the peak hours is much higher than that
at night. Now, the cognitive radio approach assumes that
the secondary users will be able to utilize the spectrum at the off-peak times, but at those times one may
expect the secondary traffic characteristics to be similar to that
of the primary users (e.g., at night most of the primary and
secondary users are expected to be idle). Thereby, the overarching goal of the proactive resource allocation framework is to avoid this limitation, and hence, achieve a significant reduction in the peak to average demand ratio {\em without relying on out of network users}.

\addtocounter{ctr0}{1}\noindent
$\bullet$ In the traditional approach, wireless networks are
constructed assuming that the subscribers are equipped with {\em
dumb terminals} with very limited computational power. It is obvious
that the new generation of {\em smart devices} enjoy significantly
enhanced capabilities in terms of both { processing power and
available memory}.This observation should inspire a similar paradigm shift in the
design of wireless networks whereby the capabilities of the smart
wireless terminals are leveraged to maximize the utility of the
frequency spectrum, {\em a non-renewable resource that does not
scale according to Moore's law}. Our proactive resource allocation framework is a significant step in this direction.


\addtocounter{ctr0}{1} \noindent
$\bullet$ The introduction of smart phones has resulted in a paradigm shift in the dominant traffic in mobile
cellular networks. While the primary traffic source in traditional
cellular networks was { real-time} voice communication, one can
argue that a significant fraction of the traffic generated by the
smart phones results from non-real-time data requests (e.g., file
downloads). As demonstrated in the following, this feature allows for more degrees of freedom in the design of the scheduling algorithm.

\addtocounter{ctr0}{1}\noindent
$\bullet$ The final piece of our puzzle relates to the observation
that the human usage of the wireless devices is {\em highly predictable}.
This claim is supported by a growing body of evidence that ranges
from the recent launch of {\bf Google Instant} to the interesting
findings on our predictable mobility patterns~\cite{SQBB10}. An example would be the fact that our
preference for a particular news outlet is not expected to change
frequently. So, if the smart phone observes that the user is
downloading CNN, for example, in the morning for a sequence of days
in a row then it can {\bf anticipate} that the user will be
interested in the CNN again the following day. One
 can now see the potential for scheduling early downloads of the
predictable traffic to {\em reduce the peak to average traffic
demand} by maximally exploiting the available spectrum in the network
idle time.

These observations motivate us in this work to develop and analyze proactive resource allocation strategies in the presence of user predictability under various conditions, dynamics, and operational capabilities. In particular, our contributions along with their position in the rest of the paper are:

  \noindent $\bullet$ In Section \ref{sec:sys_model} we state the predictive network model and introduce the outage probability and the associated diversity gain for two main scaling regimes, namely, linear and polynomial scaling.

  \noindent $\bullet$ In Section \ref{sec:Div_gain}, we establish the diversity gain of non-predictive and predictive networks, and analyze the effect of the random look-ahead window size, $T$. Our analysis reveals a minimum improvement factor of (1+T) in the diversity gain for both linear and polynomial scaling regimes.

 \noindent $\bullet$ In Section \ref{sec:Diff_QoS_Users}, we investigate proactive scheduling in a two-QoS network,typical of a cognitive radio network, where we prove the existence of a proactive scheduling policy that can maintain the diversity gain level of the primary predictive network while strictly improving it for the secondary non-predictive network.

  \noindent $\bullet$ In Section \ref{sec:Robustness}, we analyze the robustness of the proactive resource allocation scheme to the prediction errors and determine the optimal choice of the look-ahead window size given an imperfect prediction mechanism to maximize the diversity gain, which is shown to be always strictly greater than that of the non-predictive network.

  \noindent $\bullet$ In Section \ref{sec:Multicast}, we analyze the proactive multicasting with predictable demands, and show the significant gains that can be leveraged through the alignment property offered by predictable multicast traffic. More specifically, we show that the diversity gain of a proactive multicasting network is increasing super-linearly with the window size, $T$, for the linear scaling regime.

  \noindent $\bullet$ In Section \ref{sec:Conc}, we conclude the paper and highlight other important research aspects that can be leveraged through predictive wireless communications.


The proactive wireless network can be viewed as an ordinary network with delay tolerant requests, that is, when the network predicts a request a head of time, the actual arrival time  of that request can be considered as a hard deadline that the scheduler should meet. In \cite{Jindal}, scheduling with deadlines was considered for a single packet under the objective of minimizing the expected energy consumed for transmission. In \cite{Javidi}, the asymptotic performance of the error probability with the signal-to-noise ratio was analyzed when the bits of each codeword must be delivered under hard deadline constraints. In \cite{s_deadlines} and \cite{s_scheduling}, scheduling with deadlines was also addressed from queuing theory point of view under different objectives and multiple priority classes while optimal scheduling policies were investigated for different scenarios.

In this paper, however, we look at the scheduling problem with deadlines from a different perspective, where we define the outage probability as the probability of having a time slot suffering expiring requests, and we analyze the asymptotic decay rate of this outage probability with the system capacity, $C$, when the input traffic is increasing in $C$ either linearly or polynomially, and $C$ is approaching infinity. We call this metric the diversity gain of the network and show that its behavior can significantly be improved by exploiting the predictable behavior of the users. This metric and line of investigation are also motivated by the order-wise difference between the timescale of the prediction window lengths (typically  of the order of tens of minutes, if not hours) and the timescale of application-based deadline-constraints (of the order of milliseconds) considered in other works.     

\section{System Model}
\label{sec:sys_model}
We consider a simplified model of a single server, time-slotted wireless network where the requests arrive at the beginning of each slot. The number of arriving requests at time slot $n$ is an integer-valued random variable denoted by $Q(n)$ that is assumed to be ergodic and Poisson distributed with mean $\lambda$. Each request is assumed to consume one unit of resource and is completely served in a single time slot. Moreover, the wireless network has a \emph{fixed} capacity $C$ per slot. We distinguish two types of wireless resource allocation: \textbf{reactive} and \textbf{proactive}. In reactive  resource allocation, the wireless network responds to user requests right after they are initiated by the user, whereas in proactive resource allocation, the network can track, learn and then \emph{predict} the user requests ahead of time, and hence possesses more flexibility in scheduling these requests before their actual time of arrival. We refer to the networks that perform reactive and proactive resource allocation, respectively, as \textbf{non-predictive} and \textbf{predictive} networks.
 
The predictive wireless network can anticipate the arrival of requests a number of time slots ahead. That is, if $q(n)$, $q\in\{1, \cdots, Q(n)\}$, is the identifier of a request predicted at the beginning of time slot $n$, the predictive network has the advantage of serving this request within the next $T_{q(n)}$ slots. Hence, when request $q(n)$ arrives at a predictive network, it has a deadline at time slot $D_{q(n)}=n+T_{q(n)}$ as shown in Fig. \ref{fig:Model_1}. 
\begin{figure}
	\centering
		\includegraphics[width=0.40\textwidth]{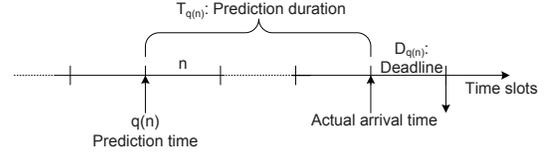}
	\caption{Prediction Model: $q(n)$ is a request predicted at the beginning of time slot $n$, $T_{q(n)}$ is the prediction duration of request $q(n)$, and $D_{q(n)}$ is the actual slot of arrival for request $q(n)$ which can be considered as the deadline for such a request.}
	\label{fig:Model_1}
\end{figure}

Conversely, in a non-predictive network, all arriving requests at the beginning of time slot $n$ must be served in the same time slot $n$, i.e., if $q(n)$ is an unpredicted request, then $T_{q(n)}=0$ and $D_{q(n)}=n$. At this point, we wish to stress the fact that the model operates as the time scale of the application layer at which 1) the current paradigm, i.e., non-predictive networking, treats all the requests as urgent, 2) each slot duration may be in the order of minutes and possibly hours, and 3) the system capacity is fixed since the channel fluctuation dynamics are averaged out at this time scale.

\begin{definition}
\label{def:out_event}
Let $N_0(n)$ be the number of requests in the system at the beginning of time slot $n$ having a deadline of $n$. The outage event $\mathcal{O}$ is then defined as
\begin{equation}
\label{eq:out_event}
\mathcal{O}\triangleq \left\{N_0(n)>C, n\gg 1\right\},
\end{equation}
\end{definition}
The above definition states that an outage occurs at slot $n$ if and only if at least one of the requests in the system expires in this slot. The term $N_0(n)$ coincides on $Q(n)$ when the network is non-predictive, and is different when the network is predictive.

Following the definition of the outage event, we denote the probability that the wireless network runs into an outage at slot $n>0$ by $P(\out)$. Throughout this paper, we will focus on analyzing the asymptotic decay rate of the outage probability with the system capacity $C$ when it approaches infinity. We call this decay rate the {\bf diversity gain} of the network. Moreover, in our analysis we assume that the mean input traffic $\lambda$ scales with the system capacity in two different regimes as follows.
\begin{enumerate}
\item{\emph{Linear Scaling:}}
In this regime, the arrival process $\Ql(n), n>0$ is Poisson with rate $\bar{\lambda}$ that scales with $C$ as $$\bar{\lambda}=\gl C, \quad 0<\gl<1.$$ And with outage probability denoted by $\Pl$, the associated diversity gain is defined as $$\dl(\gl)\triangleq \lim_{C\rightarrow\infty}-\frac{\log \Pl}{C}.$$ 
\item{\emph{Polynomial Scaling:}}
In this regime, the arrival process $\Qs(n), n>0$, is also Poisson with rate $\tilde{\lambda}$, but the rate scales with the system capacity $C$ polynomially as $$\tilde{\lambda}=C^{\gs},\quad 0<\gs<1.$$
And with outage probability $\Ps$, the associated diversity gain is defined as
$$\ds(\gs)\triangleq \lim_{C\rightarrow\infty}-\frac{\log \Ps}{C\log C}.$$
\end{enumerate}
We consider the linear scaling of the input traffic with the system resources because it is commonly used in networking literature where the parameter $\gl$ serves as bandwidth utilization factor. As $\gl$ approaches $1$ the average input traffic approaches the capacity and the system becomes critically stable and more subject to outage events, whereas small values of $\gl$ imply underutilized resources but small probability of outage. The polynomial scaling regime is also introduced because under this type of scaling, the optimal prediction diversity gain can be fully determined through the asymptotic analysis of simple scheduling policies like earliest deadline first. Except for Section \ref{sec:Multicast} and its associated appendices, we consistently use the accents $\bar{.}$ and $\tilde{.}$ to denote linear and polynomial scaling regimes respectively, while symbols without accents are used to denote a general case.

\section{Diversity Gain Analysis}
\label{sec:Div_gain}

\subsection{Diversity Gain of Reactive Networks}
\label{subsec:Reactive_Networks}
The reactive networks are supposed to have no prediction capabilities so they cannot serve any request prior to its time of actual arrival. Hence, the reactive network encounters an outage at time slot $n$ if and only if $Q(n)>C$ as $N_0(n)=Q(n)$.

\begin{theorem}
\label{th:diversity_reac}
Denote the outage probability and the diversity gain of the non-predictive network, respectively, by $P_N(\out)$ and $d_N(\gamma)$, then 
\begin{equation}
\label{eq:div_rea_lin}
\dl_N(\gl)=\gl-1-\log \gl, \quad  0<\gl<1,
\end{equation}
and
\begin{equation}
\label{eq:div_rea_sub}
\ds_N(\gs)=1-\gs, \quad 0<\gs<1.
\end{equation}
\end{theorem}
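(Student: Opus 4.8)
Since the non-predictive network has no look-ahead capability, every request arriving at slot $n$ carries deadline $n$, so $N_0(n)=Q(n)$ and the outage event of Definition \ref{def:out_event} collapses to $\{Q(n)>C\}$. The outage probability is therefore exactly the upper tail of a Poisson random variable whose mean scales with $C$, i.e. $\PlN=P(\Ql>C)$ with mean $\gl C$ in the linear regime and $\PsN=P(\Qs>C)$ with mean $C^{\gs}$ in the polynomial regime. The plan is to pin down the exponential decay rate of this tail by sandwiching it between a Chernoff (exponential Markov) upper bound and a single dominant-term lower bound obtained from Stirling's formula, and then to substitute each scaling law and normalise accordingly.

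For the upper bound I would apply Markov's inequality to $e^{\theta Q}$ using the Poisson moment generating function $E[e^{\theta Q}]=\exp(\lambda(e^{\theta}-1))$, giving $P(Q>C)\le \exp(-\theta C+\lambda(e^{\theta}-1))$ for every $\theta>0$. Minimising the exponent over $\theta$ yields the optimiser $\theta^{*}=\log(C/\lambda)$, which is positive precisely because $\lambda<C$ for large $C$ in both regimes, so the tail is a genuine large-deviation event. Substituting $\lambda=\gl C$ and dividing by $C$ returns the candidate rate $\gl-1-\log\gl$; substituting $\lambda=C^{\gs}$ gives $\theta^{*}=(1-\gs)\log C$ and, after dividing by $C\log C$, the candidate rate $1-\gs$, the residual terms $C^{\gs-1}/\log C$ and $1/\log C$ vanishing as $C\to\infty$ because $\gs<1$.

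For the matching lower bound I would keep a single term of the tail sum, $P(Q>C)\ge P(Q=k)=e^{-\lambda}\lambda^{k}/k!$ with $k=\lceil C\rceil+1$, and expand $\log k!=k\log k-k+O(\log k)$ by Stirling. The leading-order terms reproduce exactly the exponents found above: $-C(\gl-1-\log\gl)$ after normalising by $C$ in the linear regime, and $-(1-\gs)\,C\log C$ after normalising by $C\log C$ in the polynomial regime. Since the upper and lower exponents coincide, the limits defining $\dl_N(\gl)$ and $\ds_N(\gs)$ exist and equal $\gl-1-\log\gl$ and $1-\gs$, respectively.

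The routine part is the Chernoff bound; the delicate part is verifying that the sub-leading corrections in the lower bound are negligible under the two different normalisations. Concretely, one must check that the $-\lambda$ contribution and the $O(\log k)$ Stirling remainder are $o(C)$ in the linear case and $o(C\log C)$ in the polynomial case, and that replacing the representative index $k$ by any value with $k/C\to 1$ leaves the leading exponent unchanged. Both checks are immediate once $\gl<1$ and $\gs<1$ are invoked, so no genuine obstacle remains beyond this bookkeeping.
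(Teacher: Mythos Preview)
Your proposal is correct. For the polynomial regime your argument coincides with the paper's proof essentially line for line: the paper also uses the tightest Chernoff bound for the upper estimate and the single term $P(\Qs=C+1)$ expanded via Stirling for the lower estimate. For the linear regime the paper takes a different shortcut: it writes the Poisson variable of mean $\gl C$ as a sum $\Sl_C=\sum_{i=1}^C \Xl_i$ of $C$ i.i.d.\ Poisson$(\gl)$ summands and invokes Cram\'er's theorem directly, obtaining both the upper and lower exponents in one stroke without a separate Stirling computation. Your route---Chernoff above and a dominant-term/Stirling estimate below---is the elementary dissection of the same large-deviations statement; it has the advantage of being self-contained and of treating the two scaling regimes uniformly, whereas the paper's decomposition into i.i.d.\ pieces is unavailable in the polynomial regime (there the natural block size would be $C^{1-\gs}$ blocks of mean one, which does not fit the classical Cram\'er setup cleanly). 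Either way the rate function $\gl-1-\log\gl$ is the Legendre transform of $\gl(e^r-1)$ evaluated at $1$, so the two computations are equivalent.
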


\begin{proof}
Please refer to Appendix \ref{app:non_pred}.
\end{proof}

It can be noted that as $\gl$ and $\gs$ approach $1$, the corresponding diversity gains $\dl_N(\gl)$ and $\ds_N(\gs)$ approach $0$, as in this case the arrival rate in both regimes matches the system capacity, and hence the system becomes critically stable and the logarithm of the outage probability does not decay with $C$. However, the behavior of the the diversity gain is not the same when both $\gl$ and $\gs$ approach $0$. As $\gl\rightarrow 0$, $\dl_N(\gl)\rightarrow\infty$ because the arrival rate $\bar{\lambda}\rightarrow 0$, thus the resulting outage probability approaches $0$ and the diversity gain approaches $\infty$. Whereas $\gs\rightarrow 0$ implies that $\ds_N(\gs)\rightarrow 1$ which is the case when the input traffic is still positive but does not scale with the system capacity.

\subsection{Diversity Gain of Proactive Networks}
\label{subsec:Pred_Div_Gain}
Unlike reactive networks, the proactive network has the flexibility to schedule the predicted requests in a window of time slots through some scheduling policy. Depending on the scheduling policy employed, the resulting outage probability (and of course the associated diversity gain) varies. By the term {\bf optimal} prediction diversity gain, we mean the maximum diversity gain that can be achieved by the predictive network, which corresponds to the minimum outage probability denoted $P^*_P(\out)$. 

 In our analysis, we consider, for simplicity, the earliest deadline first (EDF) scheduling policy, which has also been called in \cite{EDF} shortest time to extinction (STE). This policy, as proved in \cite{EDF}, maximizes the number of served requests under a per-request deadline constraint. Further studies on this policies can be found in \cite{s_deadlines} and \cite{EDF2}. In the proposed predictive network, the EDF scheduling policy is defined as follows.
\begin{definition}[Earlies Deadline First (EDF)]
\label{def:EDF}
Let the maximum prediction interval for a request be denoted by $T^*$, i.e., $T^*=\sup_{q,n} \left\{T_{q(n)}\right\}$, and let $N_i(n), i=0, 1,\cdots, T^* $ be the number of requests in the system at the beginning of time slot $n$ and having a deadline of $n+i$. Then, at the beginning of slot $n$, the EDF policy sorts $\left\{N_i(n)\right\}_{i=0}^{T^*}$ in an ascending order with respect to $i$, and serves them in that order until either a total of $C$ requests get served or the network completes the service of all existing requests in this slot. 
\end{definition} 

It can be noted that EDF does not necessarily minimize the outage probability as it is only concerned with maximizing the number of served requests while the outage event does not take into account the number of dropped requests. However, EDF has two main characteristics that help in analysis. Namely, it always serves requests as long as there are any, i.e., it is a work conserving policy, and it serves requests in the order of their remaining time to deadline.     

\subsubsection{Deterministic Look-ahead Time}
\label{subsubsec:Det_T}
In this scenario, $T_{q(n)}=T$ for all $q(n), n>0$ for some constant $T\geq 0$. Hence, assuming that the system employs EDF scheduling policy, we have $T^*=T$ and $N_T(n)=Q(n), n>0$. Thus, the EDF policy will reduce to first-come-first-serve (FCFS). The outage probability in this case is denoted by $P_{PD}(\out)$.
\begin{lemma}
\label{lem:Det_T}
Under EDF, let $$\mathcal{U}_D\triangleq\left\{\sum_{i=0}^{T}{Q(n-T-i)}>C(T+1),n\gg 1\right\}$$ and 
$$\mathcal{L}_D\triangleq\left\{Q(n-T)>C(T+1), n\gg 1 \right\}.$$ Then, the events $\mathcal{U}_D$ and $\mathcal{L}_D$ constitute a necessary condition and a sufficient condition on the outage event, respectively. Hence, $P(\mathcal{L}_D)\leq P_{PD}(\out)\leq P(\mathcal{U}_D)$.
\end{lemma}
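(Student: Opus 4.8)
The plan is to establish the two sample-path inclusions $\out \subseteq \mathcal{U}_D$ (necessity) and $\mathcal{L}_D \subseteq \out$ (sufficiency); the probability sandwich $P(\mathcal{L}_D) \le P_{PD}(\out) \le P(\mathcal{U}_D)$ then follows at once from monotonicity of $P$. I would begin with the sufficiency, which is really just a capacity count and needs no details of EDF. The requests carrying a deadline of $n$ are exactly the $Q(n-T)$ requests predicted at slot $n-T$; they are available for service only during the $T+1$ slots $n-T, n-T+1, \dots, n$, over which the total capacity is $C(T+1)$. Hence no policy can clear more than $C(T+1)$ of them by their deadline, so if $Q(n-T) > C(T+1)$ at least one must expire at slot $n$, forcing $N_0(n) > C$ and thus an outage. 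This gives $\mathcal{L}_D \subseteq \out$.

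For the necessity I would work on $\out = \{N_0(n) > C\}$ and track the $Q(n-T)$ deadline-$n$ requests through the $T$ slots $n-T, \dots, n-1$. Writing $s_m$ for the number of them served in slot $m$ and using that they cannot expire before slot $n$, I get $N_0(n) = Q(n-T) - \sum_{m=n-T}^{n-1} s_m$. The crucial observation is that, since $N_0(n) > C > 0$, deadline-$n$ requests are present and not exhausted after each of these slots; as EDF is work-conserving and serves in deadline order, in every such slot the server must (i) run at full capacity $C$ and (ii) never serve a request with deadline exceeding $n$, since such a request is touched only after all deadline-$n$ requests are cleared, which never happens here. Therefore the $CT$ requests served over $n-T, \dots, n-1$ all have deadline at most $n$. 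Moreover any request served in a slot $\ge n-T$ must have deadline $\ge n-T$ (one with an earlier deadline would already have left the system), so all $CT$ served requests, together with the $N_0(n)$ that remain, have deadlines in $\{n-T, \dots, n\}$ and were hence predicted in the window $\{n-2T, \dots, n-T\}$.

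Counting these distinct requests as a subset of the arrivals in that window gives $\sum_{i=0}^{T} Q(n-T-i) \ge CT + N_0(n) > C(T+1)$, which is precisely $\mathcal{U}_D$, establishing $\out \subseteq \mathcal{U}_D$ and completing the argument. The step I expect to be the main obstacle is the careful justification of claim (ii) above: that while the deadline-$n$ requests remain backlogged the EDF server spends its entire budget on requests of deadline at most $n$, for this is exactly what confines the served work to the specific arrival window $\{n-2T, \dots, n-T\}$ rather than to an uncontrolled set of later-deadline requests. Everything else reduces to bookkeeping on the service counts $s_m$ together with the expiry constraint on deadlines.
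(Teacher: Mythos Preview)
Your proposal is correct and follows essentially the same approach as the paper: for sufficiency you use the identical capacity-counting argument, and for necessity you exploit exactly the paper's key observation that while deadline-$n$ requests remain backlogged EDF works at full capacity and touches only requests with deadline at most $n$, so the total work over $[n-T,n]$ is bounded by the arrivals in $[n-2T,n-T]$. The only cosmetic difference is that the paper invokes the reduction of EDF to FCFS (valid because $T$ is deterministic) and phrases the bound as $N(n-T)>C(T+1)$, whereas you keep the argument in raw EDF terms and count served plus remaining requests directly; both routes are the same idea.
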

In the above lemma, we assume that $n\gg 1$ as we are interested in the steady state performance. The event $\mathcal{U}_D$ occurs when the number of arriving requests in consecutive $T+1$ slots is larger than the total capacity of $T+1$ slots, whereas the event $\mathcal{L}_D$ occurs when the number of arriving requests at any slot is larger than the total capacity of $T+1$ slots.
\begin{proof}
Please refer to Appendix \ref{app:Det_T}.
\end{proof}
It is obvious from the proof that the event $\mathcal{U}_D$ is related to the outage event $\out$ through the EDF scheduling policy, whereas the event $\mathcal{L}_D$ is independent of the scheduling policy employed. 

\begin{theorem}
\label{th:diversity_proac_d} 
The optimal prediction diversity gain of a proactive network with deterministic prediction interval $T$, denoted $d_{PD}(\gamma)$, satisfies
\begin{align}
\label{eq:dl_Pd_lower}
&\dl_{PD}(\gl)\geq (1+T)(\gl-1-\log \gl), \quad 0<\gl<1,\\
\label{eq:ds_Pd}
&\ds_{PD}(\gs)=(1+T)(1-\gs), \quad 0<\gs<1.
\end{align}
\end{theorem}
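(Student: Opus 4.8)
The plan is to feed the sandwich bound $P(\mathcal{L}_D)\le P_{PD}(\out)\le P(\mathcal{U}_D)$ of Lemma~\ref{lem:Det_T} into sharp large-deviations estimates for a one-dimensional Poisson tail. First I would note that $Q(n-T),\dots,Q(n-2T)$ are $T+1$ independent Poisson$(\lambda)$ variables, so by additivity the sum in $\mathcal{U}_D$ is itself Poisson with mean $(T+1)\lambda$, while $\mathcal{L}_D$ involves a single Poisson$(\lambda)$ variable. Both bounding events therefore collapse to tail probabilities of the form $P(X>(T+1)C)$ for a Poisson $X$, and the entire theorem reduces to evaluating the exponent of such a tail under the two scalings $\lambda=\gl C$ and $\lambda=C^{\gs}$.

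For the lower bounds on the diversity gain (the inequality \eqref{eq:dl_Pd_lower} and the ``$\ge$'' half of \eqref{eq:ds_Pd}), I would apply the Chernoff bound to $P(\mathcal{U}_D)$: for $X\sim$ Poisson$(\mu)$ and $k>\mu$ one has $P(X>k)\le\exp\!\bigl(-[\,k\log(k/\mu)-k+\mu\,]\bigr)$. Substituting $\mu=(T+1)\gl C$ and $k=(T+1)C$ produces the exponent $(T+1)C(\gl-1-\log\gl)$ exactly, so dividing by $C$ and letting $C\to\infty$ gives $\dl_{PD}(\gl)\ge(T+1)(\gl-1-\log\gl)$. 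The identical computation with $\mu=(T+1)C^{\gs}$ yields leading exponent $(T+1)(1-\gs)\,C\log C$, hence $\ds_{PD}(\gs)\ge(T+1)(1-\gs)$.

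The reverse inequality in \eqref{eq:ds_Pd}, which upgrades the polynomial result to equality, comes from the sufficient-condition side. I would lower bound $P(\mathcal{L}_D)\ge P\!\bigl(X=\lceil(T+1)C\rceil\bigr)$ for $X\sim$ Poisson$(C^{\gs})$ and expand the pmf via Stirling: $\log P(X=k)=-\mu+k\log\mu-\log k!\sim -k\log(k/\mu)$. With $k=(T+1)C$ and $\mu=C^{\gs}$ this is $-(T+1)(1-\gs)C\log C$ to leading order, so $\ds_{PD}(\gs)\le(T+1)(1-\gs)$, and combining with the previous paragraph gives the claimed equality.

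The delicate point---and the reason the linear regime yields only a bound---is seeing why the two estimates coincide at the $C\log C$ scale but not at the $C$ scale. In both bounding events the target is the common value $k=(T+1)C$, so the dominant term is $k\log(k/\mu)$. Matched against $\mu_{\mathcal{U}}=(T+1)C^{\gs}$ the prefactor cancels exactly, since $\log(k/\mu_{\mathcal{U}})=(1-\gs)\log C$; matched against $\mu_{\mathcal{L}}=C^{\gs}$ it leaves a residual $(T+1)C\log(T+1)$, which is $O(C)$ and hence negligible relative to the leading $C\log C$. In the linear regime, by contrast, $k$ and $\mu$ are of the same order, $\log(k/\mu)$ is a constant, and the factor $(T+1)$ in the mean shifts the exponent at its genuine $O(C)$ leading order, so the gap does not wash out. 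The main bookkeeping obstacle is thus to verify carefully that all subleading contributions---the $-\mu$, $+k$, $\log(T+1)$, and Stirling-remainder terms---are truly $o(C\log C)$, which follows from elementary asymptotics of $\log k!$ and of $C^{\gs}$ against $C\log C$.
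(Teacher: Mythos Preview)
Your proposal is correct and follows essentially the same route as the paper: both arguments feed Lemma~\ref{lem:Det_T}'s sandwich into a Chernoff/Cram\'er upper bound on $P(\mathcal{U}_D)$ for the diversity lower bounds, and into a single-term pmf estimate with Stirling on $P(\mathcal{L}_D)$ for the matching polynomial upper bound. The only cosmetic difference is that the paper phrases the linear-scaling step via Cram\'er's theorem after decomposing the Poisson sum into $C(T+1)$ i.i.d.\ Poisson$(\gl)$ summands, whereas you apply the optimized Chernoff bound directly to the aggregate Poisson variable; since only the upper-bound direction of Cram\'er is needed here, the two computations are identical.
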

The above result shows that proactive resource allocation offers a multiplicative diversity gain of at least $T+1$ for the linear scaling regime and exactly $T+1$ for the polynomial scaling regime.
\begin{proof}
Please refer to Appendix \ref{app:div_pred_d}.
\end{proof}
Note that, an upper bound on $\dl_{PD}(\gl)$ can be established using $P(\overline{\mathcal{L}}_D)\leq\PlPd$ and following the same approach of deriving the lower bound in the theorem. This upper bound will be given by
\begin{equation}
\label{eq:dl_Pd_upper}
\dl_{PD}(\gl)\leq (T+1)\left(\frac{\gl}{T+1}-1+\log\left(\frac{T+1}{\gl}\right)\right).
\end{equation}
Comparing the right hand sides of \eqref{eq:dl_Pd_lower}, and \eqref{eq:dl_Pd_upper} it can be seen that they match only when $T=0$, and in this case, they also match the non-predictive diversity gain obtained in \eqref{eq:dl_N}. Otherwise, for positive values of $T$, the two bounds differ.

\subsubsection{Random Look-ahead Time}
We consider a more general scenario where $T_{q(n)}$, $0\leq q(n)\leq Q(n)$, $n>0$ is a sequence of IID non-negative integer-valued random variables defined over a finite support $\{\Tmin,\cdots \Tmax\}$, where $0\leq\Tmin\leq\Tmax<\infty$. The random variable $T_{q(n)}$ has the following probability mass function (PMF),
\begin{equation}
\label{eq:PMFofT}
P\left(T_{q(n)}=k\right)\triangleq\left\{
\begin{array}{l l}
p_k,\quad \quad \Tmin\leq k \leq\Tmax,\\
0,\quad \quad \text{otherwise},
\end{array}
\right.
\end{equation}
where $\sum_{k=\Tmin}^{\Tmax}p_k=1$ and $p_k\geq 0, \quad \forall k$, the cumulative distribution function (CDF) of $T_{q(n)}$ can be written as
\begin{equation}
\label{eq:CDFofT}
P(T_{q(n)}\leq k)=F_k=\left\{
\begin{array}{l l}
1,& k> \Tmax,\\
\sum_{i=\Tmin}^k p_i,&\Tmin\leq k \leq\Tmax,\\
0,& k<\Tmin.
\end{array}
\right.
\end{equation} 
Hence, the overall process $Q(n)$ can be decomposed to a superposition of independent Poisson processes, i.e., $$Q(n)=\sum_{k=\Tmin}^{\Tmax}Q_{k}(n)$$ where $Q_k(n)$, $n>0$ is the process of requests predicted $k$ slots ahead, $k=\Tmin, \cdots, \Tmax$. The arrival rate of $Q_k(n)$ is $p_k\lambda$.

In this scenario, we denote the outage probability under EDF by $P_{PR}(\out)$ and the optimal diversity gain by $d_{PR}(\gamma)$. Unlike the case of deterministic look-ahead time, EDF here does not reduce to FCFS because the arriving requests at the subsequent slots can have earlier deadlines than some of those who have already arrived. Upper and lower bounds on $P_{PR}(\out)$ are introduced in the following lemma. 
\begin{lemma}
\label{lem:Random_T}
Let $$\I\triangleq \left\{ \sum_{j=0}^{\Tmax}\sum_{i=\Tmin}^{\Tmax}Q_i(n-i-j)>C(\Tmax+1), n\gg 1\right\},$$
    $$\J\triangleq \bigcup_{k=\Tmin}^{\Tmax-1}\left\{\sum_{j=\Tmin}^{k}\sum_{i=\Tmin}^{j}Q_i(n-j)>C(k+1), n\gg 1\right\},$$
$$\Ur\triangleq\I\bigcup\J$$
and $$\Lr\triangleq\bigcup_{k=\Tmin}^{\Tmax}\left\{\sum_{j=\Tmin}^{k}Q_j(n-j)>C(k+1), n\gg 1\right\},$$ then, the events $\Ur$ and $\Lr$ constitute necessary and sufficient conditions on the outage event, respectively. Hence $P(\Lr)\leq P_{PR}(\out)\leq P(\Ur)$.
\end{lemma}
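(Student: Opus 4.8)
The plan is to prove the two set inclusions $\Lr\subseteq\out$ and $\out\subseteq\Ur$ separately, since the lower bound (sufficiency of $\Lr$) is independent of the scheduling policy while the upper bound (necessity of $\Ur$) is where the EDF structure must be exploited. Throughout I would use the decomposition $Q(n)=\sum_{k}Q_k(n)$ and record two elementary bookkeeping facts: the summand $Q_j(n-j)$ consists of requests that arrive at slot $n-j$ with look-ahead exactly $j$, hence carry deadline exactly $n$; and a request arriving at slot $s$ has deadline at most $n$ only if $s\le n-\Tmin$, while every request arriving at slot $s\le n-\Tmax$ automatically has deadline at most $n$.

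For the sufficient condition I would argue directly from work-conservation. Fix $k$ and consider the requests counted by $\sum_{j=\Tmin}^{k}Q_j(n-j)$; by the first fact these all have deadline $n$, and the earliest of them arrives at slot $n-k$, so before slot $n$ they can only be served during the $k$ slots $n-k,\dots,n-1$. Since the server clears at most $C$ requests per slot, at most $kC$ of them are served prior to slot $n$, whence $N_0(n)\ge\sum_{j=\Tmin}^{k}Q_j(n-j)-kC$. If this sum exceeds $C(k+1)$ for some $k$ we get $N_0(n)>C$, and taking the union over $k$ yields $\Lr\subseteq\out$. This step is routine counting and should present no difficulty.

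For the necessary condition I would argue the contrapositive by a busy-period construction tailored to deadline $n$. Assuming an outage at slot $n$, let $m$ be the first slot of the maximal interval ending at $n$ throughout which EDF keeps the server \emph{saturated} by requests of deadline at most $n$, i.e.\ completes a full $C$ such requests in each of the slots $m,\dots,n-1$; work-conservation together with the earliest-deadline priority of EDF is what guarantees the existence of this interval and that no deadline-$>n$ request is served while deadline-$\le n$ work is backlogged. Writing $k=n-m$, the $kC$ requests processed in $[m,n-1]$ together with the $N_0(n)>C$ requests still present at the start of slot $n$ are distinct, all carry deadline at most $n$, and (by maximality of the interval) must have arrived during $[m,n]$; by the second bookkeeping fact these arrivals are confined to the slots $[n-k,n-\Tmin]$, and the same fact forces $k\ge\Tmin$ since otherwise no deadline-$\le n$ request could arrive inside the interval.

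The argument then splits on the length of this busy period, and this case analysis is the step I expect to be the main obstacle. When $\Tmin\le k\le\Tmax-1$, counting the $kC$ processed requests and the more than $C$ expiring ones shows that the deadline-$\le n$ arrivals in $[n-k,n-\Tmin]$ exceed $C(k+1)$, which is precisely the $k$-th event of $\J$. When $k\ge\Tmax$, I would instead restrict attention to the sub-window $[n-\Tmax,n]$, over which the server is still saturated for $\Tmax$ slots; every request served in $[n-\Tmax,n-1]$ has deadline in $[n-\Tmax,n]$, as does each of the $N_0(n)>C$ expiring requests, so the total number of requests with deadline in $[n-\Tmax,n]$ exceeds $C(\Tmax+1)$, which is the event $\I$. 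Since $k\ge\Tmin$ always, the two cases are exhaustive and $\out\subseteq\I\cup\J=\Ur$. The delicate points demanding care are the precise definition of the saturated busy period, the EDF-based justification that the server never serves a deadline-$>n$ request while deadline-$\le n$ work remains, and the bookkeeping that matches the processed-plus-expiring requests to the exact double sums appearing in $\I$ and $\J$.
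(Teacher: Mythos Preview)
Your proposal is correct and follows essentially the same approach as the paper. The paper argues the necessity of $\Ur$ by showing that if $\I$ fails then there must be a slot $n-l^*$ (with $\Tmin<l^*\le\Tmax$) at which all deadline-$\le n$ work is cleared, and counting arrivals thereafter yields the $\J$ event; your busy-period construction identifies the same slot (your $m=n-l^*+1$) and splits on the busy-period length rather than on whether $\I$ holds, but the underlying counting is identical.
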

Here also, we assume the system is at steady state.
\begin{proof}
Please refer to Appendix \ref{app:Random_T}.
\end{proof}

\begin{theorem}
\label{th:Random_T}
Let 
\begin{multline*}
\vl_*\triangleq\min_{\Tmin\leq k\leq\Tmax-1}\left\{(k+1)\left[\log\left(\frac{k+1}{\gl \sum_{i=0}^{k-\Tmin}F_{k-i}}\right)-1\right]\right.\\  \left.+\gl \sum_{i=0}^{k-\Tmin}F_{k-i}\right\},
\end{multline*}
the optimal diversity gain of a proactive wireless network with random prediction interval, $d_{PR}(\gamma)$, satisfies
\begin{equation}
\label{eq:dl_Pr_Lower}
\dl_{PR}(\gl)\geq \min\{(\Tmax+1)(\gl-1-\log\gl),\vl_*\}, \quad 0<\gl<1
\end{equation}
for the linear scaling regime, and
\begin{equation}
\label{eq:ds_Pr}
\ds_{PR}(\gs)=(\Tmin+1)(1-\gs),\quad 0<\gs<1,
\end{equation}
for the polynomial scaling regime.
\end{theorem}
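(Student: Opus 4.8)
The plan is to feed the sandwich $P(\Lr)\le\PlPr\le P(\Ur)$ (and its analogue for $\widetilde{\out}$) from Lemma~\ref{lem:Random_T} into the Poisson large-deviation estimate, exactly as in the deterministic case. The first observation I would record is that every sum appearing in $\I$, $\J$ and $\Lr$ ranges over pairs (sub-stream index, time slot) that are all distinct, so by the independence of the decomposed streams $Q_i(\cdot)$ across both the index $i$ and the slot, each such sum is itself Poisson with rate equal to the sum of the constituent rates $p_i\lambda$. Concretely, the sum in $\I$ has rate $(\Tmax+1)\lambda$ (there are $\Tmax+1$ values of the outer index and $\sum_{i=\Tmin}^{\Tmax}p_i=1$), the $k$-th sum in $\J$ has rate $\lambda\sum_{j=\Tmin}^{k}F_j$, and the $k$-th sum in $\Lr$ has rate $\lambda F_k$. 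For a Poisson variable of mean $\mu$ and threshold $a>\mu$ the tail decays with exponent $a\log(a/\mu)-a+\mu$; I would invoke this together with the principle of the largest term, namely that the decay rate of a finite union (or sum) of such tails equals the minimum of the individual decay rates.

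For the linear regime I only need the upper bound $\PlPr\le P(\Ur)\le P(\I)+P(\J)$. Substituting $\lambda=\gl C$, $\mu=(\Tmax+1)\gl C$, $a=(\Tmax+1)C$ into the Poisson exponent gives, after dividing by $C$, the rate $(\Tmax+1)(\gl-1-\log\gl)$ for $\I$. For the $k$-th event of $\J$ I would use $\mu=\gl C\sum_{j=\Tmin}^{k}F_j$ and $a=(k+1)C$; the resulting per-$C$ exponent is exactly the bracketed expression defining $\vl_*$, once one notes the reindexing $j=k-i$, which turns $\sum_{j=\Tmin}^{k}F_j$ into $\sum_{i=0}^{k-\Tmin}F_{k-i}$. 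Taking the minimum over $\Tmin\le k\le\Tmax-1$ converts the $\J$-contribution into $\vl_*$, and combining with the $\I$-contribution yields $\dl_{PR}(\gl)\ge\min\{(\Tmax+1)(\gl-1-\log\gl),\vl_*\}$. I would remark that, just as in Theorem~\ref{th:diversity_proac_d}, the matching lower bound obtained from $P(\Lr)$ does not coincide with this, because under linear scaling the $O(C)$ terms of the two estimates differ; hence only an inequality is claimed.

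For the polynomial regime the two bounds collapse to the same value. Substituting $\lambda=C^{\gs}$, every Poisson tail of rate $\propto C^{\gs}$ at a threshold $\propto C$ has exponent dominated by its $(k+1)(1-\gs)\,C\log C$ term, while the $O(C^{\gs})$ and $O(C)$ corrections become subdominant after dividing by $C\log C$. Thus the $k$-th event of $\Lr$ contributes decay rate $(k+1)(1-\gs)$, and since $\Lr\supseteq\{Q_{\Tmin}(n-\Tmin)>C(\Tmin+1)\}$ the smallest such rate, attained at $k=\Tmin$, gives $\ds_{PR}(\gs)\le(\Tmin+1)(1-\gs)$ through $P(\Lr)\le\PsPr$. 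For the reverse direction, $\I$ and the $k$-th event of $\J$ contribute rates $(\Tmax+1)(1-\gs)$ and $(k+1)(1-\gs)$ respectively, all of which are at least $(\Tmin+1)(1-\gs)$; hence $P(\Ur)$ decays at least this fast, and $\PsPr\le P(\Ur)$ gives $\ds_{PR}(\gs)\ge(\Tmin+1)(1-\gs)$. The two inequalities pin down $\ds_{PR}(\gs)=(\Tmin+1)(1-\gs)$.

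The main obstacle I anticipate is the bookkeeping of the aggregate Poisson rates: verifying that the index/slot pairs in $\I$ never collide so that additivity of rates genuinely holds, and then matching the $\J$-exponent term-by-term with the definition of $\vl_*$ via the reindexing. I would also need $p_{\Tmin}>0$, so that the $k=\Tmin$ term of $\Lr$ is a genuine large-deviation event with finite rate $F_{\Tmin}\lambda=p_{\Tmin}\lambda$; this holds because $\Tmin$ is the left endpoint of the support.
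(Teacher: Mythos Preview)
Your proposal is correct and follows essentially the same route as the paper: both arguments feed the sandwich $P(\Lr)\le P_{PR}(\out)\le P(\Ur)$ from Lemma~\ref{lem:Random_T} into Chernoff/Cram\'er estimates for the Poisson tails, identify the aggregate rates $(\Tmax+1)\lambda$, $\lambda\sum_{j=\Tmin}^{k}F_j$, and $\lambda F_k$ for $\I$, the $k$-th term of $\J$, and the $k$-th term of $\Lr$ respectively, and then invoke the principle of the largest term to read off the minimum exponent. Your bookkeeping remark that the (sub-stream, slot) pairs never collide and your reindexing $j\mapsto k-i$ to match the definition of $\vl_*$ are exactly the points the paper handles implicitly, and your observation that $p_{\Tmin}>0$ is needed for the $\Lr$ lower bound in the polynomial regime is a useful addition.
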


\begin{proof}
Please refer to Appendix \ref{app:R_T}.
\end{proof}

Theorem \ref{th:Random_T} determines a lower bound on the optimal prediction diversity gain of the linear scaling regime and fully characterizes the optimal prediction diversity. It is obvious that the lower bound on $\dl_{PR}(\gl)$ depends on the distribution of $T_{q(n)}$, however, this lower bound is always larger than $\dl_{N}(\gl)$ as long as $\Tmax>0$ and $p_{\Tmax>0}$. This can be viewed by considering the term $(\Tmax+1)(\gl-1-\log \gl)$ which is strictly larger than $\dl_{N}(\gl)$ and $\vl_*$ where for any $k$ such that $\Tmin\leq k \leq \Tmax-1$,
\begin{equation*}
\begin{split}
& (k+1)\Biggl[\gl\left(\frac{\sum_{i=0}^{k-\Tmin}{F_{k-i}}}{k+1}\right) -\log\frac{\sum_{i=0}^{k-\Tmin}{F_{k-i}}}{k+1}  -1-\log\gl\Biggr ]\\ & \overset{(a)}{>} (k+1)(\gl-1-\log\gl) \\ & \overset{(b)}{\geq} \dl_{N}(\gl). 
\end{split}
\end{equation*}
Inequality (a) follows as $$0<\frac{\sum_{i=0}^{k-\Tmin}{F_{k-i}}}{k+1}<1$$ and $\gl x-\log x >\gl, \quad \forall x\in(0,1)$, while inequality (b) follows because $k\geq\Tmin\geq 0$. Hence, the proactive network in linear scaling regime with $\Tmax>0$ and $p_{\Tmax}>0$ always improves the diversity gain. 

For the polynomial scaling regime, Theorem \ref{th:Random_T} shows that the prediction diversity gain of a proactive wireless network with random look-ahead interval is dominated by arrivals with $T_{q(n)}=\Tmin$. Hence, the main drawback of this is that, if $\Tmin=0$ the prediction diversity becomes tantamount to that of the non-predictive scenario. However, even though $\Tmin=0$, the outage probability of the predictive network is evaluated numerically in Section \ref{sec:num_res} and shown to outperform the non-predictive case.

\section{Heterogenous QoS Requirements}
\label{sec:Diff_QoS_Users}
We consider two types of users with different QoS requirements, the first is a primary user who has the priority to access the network, whereas the second is a secondary user that is allowed to access the primary network resources opportunistically. That is, it can use the primary resources at any time slot only when there is sufficient capacity to serve all primary requests at that slot with the remaining capacity assigned to the secondary user. This type of opportunistic access to the primary network adds more utilization to its resources while it may get paid by the secondary user for the offered service.


 The primary and secondary requests arrive to the network following two Poisson processes $Q^p(n), n>0$ and $Q^s(n), n>0$ with arrival rates $\lambda^p$ and $\lambda^s$ respectively. We also assume that the network is stable and dominated by primary arrivals as follows.

\begin{assumptions}
\label{assump:1}
\begin{align}
\label{eq:a1}
&\lambda^s+\lambda^p<C,\\
\label{eq:a2}
&\lambda^s<\lambda^p.
\end{align}
\end{assumptions}
The network is reactive to the secondary requests and hence each secondary request will expire if it is not served in the same slot of arrival. In the following subsection, we analyze the performance of the secondary outage probability and diversity gain when the primary network is also reactive, then we proceed to the proactive case.
\subsection{Non-predictive Primary Network}
At the beginning of time slot $n$ the network has $Q^p(n)+Q^s(n)$ arrivals that should be served within the same slot, i.e., all have a deadline of $n$. The network typically serves the primary requests before the secondary. Hence, the diversity gain of the primary network in this scheme, denoted $d_N^p(\gamma^p)$, follows the same expressions obtained in Theorem  \ref{th:diversity_reac}, i.e.,
\begin{align}
& \dl_N^p(\gl^p)=\gl^p-1-\log \gl^p,\quad 0<\gl^p<1\\
& \ds_N^p(\gs^p)=1-\gs^p,\quad 0<\gs^p<1,
\end{align}
where $\bar{\lambda}^p=\gl^p C$ and $\tilde{\lambda}^p=C^{\gs^p}$.

The secondary user, therefore, suffers an outage at time slot $n$ if and only if $$Q^p(n)+Q^s(n)>C, \quad Q^s(n)>0.$$
\begin{theorem}
\label{th:Div_sec_non_pred}
The diversity gain of the secondary network, $d_N^s(\gamma^p,\gamma^s)$, when the primary network is non-predictive, satisfies
\begin{align}
\label{eq:dl_N^s_upper}
&\dl_N^s(\gl^p,\gl^s)\leq \gl^p-1-\log \gl^p,\\
\label{eq:dl_N^s_lower}
&\dl_N^s(\gl^p,\gl^s)\geq \gl^p+\gl^s-1-\log(\gl^p+\gl^s),\\                                                             
\label{eq:ds_N^s}
&\ds_N^s(\gs^p,\gs^s)=1-\gs^p,
\end{align}
where $\bar{\lambda}^s=\gl^s C$, $\tilde{\lambda}^s=C^{\gs^s}$ and $0<\gl^s<\gl^p<1$, $\gl^p+\gl^s<1$ and $0<\gs^s<\gs^p<1$.
\end{theorem}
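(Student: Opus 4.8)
The plan is to sandwich the secondary outage probability between the probabilities of two pure Poisson tail events whose decay rates are already supplied by Theorem~\ref{th:diversity_reac}, and then read off the three claimed relations. Recall that the secondary outage event is $\out^s=\{Q^p(n)+Q^s(n)>C,\ Q^s(n)>0\}$, with $Q^p(n)$ and $Q^s(n)$ independent Poisson variables of means $\lambda^p$ and $\lambda^s$. Since the diversity gain is $-\log P/C$ (resp.\ $-\log P/(C\log C)$), a lower bound on the outage probability yields an upper bound on the diversity gain, and vice versa.

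First I would establish the upper bound \eqref{eq:dl_N^s_upper} and the ``$\le$'' half of \eqref{eq:ds_N^s}. Using the inclusion $\{Q^p(n)>C\}\cap\{Q^s(n)>0\}\subseteq\out^s$ together with independence gives $P_N(\out^s)\ge P(Q^p(n)>C)\,P(Q^s(n)>0)$. The factor $P(Q^s(n)>0)=1-e^{-\lambda^s}$ tends to $1$ as $C\to\infty$ (because $\lambda^s=\gl^s C\to\infty$, resp.\ $C^{\gs^s}\to\infty$), so $-\tfrac1C\log P(Q^s(n)>0)\to 0$ and this factor contributes nothing to the diversity gain. The remaining factor $P(Q^p(n)>C)$ is exactly a reactive outage for a Poisson source of rate $\gl^p C$ (resp.\ $C^{\gs^p}$), whose decay rate is given by Theorem~\ref{th:diversity_reac}. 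This yields $\dl_N^s\le\gl^p-1-\log\gl^p$ and $\ds_N^s\le 1-\gs^p$.

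Next I would establish the lower bound \eqref{eq:dl_N^s_lower} and the ``$\ge$'' half of \eqref{eq:ds_N^s}. Dropping the constraint $Q^s(n)>0$ gives $\out^s\subseteq\{Q^p(n)+Q^s(n)>C\}$, hence $P_N(\out^s)\le P(Q^p(n)+Q^s(n)>C)$. The superposition $Q^p(n)+Q^s(n)$ is Poisson with rate $\lambda^p+\lambda^s$. In the linear regime this rate is $(\gl^p+\gl^s)C$ with $\gl^p+\gl^s<1$ by Assumption~\ref{assump:1}, so applying Theorem~\ref{th:diversity_reac} with utilization $\gl^p+\gl^s$ gives $\dl_N^s\ge(\gl^p+\gl^s)-1-\log(\gl^p+\gl^s)$. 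Combining the two polynomial halves then gives $\ds_N^s=1-\gs^p$.

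The main technical care is needed in the polynomial regime, where the superposed rate $C^{\gs^p}+C^{\gs^s}$ is not of the clean form $C^{\gamma}$ required to invoke Theorem~\ref{th:diversity_reac} verbatim. Because $\gs^s<\gs^p$, I would sandwich it as $C^{\gs^p}\le C^{\gs^p}+C^{\gs^s}\le C^{\gs^p+\epsilon}$ for every $\epsilon>0$ and all $C$ large enough, use monotonicity of the Poisson tail $P(\mathrm{Poisson}(\mu)>C)$ in $\mu$, apply Theorem~\ref{th:diversity_reac} at exponents $\gs^p$ and $\gs^p+\epsilon$, and let $\epsilon\to 0$; this pins the decay rate of $P(Q^p(n)+Q^s(n)>C)$ at $1-\gs^p$ and closes the equality. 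The only other point to verify is that $-\log(1-e^{-\lambda^s})$ decays faster than $C$ (indeed like $e^{-\lambda^s}$), which makes the $Q^s(n)>0$ factor asymptotically negligible in both regimes.
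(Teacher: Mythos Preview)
Your proof is correct and follows essentially the same route as the paper: the same probability sandwich $P(Q^p(n)>C)\,P(Q^s(n)>0)\le P_N(\out^s)\le P(Q^p(n)+Q^s(n)>C)$, followed by the reactive diversity results of Theorem~\ref{th:diversity_reac}. The only cosmetic difference is in the polynomial ``$\ge$'' half, where the paper applies the Chernoff bound directly to the combined rate $C^{\gs^p}+C^{\gs^s}$ (obtaining $\ds_N^s\ge 1-\max\{\gs^p,\gs^s\}=1-\gs^p$ in one line) instead of your $\epsilon$-sandwich via monotonicity of the Poisson tail; both arguments are valid.
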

\begin{proof}
Please refer to Appendix \ref{app:Div_sec_non_pred}.
\end{proof}

Theorem \ref{th:Div_sec_non_pred} reveals that the diversity gain of the secondary user, under non-predictive network, is at most equal to the diversity gain of the primary network in the linear scaling regime and is exactly equal to it in the polynomial scaling regime although the secondary user has strictly less traffic rate than the primary. It can also be noted that $\ds_N^s(\gs^p,\gs^s)$ is independent of $\gs^s$, that is, regardless of how small $\gs^s$ is, the diversity gain of the secondary user is kept fixed at $\ds_N^p(\gs^p)$ as long as $\gs^s>0$. The lower bound in \eqref{eq:dl_N^s_lower}, although does not match the upper bound in \eqref{eq:dl_N^s_upper}, it is always positive and approaches the upper bound when $\gl^s$ is much smaller than $\gl^p$ as shown in Fig. \ref{fig:Upper_lower_diversity_secondary_gs_pO2}.
\begin{figure}
	\centering
		\includegraphics[width=0.45\textwidth]{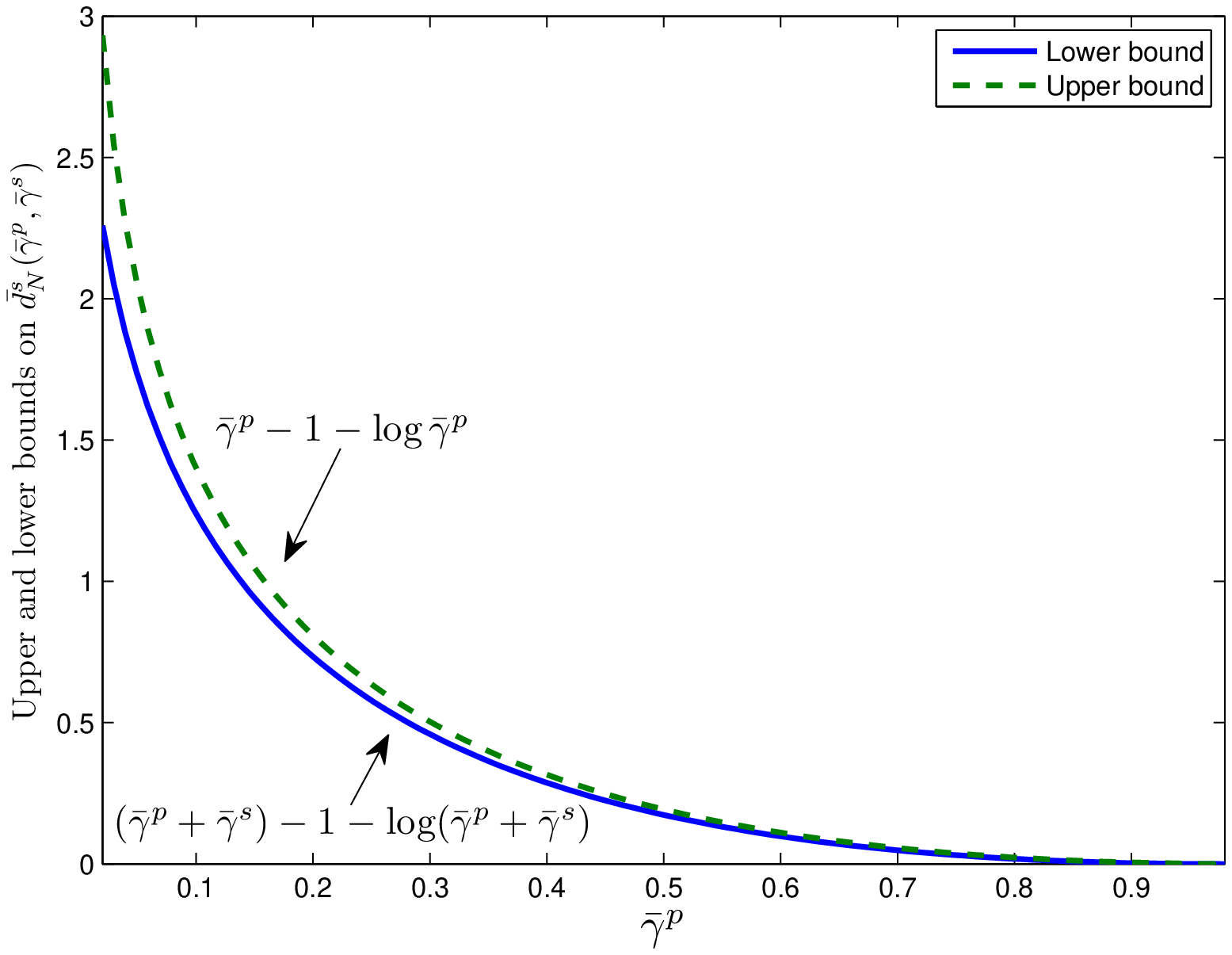}
	\caption{The gap between the upper and lower bounds on $\dl_N^s(\gl^p,\gl^s)$ declines when $\gl^s\ll\gl^p$. In this figure, $\gl^s=0.02$ and $\gl^p\in (\gl^s,1-\gl^s)$. }
	\label{fig:Upper_lower_diversity_secondary_gs_pO2}
\end{figure}

\subsection{Predictive Primary Network}
When the primary network is predictive, the arriving primary requests $Q^p(n)$, $n>0$ are assumed to be predictable with a deterministic look-ahead time $T$. The secondary requests, $Q^s(n)$, conversely, are all urgent.

Let $N_i^p(n)$ be the number of all primary requests awaiting in the network at the beginning of time slot $n$ with deadline $n+i$, $i=0,\cdots, T$ and let $N^p(n)=\sum_{i=0}^{T}N_i^p(n)$. 
\subsubsection{Selfish Primary Scheduling}
By a  \textbf{selfish} primary behavior we mean the primary network has a dedicated capacity $C$ per slot and no secondary request is served at the beginning of time slot $n$ unless all primary requests $N^p(n)$ are served at this slot and $C-N^p(n)>0$. The optimal prediction diversity gain and the outage probability of the primary network in this case are not affected by the presence of the secondary user. On the other hand, the selfish behavior of the primary predictive network cannot improve the outage probability of the secondary user. To show this, let $P_P(\out^s)$ denote the outage probability of the secondary user when the primary network is predictive. Then
\begin{align}
P_P(\out^s)&=P(N^p(n)+Q^s(n)>C, Q^s(n)>0)\nonumber \\
\label{eq:ineq_outage}
           &\geq P(Q^p(n)+Q^s(n)>C, Q^s(n)>0)\\
\label{eq:ineq_outage_2}           
           &=P_N(\out^s),
\end{align}
where inequality \eqref{eq:ineq_outage} follows since $N_T^p(n)=Q^p(n)$ and $N^p(n)\geq N_T^p(n)$. Here we note that the above result holds for any scheduling policy that serves all primary requests in the network at any slot before the secondary requests.

\subsubsection{Cooperative Primary User}

The predictive primary network, however, can act in a \textbf{less-selfish} manner without losing performance and, at the same time, enhance the diversity gain of the secondary user. This can be done by limiting the per-slot capacity dedicated to serve the primary requests in the system. One possible way to do so is to decide the capacity for the primary network dynamically at the beginning of each slot. We suggest the following less-selfish policy.
\begin{definition}
\label{def:D_capacity}
The number of primary requests to be served at slot $n$ is denoted by $C^p(n)$ and given by
\begin{equation}
\label{eq:dynamic_capacity}
C^p(n)\triangleq\min\left\{C, N_0^p(n)+f\times\sum_{i=1}^T N_i^p(n)\right\},
\end{equation}
where $0\leq f\leq 1$, and the primary requests are served according to EDF.
\end{definition}
This scheme determines the maximum number of primary requests that the primary network can serve at the beginning of each slot depending on the number of primary requests with deadline at this slot as well as some factor of the number of other primary requests in the system. Hence, at the beginning of time slot $n$, arriving secondary requests will have the chance to get service if $C-C^p(n)>0$, while the primary network has the capability to schedule the $C^p(n)$ requests according to a service policy that minimizes the primary outage probability (we address the EDF scheduling, however, for simplicity). In the above scheme, if $f$ is chosen to be $1$, the primary network will act selfishly, whereas $f=0$ implies a performance of primary \textbf{non-predictive} network. In the following theorem we show that for some range of $f$, the diversity gain expressions for the primary network satisfy the same bounds of the selfish scenario.

\begin{theorem}
\label{th:dynamic}
Under the dynamic capacity assignment policy in Def. \ref{def:D_capacity} with $f\in [0.5,1]$, the diversity gain of the primary network satisfies
\begin{align}
\dl_P^p(\gl^p)&\geq (T+1)(\gl^p-1-\log\gl^p),  \quad 0<\gl^p<1,\\
\ds_P^s(\gs^p)&=(T+1)(1-\gs^p), \quad 0<\gs^p<1.
\end{align}
\end{theorem}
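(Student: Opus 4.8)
The plan is to show that capping the per-slot primary service at $C^p(n)$ with $f\in[0.5,1]$ costs nothing at the exponential scale, so that the primary diversity gain matches the deterministic-look-ahead optimum of Theorem \ref{th:diversity_proac_d}. First I would identify the primary outage event under the dynamic policy. Since serving is EDF and category $0$ (deadline $n$) is served first, outage at slot $n$ means $N_0^p(n)>C^p(n)$; but if $C^p(n)=N_0^p(n)+f\sum_{i\geq1}N_i^p(n)\geq N_0^p(n)$ this is impossible, so outage forces $C^p(n)=C$ and is \emph{exactly} the event $\{N_0^p(n)>C\}$, just as in the selfish case. Thus everything reduces to controlling how the deadline-$n$ requests -- all of which arrived in slot $n-T$ as $Q^p(n-T)$ and enjoy $T+1$ service opportunities in slots $n-T,\dots,n$ -- can accumulate past $C$.

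For the upper bound on $d_P^p$ I would reuse the policy-independent sufficient condition underlying $\mathcal{L}_D$ in Lemma \ref{lem:Det_T}: if $Q^p(n-T)>C(T+1)$ then no policy can clear these requests by their deadline, so outage occurs. Its Poisson large-deviation rate is computed from a single slot, and in the polynomial regime the $\log(T+1)$ term is lower order, giving decay $(T+1)(1-\gs^p)$ in units of $C\log C$. This will match the lower bound and pin down the claimed equality for $\ds_P^p(\gs^p)$.

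The crux is the necessary condition for outage under the dynamic policy, and this is where $f\geq 0.5$ is indispensable. Tracking the deadline-$n$ requests through slots $m=n-T,\dots,n-1$, at each such slot they sit in category $n-m\geq 1$, so EDF serves them only out of the reduced budget $f\sum_{i\geq1}N_i^p(m)$ after clearing category $0$. Writing $A_m$ for the backlog with deadline $\leq n$ and $B_m$ for the backlog with deadline $>n$ at slot $m$, the fact that some deadline-$n$ request is left unserved means this budget is exhausted before category $n-m$ is cleared, forcing $f(A_m+B_m)\leq A_m$, i.e.\ $fB_m\leq(1-f)A_m$, which gives $B_m\leq A_m$ precisely when $f\geq 0.5$. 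Hence the capacity burned while the deadline-$n$ requests wait is dominated by requests whose deadlines lie in $\{n-T,\dots,n\}$, i.e.\ by arrivals $Q^p(n-2T),\dots,Q^p(n-T)$, so outage implies $\sum_{i=0}^{T}Q^p(n-T-i)>C(T+1)$ (or a comparable event with the same rate), recovering the structure of $\mathcal{U}_D$. The sum of $T+1$ i.i.d.\ Poisson slots is Poisson with mean $(T+1)\gl^p C$, and a Chernoff bound yields decay rate $(T+1)(\gl^p-1-\log\gl^p)$ for the linear lower bound and $(T+1)(1-\gs^p)$ for the polynomial regime.

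Combining the necessary-condition lower bound with the sufficient-condition upper bound gives the stated inequality $\dl_P^p(\gl^p)\geq(T+1)(\gl^p-1-\log\gl^p)$ for linear scaling and the equality $\ds_P^p(\gs^p)=(T+1)(1-\gs^p)$ for polynomial scaling. I expect the main obstacle to be turning the backlog-domination $B_m\leq A_m$ into a clean set-containment for the outage event: the EDF dynamics couple many slots and deadline categories, so the bookkeeping of which arrivals may occupy capacity while the urgent deadline-$n$ requests wait is delicate, and it is exactly the regime $f<0.5$ that would let deadline-$>n$ requests crowd out the urgent ones and degrade the exponent below $(T+1)$.
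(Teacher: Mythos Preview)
Your overall plan---show that primary outage under the dynamic cap still implies the event $\mathcal{U}_D$ of Lemma~\ref{lem:Det_T}, then invoke $\mathcal{L}_D$ for the matching bound in the polynomial regime---is exactly what the paper does. The gap is in how you extract $\mathcal{U}_D$ from an outage.

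Your inequality $f(A_m+B_m)\le A_m$ is only valid when the cap is binding from below, i.e.\ when $C^p(m)=N_0^p(m)+f\sum_{i\ge1}N_i^p(m)<C$. In the complementary case $C^p(m)=C$ the budget after clearing category~$0$ is $C-N_0^p(m)$, not $f\sum_{i\ge1}N_i^p(m)$, and $f(A_m+B_m)\le A_m$ need not hold. But your final step---``outage implies $\sum_{i=0}^{T}Q^p(n-T-i)>C(T+1)$''---requires the \emph{full} capacity $C$ to be burned at every slot of the window $\{n-T,\dots,n\}$, i.e.\ precisely $C^p(m)=C$ for all $m$. So the two halves of your argument live in mutually exclusive cases: the backlog-domination $B_m\le A_m$ comes from $C^p(m)<C$, while the conclusion needs $C^p(m)=C$ everywhere. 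The observation $B_m\le A_m$ by itself never closes the loop, as you in fact suspected.

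The paper resolves this by proving directly that $C^p(m)=C$ for every $m\in\{n-T,\dots,n\}$ via a backward contradiction. Starting from the outage slot ($N_0^p(n)>C$ forces $C^p(n)=C$), suppose $C^p(n-1)<C$. Then $fN_1^p(n-1)\le C^p(n-1)<C$, and since EDF leaves at most $(1-f)N_1^p(n-1)$ of category~$1$ unserved, one gets
\[
N_0^p(n)\le(1-f)N_1^p(n-1)=\frac{1-f}{f}\cdot fN_1^p(n-1)<\frac{1-f}{f}\,C\le C,
\]
the last inequality being exactly where $f\ge 0.5$ enters. This contradicts $N_0^p(n)>C$. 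Iterating backward gives $C^p(m)=C$ on the whole window, and then EDF reduces to the full-capacity case of Lemma~\ref{lem:Det_T}, yielding $\mathcal{U}_D$. Replace your $A_m/B_m$ detour with this step and the rest of your outline goes through.
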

\begin{proof}
Please refer to Appendix \ref{app:dynamic}.    
\end{proof}

The above theorem thus shows that the predictive primary network satisfies the same diversity gain bounds of the selfish behavior under the proposed dynamic capacity assignment policy as long as $f\in [0.5,1]$. Moreover, it gives a potential for improvement in the outage performance of the secondary users by limiting the number of primary requests served per slot. The outage probability of the secondary network in this case is given by
\begin{align}
P_P(\out^s)&=P\left(Q^s(n)+C^p(n)>C,Q^s(n)>0\right) \nonumber\\
\label{eq:P_P(Os)1}
           &=P\Biggl(Q^s(n)+\min\Bigl\{C,N_0^p(n) \nonumber \\
           &\quad +f\sum_{i=1}^T N_i^p(n)\Bigr\} >C, Q^s(n)>0\Biggr).
\end{align}

To show that even the diversity gain of the secondary network is improved under such policy, we consider the case when $f=0.5$, and $T=1$ for simplicity. In this case, the per-slot capacity of the primary network turns out to be
\begin{equation}
\label{eq:T_1_cap}
C^p(n)=\min\left\{C, N_0^p(n)+0.5 Q^p(n)\right\}
\end{equation}
with
\small
\begin{equation}
\label{eq:N0_evolution}
N_0^p(n+1)=\begin{cases}
            Q^p(n), & \text{if } N_0^p(n)=C,\\
            0.5 Q^p(n)+N_0^p(n)-C, & \text{if } N_0^p(n)<C\text{, }N_0^p(n)\\ & \quad +0.5Q^p(n)\geq C,\\
            0.5 Q^p(n), & \text{if }  N_0^p(n)\\ & \quad+0.5Q^p(n)<C.
        \end{cases}
\end{equation}
\normalsize
It is clear from \eqref{eq:N0_evolution} that
\begin{multline*} 
 P(N_0^p(n+1)=l|N_0^p(n)=i,\cdots, N_0^p(1)=k)\\=P(N_0^p(n+1)=l|N_0^p(n)=i).
\end{multline*}  
  That is, the discrete-time random process $N_0^p(n), n>0$ satisfies the Markov property, and hence, it is a Markov chain. Moreover, it can be easily verified that $N_0^p(n), n>0$ is irreducible and aperiodic as $P(Q^p(n)=q)>0$ for all $q\geq 0$.

The drift of the chain can thus be obtained as
\begin{equation}
\label{eq:drift}
E[N_0(n+1)-N_0(n)|N_0(n)=i]\begin{cases}
														\leq -(1-\gamma^p)C, & \text{if } i\geq C,\\
														\leq \frac{\gamma^p}{2}C, &\text{if } i<C.
                            \end{cases}
\end{equation}
Then, by Foster's theorem \cite{Foster}, the Markov chain is positive recurrent, and hence has a stationary state distribution.

\begin{theorem}
\label{th:div_sec_dyn}
Suppose that the system is operating at the stationary distribution of ${N_0^p(n),n>0}$, the diversity gain of the secondary network, $d_P^s(\gamma^p,\gamma^s)$, under the dynamic capacity allocation for the primary satisfies
\begin{equation}
\label{eq:dPs_lin}
\dl_P^s(\gl^p,\gl^s)\geq -\gl^s(\yl^2-1)-2\gl^p(\yl-1)+2\log(\yl),
\end{equation}
where $$\yl=-\frac{\gl^p}{2\gl^s}+\frac{\sqrt{(4\gl^s+{\gl^p}^2)}}{2\gl^s}$$ and
\begin{equation}
\label{eq:dPs_poly}
\ds_P^s(\gs^p,\gs^s)\geq \begin{cases}
           (1-\gs^p), & 1+\gs^s\geq 2\gs^p,\\
            \frac{1}{2}(1-\gs^s), & 1+\gs^s< 2\gs^p.
        \end{cases}
\end{equation}
\end{theorem}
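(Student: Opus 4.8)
The plan is to work directly from the secondary outage probability in \eqref{eq:P_P(Os)1} and reduce it to a tail estimate for a sum of three independent quantities. First I would unfold the minimum: on $\{N_0^p(n)+0.5Q^p(n)\geq C\}$ the quantity $C^p(n)$ equals $C$, so the event $\{Q^s(n)+C^p(n)>C,\,Q^s(n)>0\}$ collapses to $\{Q^s(n)>0\}$, whereas on the complementary region it reads $\{Q^s(n)+N_0^p(n)+0.5Q^p(n)>C\}$. Merging the two cases rewrites the outage event as
$$\{Q^s(n)+N_0^p(n)+0.5Q^p(n)>C,\ Q^s(n)>0\},$$
so that $P_P(\out^s)\le P\big(Q^s(n)+N_0^p(n)+0.5Q^p(n)>C\big)$. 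The crucial structural observation is that $N_0^p(n)$ is measurable with respect to the arrivals strictly before slot $n$, and is therefore independent of the fresh arrivals $Q^p(n)$ and $Q^s(n)$. The problem thus becomes a large-deviation estimate for the tail of a sum of three independent, Poisson-type variables, two of which ($Q^s$ and $0.5Q^p$) are explicit and one of which ($N_0^p$) is the stationary state of the chain \eqref{eq:N0_evolution}.

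The key lemma I would isolate is a bound on the stationary law of $N_0^p(n)$ in terms of the clean variable $0.5Q^p$. Concretely, I would show that in steady state $E[e^{\theta N_0^p(n)}]\le \exp\!\big(\gl^p C\,(e^{\theta/2}-1)\big)$ for the relevant $\theta>0$, i.e. that $N_0^p$ is dominated, in the transform sense, by one half of a Poisson$(\gl^p C)$ arrival. The three branches of \eqref{eq:N0_evolution} make this plausible: in the two non-overflow cases one has $N_0^p(n{+}1)\le 0.5Q^p(n)$ outright, and the positive-recurrence/negative-drift property established via Foster's theorem forces the chain below $C$ with overwhelming probability. This moment bound (or the companion stochastic-dominance statement) is where I expect the real work to lie, because in the overflow branch $N_0^p(n{+}1)=Q^p(n)$ exceeds $0.5Q^p(n)$; controlling the rare excursions above $C$ and feeding them back into a fixed-point inequality for the stationary transform is the main obstacle.

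For the linear regime the remainder is a Chernoff argument. Using independence I would write $E[e^{\theta(Q^s+N_0^p+0.5Q^p)}]\le \exp\!\big(\gl^s C(e^\theta-1)+2\gl^p C(e^{\theta/2}-1)\big)$, whence
$$-\tfrac1C\log P_P(\out^s)\ \ge\ \sup_{\theta>0}\Big\{\theta-\gl^s(e^{\theta}-1)-2\gl^p(e^{\theta/2}-1)\Big\}.$$
Substituting $u=e^{\theta/2}$ turns the stationarity condition into the quadratic $\gl^s u^2+\gl^p u-1=0$, whose positive root is precisely $\yl$; evaluating the exponent at $u=\yl$ yields the claimed $\dl_P^s(\gl^p,\gl^s)\ge 2\log\yl-\gl^s(\yl^2-1)-2\gl^p(\yl-1)$.

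For the polynomial regime I would instead run a direct dominant-event analysis, since there the overflow is produced by a single anomalously large source. Two competing rare events control the sum on the $C\log C$ scale: a secondary burst in which $Q^s(n)$ alone carries the load, and a primary-overflow event in which $N_0^p(n)+0.5Q^p(n)$ reaches $C$. Estimating each Poisson-type tail and retaining the better of the two resulting lower bounds produces the two exponents $(1-\gs^p)$ and $\tfrac12(1-\gs^s)$; equating them gives the threshold $1+\gs^s=2\gs^p$ that separates the two cases in \eqref{eq:dPs_poly}. Here again the delicate point is the tail of the stationary $N_0^p$, which fixes the exponent of the primary-overflow route, so the same moment/dominance estimate from the key lemma is the tool I would use to pin it down.
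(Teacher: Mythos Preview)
Your reduction to the tail of $Q^s(n)+N_0^p(n)+0.5Q^p(n)$ and the final Chernoff computation (with the substitution $u=e^{\theta/2}$ giving the quadratic $\gl^s u^2+\gl^p u-1=0$ and root $\yl$) match the paper exactly. The divergence is in how $N_0^p(n)$ is controlled, and this is precisely the step you flag as the main obstacle.

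You propose to prove a stationary moment bound $E[e^{\theta N_0^p}]\le \exp\big(\gl^p C(e^{\theta/2}-1)\big)$ via a fixed-point inequality for the transform, acknowledging that the overflow branch $N_0^p(n{+}1)=Q^p(n)$ is the trouble spot. The paper never attempts this. Instead it conditions on whether $C^p(n-1)=C$: on $\{C^p(n-1)<C\}$ the recursion \eqref{eq:N0_evolution} gives $N_0^p(n)=0.5Q^p(n-1)$ \emph{deterministically}, so the first piece is already the clean term $P\big(Q^s(n)+0.5Q^p(n-1)+0.5Q^p(n)>C\big)$ with three honest independent summands; the second piece is bounded by $P(C^p(n-1)=C)$ alone. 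The latter is then shown to decay at rate at least $\dl_P^p(\gl^p)=2(\gl^p-1-\log\gl^p)$ by iterating the same dichotomy backwards over a fixed window of $M$ slots and using the primary diversity result. Since this rate dominates the first piece, the first piece governs $\dl_P^s$.

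The paper's route therefore sidesteps exactly the difficulty you identify: no transform bound on the stationary law of $N_0^p$ is needed, because the case split replaces $N_0^p$ by $0.5Q^p(n-1)$ on the dominant event. Your proposed MGF inequality may well be true at the level of $\tfrac1C\log E[e^{\theta N_0^p}]$, but you have not established it, and the overflow branch makes the exact inequality delicate; the paper's conditioning argument is both simpler and complete. For the polynomial regime the paper gives no separate argument and simply says the same approach carries over with the $C\log C$ normalization, so your dominant-event heuristic is consistent with (and more explicit than) what appears there.
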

\begin{proof}
Please refer to Appendix \ref{app:div_sec_dyn}.
\end{proof}

The right hand side of inequality \eqref{eq:dPs_lin} will be shown in Section \ref{sec:num_res} to be strictly larger than the right hand side of \eqref {eq:dl_N^s_upper} for a range of $\gl^s$, which implies a strict improvement in the diversity gain of the secondary network without any loss in the diversity gain of the primary. However, the right hand side of inequality \eqref{eq:dPs_poly} shows that if $1+\gs^s<2\gs^p$, then the diversity gain of the secondary network is at least equal to its non-predictive counterpart.

\section{Robustness to Prediction Errors}
\label{sec:Robustness}
In the previous sections we have assumed that the prediction mechanism is error free, that is, all predicted requests are true and will arrive in future after exactly the same look-ahead period of prediction. Under this assumption, we managed to treat the predicted arrival process with deterministic look-ahead time as a delayed version of the actual arrival process. However, in practical scenarios, this is not necessarily the case. In this section we provide a model for the imperfect prediction process and investigate its effect on the prediction diversity gain with fixed look-ahead interval $T$ assuming a single class of QoS.

Let $Q(n)$, $n>0$ be the actual arrival process that the network should predict $T$ slots ahead. This process, as introduced in Section \ref{sec:sys_model}, is Poisson with rate $\lambda$. Because the prediction mechanism employed by the network may cause errors, the predicted arrival process differs from the actual arrival process. The prediction mechanism is supposed to cause two types of errors:
\begin{enumerate}
	\item It predicts false requests, those will not arrive actually in future, and serves them, resulting in a waste of resources.
	\item It fails to predict requests and, as a consequence, the network encounters urgent arrivals (unpredicted requests that should be served in the same slot of arrival).
\end{enumerate} 

So, we model the predicted process as 
\begin{equation}
Q^{E}(n)=Q'(n)+Q''(n)
\end{equation}
where $Q'(n)$, $n>0$ is the arrival process of the predicted requests. It represents the number of arriving requests at the beginning of time slot $n$ with deadline $n+T$. The process $Q''(n)$, $n>0$ represents the number of unpredicted requests that arrive at the beginning of time slot $n$ and must be served in the same slot because the network has failed to predict them. We assume further that $Q'(n)$ and $Q''(n)$ are independently Poisson distributed with arrival rates $\lambda'$ and $\lambda''$, respectively.

Since $Q''(n)$ is a part of the requests $Q(n)$, then
\begin{equation}
\label{eq:Cond1}
0\leq\lambda''<\lambda
\end{equation}
where the second inequality is strict because we assume that $Q'(n)$ contains truly predicted requests as well as mistakenly predicted requests, which also implies
\begin{equation}
\label{eq:Cond2}
\lambda'+\lambda''\geq \lambda
\end{equation}
Moreover, the network is stable as long as
\begin{equation}
\label{eq:Cond3}
\lambda'+\lambda''<C.
\end{equation}

For the linear scaling regime, the arrival processes $\Ql'(n)$ and $\Ql''(n)$, $n>0$ have arrival rates $\al'\gl C $ and $\al''\gl C$ respectively.
Applying conditions \eqref{eq:Cond1}-\eqref{eq:Cond3} to $\al'\gl C $ and $\al''\gl C$ we obtain
\begin{equation}
\al''< 1
\end{equation}
and
\begin{equation}
1\leq \al'+\al''<\frac{1}{\gl}
\end{equation}
So, if the prediction mechanism is perfect, then $\al'=1$ whereas $\al''=0$.

The arrival process $\Ql^E(n)$, $n>0$, can be considered as a predicted process with random look-ahead interval that takes on values $0$ and $T$. Hence, using the event $\Ur$ defined in Lemma \ref{lem:Random_T}, we obtain the following lower bound on the prediction diversity gain, $\dl_{P}^E(\gl)$,
\begin{multline}
\label{eq:dl_P^E}
\dl_{P}^{E}(\gl)\geq \min\left\{(T+1)\left[(\al'+\al'')\gl-1-\log\left(\gl(\al'+\al'')\right)\right],\right. \\ \left.
\al''\gl-1-\log(\al''\gl)\right\}
\end{multline}
The best operating point (prediction window) that maximizes the right hand side of \eqref{eq:dl_P^E} is when both terms in the $\min\{.\}$ are equal, which implies
\begin{equation}
\label{eq:Tl}
\bar{T}_{crit}=\frac{\al''\gl-1-\log(\al''\gl)}{(\al'+\al'')\gl-1-\log(\gl (\al'+\al''))}.
\end{equation}

Since $\al''<1$, then for $\bar{T}_{crit}$ derived in \eqref{eq:Tl}, we obtain $\dl_P^E(\gl)>\dl_N(\gl)$.

For the polynomial scaling regime, the processes $\Qs'(n)$ and  $\Qs''(n)$, $n>0$ have arrival rates $C^{\as'\gs}$ and $C^{\as''\gs}$ respectively. Applying conditions \eqref{eq:Cond1}-\eqref{eq:Cond3} to the arrival rates $C^{\as'\gs}$ and $C^{\as''\gs}$, we obtain,
\begin{equation}
\label{eq:Cond1_s}
\as''<1,
\end{equation}
\begin{equation}
\label{eq:Cond2_s}
C^{\as'\gs}+C^{\as''\gs}\geq C^{\gs},
\end{equation}
and
\begin{equation}
\label{eq:Cond3_s}
C^{\as'\gs}+C^{\as''\gs}<C.
\end{equation}
If the prediction mechanism is perfect, then $\as'=1$ whereas $\as''=-\infty$.

We also use events $\Ur$ and $\Lr$ from Lemma \ref{lem:Random_T} to determine the prediction diversity gain with imperfect prediction mechanism, $\ds_{P}^E(\gs)$, as
\begin{equation}
\ds_{P}^E(\gs)=\min\{(T+1)\left[1-\max\{\as',\as''\}\gs\right], 1-\as''\gs\}.
\end{equation}
Nevertheless, since at $\ds_P^E(\gs)$ is at $C\rightarrow\infty$, then from \eqref{eq:Cond2_s}, \eqref{eq:Cond3_s}, as $C\rightarrow\infty$, we obtain, $1\leq\as'<\frac{1}{\gs}$. And from \eqref{eq:Cond1_s}, $\max\{\as',\as''\}=\as'$.
Hence,
\begin{equation}
\ds_{P}^E(\gs)=\min\{(T+1)(1-\as'\gs), 1-\as''\gs\}.
\end{equation} 

So, to obtain the maximum diversity gain, the best prediction window $\tilde{T}_{crit}$ should satisfy
\begin{equation}
\tilde{T}_{crit}=\frac{(\as'-\as'')\gs}{1-\as''\gs},
\end{equation}
and at this point, since $\as''<1$, we have $\ds_{P}^E(\gs)>\ds_{N}(\gs)$.

This section hence has shown theoretically that even under imperfect prediction mechanisms, the prediction window size is judiciously chosen to strike the best balance between the predicted traffic and the urgent one. 

\section{Proactive Scheduling in Multicast Networks}
\label{sec:Multicast}
This section sheds light on the predictive multicast networks and investigates the diversity gains that can be leveraged from efficient scheduling of multicast traffic. Typically, multicast traffic minimizes the usage of the network resources because the same data is sent to a group of users consuming the same amount of resources that serve only a single user which is taken to be unity \cite{Multicast}. So, even in the non-predictive case, the multicast traffic is expected to result in an improved diversity gain performance over its unicast counterpart, discussed in the previous sections.

Furthermore, when the multicast traffic is predictable, there is an additional gain that can be obtained from the ability to \emph{align} the traffic in time. That is, the network can keep on receiving predictable requests that target the same data over time then serve them altogether as the earliest deadline approaches. In this case, the network will end up serving all the gathered requests in a window of time slots with the same resources required to serve one request, and hence will significantly improve the diversity gain of the network. We assume that there are $L$ data sources available (e.g. files, packets, movies, podcasts, etc.) for multicast transmission. The number of multicast requests arriving at the beginning of time slot $n>0$ is a random variable $Q^m(n)$ which is assumed to be Poisson distributed with mean $\lambda^m$.

Assuming that the data sources are demanded independently across time and requests, the process $Q^m(n), n>0$ can be decomposed into
\begin{equation*}
Q^m(n)=\sum_{l=1}^{L}{Q^{m,[l]}(n)}, \quad \textrm{for all } n>0,
\end{equation*}
where $Q^{m,[l]}(n)$ denotes the number of multicast requests for data source $l\in\{1,\cdots, L\}$ arriving in slot $n$, and is Poisson distributed with mean $\lambda^{m,[l]} \triangleq p^{[l]} \lambda^m,$ where $\vp \triangleq (p^{[l]})_{l=1}^L$ is a valid probability distribution\footnote{$\vp$ is a valid distribution if $0\leq p^{[l]}\leq 1$ and $\sum_{l=1}^L p^{[l]}=1$.} capturing the potentially asymmetric multicast demands over the pool of $L$ data sources.

In this section we focus only on the analysis of the linear scaling regime where the potential improvement in the diversity gain is tangible \footnote{The additional multicast gains do not appear in the polynomial scaling regime because the traffic to each data source vanishes asymptotically, as $C\to\infty$, when the number of data sources $L$ scales with $C$, implying that at most one request can target a data source at each slot, i.e., the multicast traffic will approach the unicast as $C\to\infty$.}. The mean number of arriving multicast requests scales with $C$ as $\lambda^m=\gm C$, $\gm\in(0,1)$. The number of data sources $L$ scales also linearly with $C$ as $L=\thl C$, $\thl>0$. 

The binary parameter $X^{m,[l]}(n)$ for each multicast data source $l\in\{1,\cdots,L\}$ is defined as
  \begin{multline}
X^{m,[l]}(n)\triangleq\left\{
\begin{array}{l l}
1, \quad \text{if } Q^{m,[l]}(n)>0,\\
0, \quad \text{if } Q^{m,[l]}(n)=0,
\end{array}
\right.
 l=1, \cdots, L,
\end{multline}
which gives the indicator of at least one multicast request for data source $l$ arrives at slot $n.$ And, under the aforementioned Poisson assumptions, $X^{m,[l]}(n)$ is a simple Bernoulli random variable with parameter
\begin{equation}
A^{m,[l]}=1-e^{-p^{m,[l]}\lambda^m}, \quad l\in\{1,\cdots,L\}.
\end{equation}

We denote the total number of distinct multicast data requests arriving in slot $n$ as $S^m(n),$ defined as
\begin{equation}
\label{eq:S(n)}
S^m(n)\triangleq \sum_{l=1}^{L}X^{m,[l]}(n).
\end{equation}

\begin{definition}
Let $N^{m,[l]}_0(n)$ denote the indicator that there is \emph{at least} one awaiting multicast request for data source $l\in\{1,\cdots,L\}$ that expires in slot $n.$
Then, letting $N^m_0(n)\triangleq \sum_{l=1}^L N^{m,[l]}_0(n),$ the multicast outage event is defined as
\begin{equation*}
{\outm} \triangleq
    \left\{N^m_0(n) >C, n\gg 1 \right\}.
\end{equation*}
\end{definition}

The pure multicast network will be investigated in the following subsection where the diversity gain of its non-predictive side will be shown to be larger than its unicast counterpart, furthermore, the alignment property of the predictive multicast will be proven to result in a significantly improved diversity gain, that scales super-linearly with the prediction interval $T$. Then, the subsequent subsection will address a composite network consisting of unicast and multicast traffics; the potential diversity gain will be investigated under different prediction scenarios.

\subsection{Symmetric Multicast Demands}
Suppose that the number of data sources scales with $C$ as $L= \thl C$, $\thl>0$. Then, $\thl\leq 1$ implies zero outage probability and infinite diversity gain regardless of the value of $\gm$. This is the first gain improvement that can be leveraged from the nature of the multicast traffic. We now confine the analysis to the case when $\thl>1$. Assume that the multicast demands are equally distributed on the available data sources, i.e.
\begin{equation*}
p^{[l]}=p=\frac{1}{\thl C},
\end{equation*}
\begin{equation*}
A^{m,[l]}=A^m=1-e^{-\frac{\gm}{\thl}}, \quad \forall l\in\{1,\cdots,L\}.
\end{equation*}

\subsubsection {Non-predictive Multicast Network}
Under the above symmetric setup (and assuming $\thl>1$), the random variable $S^m(n)$ turns out to have a binomial distribution with parameter  $A^m$ and the outage probability in this case, denoted by $P_N(\outm)$, is equal to $P(S^m(n)>C)$. In other words, the multicast outage occurs in slot $n$ if and only if the number of distinct data sources requested at this slot is larger than the network capacity.
\begin{theorem}
\label{th:Div_multi_rea}
The diversity gain of non-predictive multicasting, denoted by $\dl_N(\gm,\thl)$, is given by
\begin{multline}
\label{eq:Div_multi_rea_lin}
\dl_N(\gm,\thl)=(\thl-1)\log(\thl-1)-\thl \log \thl+\gm\left(\frac{\thl-1}{\thl}\right)\\ -\log\left(1-e^{-\frac{\gm}{\thl}}\right),\quad 0<\gm<1, \quad \thl>1.
\end{multline}
\end{theorem}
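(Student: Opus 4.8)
The plan is to recognize the outage probability as an upper-tail large-deviation probability for a sum of independent Bernoulli random variables and to invoke Cram\'er's theorem. Under the symmetric setup, $S^m(n)=\sum_{l=1}^{L}X^{m,[l]}(n)$ is a sum of $L=\thl C$ i.i.d.\ Bernoulli variables, each with parameter $A^m=1-e^{-\gm/\thl}$, and $P_N(\outm)=P(S^m(n)>C)$. Before applying any large-deviation estimate I would first check that the outage is genuinely a rare event of the \emph{upper} tail, i.e.\ that the normalized threshold $C/L=1/\thl$ strictly exceeds the per-variable mean $A^m$. This is immediate from the elementary inequality $1-e^{-x}<x$ at $x=\gm/\thl$, which yields $A^m<\gm/\thl<1/\thl$ since $\gm<1$; hence the empirical mean $S^m(n)/L$ must deviate above its expectation for an outage to occur.

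Next I would apply Cram\'er's theorem to the empirical mean of this Bernoulli sequence. As $C\to\infty$ (equivalently $L\to\infty$),
\[
-\frac{1}{L}\log P\!\left(\frac{S^m(n)}{L}>\frac{1}{\thl}\right)\longrightarrow I\!\left(\frac{1}{\thl}\right),
\]
where $I(\cdot)$ is the Legendre--Fenchel transform of the Bernoulli log-moment-generating function $\Lambda(s)=\log\!\left(1-A^m+A^m e^{s}\right)$. For the Bernoulli law this transform is the relative entropy
\[
I(a)=a\log\frac{a}{A^m}+(1-a)\log\frac{1-a}{1-A^m}.
\]
The upper direction comes from the Chernoff bound $P(S^m(n)>C)\le e^{-sC}\bigl(1-A^m+A^m e^{s}\bigr)^{L}$ optimized over $s>0$; the matching lower bound is the standard exponential-tilting (change-of-measure) argument, which is legitimate precisely because $1/\thl$ lies strictly between the mean $A^m$ and the upper edge $1$ of the support.

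Finally, since the diversity gain normalizes by $C$ rather than by $L$, I would multiply through by $L/C=\thl$ to obtain
\[
\dl_N(\gm,\thl)=\lim_{C\to\infty}-\frac{\log P_N(\outm)}{C}=\thl\, I\!\left(\tfrac{1}{\thl}\right),
\]
and then substitute $a=1/\thl$ together with $1-A^m=e^{-\gm/\thl}$ (so that $\log(1-A^m)=-\gm/\thl$). Collecting the four logarithmic terms and simplifying yields exactly the closed form in \eqref{eq:Div_multi_rea_lin}.

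The step I expect to be the main obstacle is the large-deviation lower bound. The Chernoff upper bound is routine, but showing that the decay is no faster than $\thl\,I(1/\thl)$ requires either the full Cram\'er lower bound via tilting or a direct Stirling estimate of the binomial tail $P(S^m(n)>C)$ with $L=\thl C$ and threshold $1/\thl$. In carrying this out one must also note that replacing the strict inequality $S^m(n)>C$ by $S^m(n)\ge C$ does not change the exponential rate, since the two tails differ by a single integer term that is asymptotically negligible on the $\log$-scale.
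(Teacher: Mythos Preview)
Your proposal is correct and follows essentially the same route as the paper: both recognize $S^m(n)$ as a sum of $L=\thl C$ i.i.d.\ Bernoulli$(A^m)$ variables and apply Cram\'er's theorem to the empirical mean at threshold $1/\thl$, then rescale by $\thl$ to obtain the diversity gain. The only cosmetic difference is that you write the rate function in closed relative-entropy form and then substitute, whereas the paper solves the Chernoff optimization $\inf_{r>0}\{\thl\Lambda_{X}(r)-r\}$ directly for $r^*$; your additional care in verifying $A^m<1/\thl$ (equivalently $r^*>0$) and in discussing the matching lower bound is more explicit than, but consistent with, the paper's treatment.
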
 
\begin{proof}
Please refer to Appendix \ref{app:Div_multi_rea}.
\end{proof}

\begin{figure}
	\centering
		\includegraphics[width=0.45\textwidth]{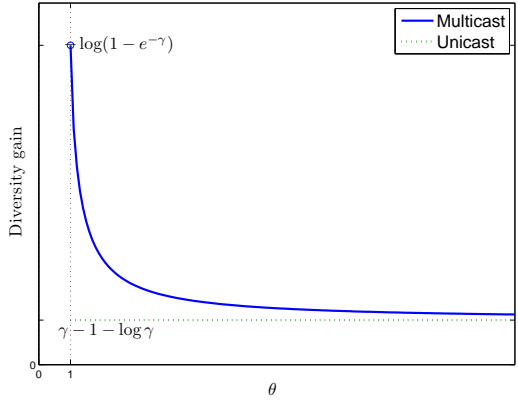}
	\caption{Diversity gain of the non-predictive multicast network monotonically decreases with $\thl$. However, it is lower bounded by the diversity gain of non-predictive unicast networks.}
	\label{fig:Paper_div_multi_linear_vs_theta}
\end{figure}

 Theorem \ref{th:Div_multi_rea} and Fig. \ref{fig:Paper_div_multi_linear_vs_theta}, which depicts the diversity gains of non-predictive multicast \eqref{eq:Div_multi_rea_lin} and unicast \eqref{eq:div_rea_lin} networks with $\gm=\gl$, show that $\dl_N(\gm,\thl)$ is monotonically decreasing in $\thl$. As $\thl$ increases, the number of data sources in the network grows faster with $C$, and hence, from \eqref{eq:Div_multi_rea_lin},
\begin{align}
\lim_{\thl\rightarrow\infty} \dl_N(\gm,\thl)&=\gm-\log\gm-1=\dl_N(\gm).
\end{align}


That is, multicast diversity gain $\dl_N(\gm,\thl)$ is strictly greater than its unicast counterpart $\dl_N(\gm)$, and converges to it in the limit as $\thl \rightarrow \infty$. In fact, a much stronger result is that, when $\gm=\gl$,
\begin{equation}
\begin{split}
\lim_{\thl\rightarrow\infty} L A^m&=\lim_{\thl\rightarrow\infty}\thl C \left(1-e^{-\frac{\gm}{\thl}}\right)\\
                                  &=\gm C,\quad 0<\gm<1,
\end{split}
\end{equation}
we have also $A^m\rightarrow 0$ and $L=\thl C\rightarrow\infty$ as $\thl\rightarrow\infty$.
Therefore, $S^m(n)$ converges in distribution to $\Ql(n)$, and consequently, $P_N(\outm)\rightarrow P_{N}(\out)$, $\thl\rightarrow\infty$.

In this subsection, we have highlighted the extra diversity gain achieved through one of the multicast properties, that is all the requests arriving to the network at time slot $n$ and demanding a certain data source are all served with one unit resources exactly as if only one request demands that data source.

\subsubsection{Predictive Multicast Network}

Now suppose that the symmetric multicast network has predictable demands with a prediction window of $T>0$ slots. The traffic alignment in this case appears in the following sense, the resource serving a group of requests arriving at slot $n$ also serves all other requests in the system (that have arrived withing the previous $T$ slots) requesting the same data source. So, the resource value is extendable across time. The prediction capability of the network is thus equal to infinity as long as $\thl\leq (T+1)$, which implies a multiplicative gain of $T+1$ in the number of data sources that the network can support with zero outage probability, as compared to the non-predictive case.

Consider then the other range of $\thl$, that is $\thl>(T+1)$. The network now is subject to outage events and efficient scheduler has to be employed. Because of the symmetric demands, we focus the analysis on the EDF scheduling. Let the optimal diversity gain in this predictive scenario be denoted by $\dl_P(\gm,\thl)$, in \cite{ISIT}, we have shown that $\dl_P(\gm,\thl)\geq (T+1) \dl_N(\gm,\thl)$ which is consistent with the results of Subsection \ref{subsec:Pred_Div_Gain} as the predictability multiplies the diversity gain by a factor of at least $T+1$. However, we show now that the alignment property can even improve the diversity gain and result in a super-linear scaling of $\dl_P(\gm,\thl)$ with $T$.

\begin{theorem}
\label{th:div_multi_pred}
The optimal diversity gain of the predictive multicast network with symmetric demands, $\dl_P(\gm,\thl)$, satisfies
\begin{equation}
\label{eq:div_multi_pro}
\begin{split}
\dl_P(\gm,\thl)&\geq (T+1)\log\left(\frac{(1-\xm)(T+1)}{\xm(\thl-(T+1))}\right)\\
               &\quad     -\thl\log\left(1-\xm+\frac{(1-\xm)(T+1)}{\thl-(T+1)}\right),\\
               &\triangleq \mathbb{L}_{sym}.     
\end{split}
\end{equation}
where $$\xm=1-\exp\left({-\frac{(T+1)\gm}{\thl}}\right).$$
\end{theorem}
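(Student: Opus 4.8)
The plan is to mirror the deterministic-look-ahead argument of Lemma \ref{lem:Det_T} and Theorem \ref{th:diversity_proac_d}, but to replace the raw request counts by counts of \emph{distinct} active data sources, which is exactly where the alignment gain enters. A lower bound on $\dl_P(\gm,\thl)$ is equivalent to an upper bound on the predictive outage probability $P_P(\outm)$, so first I would establish a necessary condition for the outage event: an event containing $\outm$ whose probability is tractable.

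First I would argue, by work-conservation, that an outage at slot $n$ forces an overflow of distinct demands over a window of $T+1$ consecutive slots. Because EDF is work-conserving and, by the alignment property, every request for a source $l$ arriving within a window of $T+1$ slots is cleared by a single resource unit once $l$ is served, the aggregate service budget over any $T+1$ consecutive slots is $C(T+1)$ resource units, while each distinct source demanded within that window consumes exactly one of them. Hence if the number of \emph{distinct} data sources demanded over the relevant $T+1$-slot window does not exceed $C(T+1)$, all such requests are served before their deadlines and no expiry occurs; contrapositively, $\outm$ implies that this distinct-source count exceeds $C(T+1)$. This is the multicast analog of the necessary event $\mathcal{U}_D$. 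I expect pinning down the precise window and justifying this scheduling implication cleanly to be the main obstacle, since the deadlines of requests arriving in overlapping windows interact, and one must verify that a single fixed window of $T+1$ slots carries the overflow.

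Next I would identify the distribution of this distinct-source count. For a fixed source $l$, the total number of its requests over $T+1$ consecutive slots is Poisson with mean $(T+1)p^{[l]}\lambda^m=(T+1)\gm/\thl$, so the indicator that $l$ is active over the window is Bernoulli with parameter $\xm=1-\exp(-(T+1)\gm/\thl)$, exactly as in the statement. By independence of demands across sources, the distinct-source count over the window is $\mathrm{Bin}(\thl C,\xm)$, and therefore
\[
P_P(\outm)\le P\!\left(\mathrm{Bin}(\thl C,\xm)>C(T+1)\right).
\]

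Finally I would apply a Chernoff/Cram\'er large-deviation bound to this binomial upper tail. The threshold fraction is $\tfrac{C(T+1)}{\thl C}=\tfrac{T+1}{\thl}<1$ (since $\thl>T+1$), and one checks $\xm<\tfrac{T+1}{\thl}$ using $1-e^{-x}<x$ with $x=(T+1)\gm/\thl$ and $\gm<1$, so the event is genuinely an upper-tail deviation. Optimizing the exponential tilt yields a decay exponent, per unit of $C$, equal to $\thl\, D\big(\tfrac{T+1}{\thl}\,\|\,\xm\big)$, where $D(a\|q)=a\log\tfrac{a}{q}+(1-a)\log\tfrac{1-a}{1-q}$ is the binary relative entropy; hence $\dl_P(\gm,\thl)\ge \thl\, D\big(\tfrac{T+1}{\thl}\,\|\,\xm\big)$. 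Expanding this relative entropy and collecting the $\log\thl$ and cross terms (the $(T+1)$ and $\thl-(T+1)$ weights recombine so that the $-\thl\log\thl$ contribution appears and the remaining logarithms group into the two terms of the claim) reduces the exponent exactly to $\mathbb{L}_{sym}$; this last step is routine algebra.
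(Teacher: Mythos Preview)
Your proposal is correct and follows essentially the same route as the paper. The paper resolves the window-identification obstacle you flag by introducing the intermediate count $N^m(n-T)$ (distinct sources present in the system at slot $n-T$), observing that under EDF with fixed $T$ an outage at $n$ forces $N^m(n-T)>C(T+1)$, and then bounding $N^m(n-T)$ by $Z^m_T(n-T)$, the number of distinct sources requested over the window $[n-2T,\ldots,n-T]$; thereafter it applies the Chernoff bound to $Z^m_T\sim\mathrm{Bin}(\thl C,\xm)$ exactly as you do, obtaining the same exponent (your $\thl\,D\!\big(\tfrac{T+1}{\thl}\,\big\|\,\xm\big)$ is precisely the paper's $\mathbb{L}_{sym}$ after the optimizing tilt $r^*=\log\frac{(1-\xm)(T+1)}{\xm(\thl-(T+1))}$).
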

\begin{proof}
Please refer to Appendix \ref{app:div_multi_pred}.
\end{proof}
The new lower bound $\mathbb{L}_{sym}$ takes into account the alignment property of the predictable multicast traffic, and thus shows significant increase in the diversity gain with $T$ as compared to the older bound $(T+1)\dl_N(\gm,\thl)$ in Fig. \ref{fig:Journal_newer_lower_bound_vs_T}.
\begin{figure}
	\centering
		\includegraphics[width=0.45\textwidth]{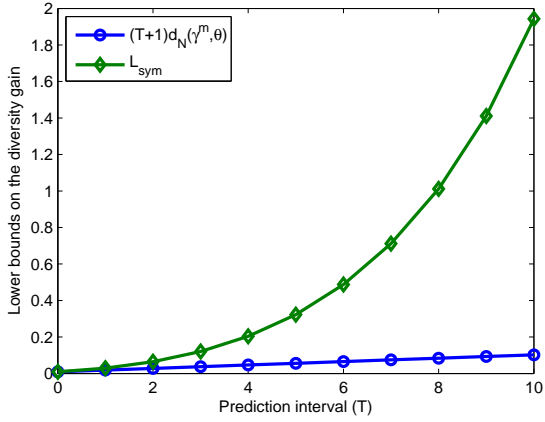}
	\caption{Superlinear increase in the diversity gain of the multicast network with the prediction interval $T$ because of the alignment property. In this figure $\gm=0.9$ and $\thl=15$.}
	\label{fig:Journal_newer_lower_bound_vs_T}
\end{figure}
 
\subsection{Multicast and Unicast Traffic}
Generally, wireless networks support both types of traffic: multicast and unicast. For instance, a smart phone user my receive unicast data such as e-mail or electronic bank statement as well as multicast data such as movies or podcasts. In this subsection we investigate the potential diversity gain of wireless networks encompassing both types of traffic under different predictability conditions.

The multicast traffic model adopted here is exactly as defined in the beginning of this section, with the only difference is we assume that $L=\thl C$, where $\thl\in(0,1)$. The multicast data sources are also equally demanded, each with probability $$A^m=1-\exp\left(-\frac{\gm}{\thl}\right).$$ The unicast traffic arrives at the beginning of each slot $n$ according to $Q^u(n)$ which is Poisson distributed with mean $\lambda^u=\gu C$, $\gu\in(0,1)$. Each of the unicast requests consumes one unit of the system capacity. The stability condition of the non-predictive network necessitates that
\begin{equation}
\label{eq:stability}
A^m\thl + \gu <1.
\end{equation} 
\begin{definition}
Letting $N^u_0(n)$ denote the number of unicast requests in the system at the beginning of time slot $n$, the combined outage event of the wireless network with unicast and multicast traffic is defined as
\begin{equation*}
\outa=\left\{N^m_0(n)+N^u_0(n)>C, n\gg 1\right\}.
\end{equation*} 
\end{definition}

In \cite{ISIT}, we have addressed the case when only on multicast data source exists in the network an consumes $\mu C$, $\mu\in(0,1)$ of the available resources to supply data. This data source shares the network with unicast traffic. We have shown the impact of the multicast traffic alignment on the diversity gain where more gains can be leveraged by gathering more of the predictable multicast traffic and serving them altogether in a single slot. Alternatively, in this subsection we address the scenario of multiple data sources each consumes one unit of the available resources. We will investigate the diversity gain of the network in the following four scenarios of demand predictability:
\begin{enumerate}
\item Both unicast and multicast traffics are non-predictive.
\item Unicast is non-predictive but multicast is predictive.
\item Both unicast and multicast traffics are predictive.
\item Unicast is predictive but multicast is non-predictive.
\end{enumerate}   

\subsubsection{Scenario 1: Both Unicast and Multicast Traffics are Non-predictive}
In this scenario, all of the arriving requests are urgent and hence, $N^m_0(n)=S^m(n)$ and $N^u_0(n)=Q^u(n)$.
\begin{theorem}
\label{th:S1}
Let the outage probability in Scenario 1 be denoted by $P_1(\outa)$ and the associated diversity gain be denoted by $\dl_1(\gu,\gm,\thl)$, then
\begin{equation}
\label{eq:S1}
\begin{split}
\dl_1(\gu,\gm,\thl)=&\log(y_1)+\gamma^u(1-y_1)\\
                           &-\thl\log\left(e^{-\frac{\gamma^m}{\thl}}+y_1\left(1-e^{-\frac{\gamma^m}{\thl}}\right)\right),
\end{split}
\end{equation}
where
\begin{equation*}
\begin{split}
y_1=&\frac{1}{2\gu \left(e^{\frac{\gm}{\thl}}-1\right)}\Biggl[\Bigl((\thl^2-2\thl+1)e^{\frac{2\gm}{\thl}}\\
   &+\bigl(-2\thl^2+2\thl(\gu+2)+2(\gu-2)\bigr)e^{\frac{\gm}{\thl}}+\thl^2\\
   &-2\thl(\gu+1)+{\gu}^2-2\gu+1\Bigr)^{\frac{1}{2}}+(1-\thl)e^{\frac{\gm}{\thl}}\\
   &+\thl-\gu-1\Biggr].
\end{split}
\end{equation*}
\end{theorem}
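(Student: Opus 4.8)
The plan is to recognize that in Scenario 1 the outage event reduces to $\{S^m(n) + Q^u(n) > C\}$ (since $N^m_0(n)=S^m(n)$ and $N^u_0(n)=Q^u(n)$), and then to compute the exponential decay rate of its probability by a large-deviations (Chernoff / Gärtner--Ellis) argument. First I would record the distributional facts: under the symmetric demand model $S^m(n)$ is a sum of $L=\thl C$ independent Bernoulli$(A^m)$ variables with $A^m=1-e^{-\gm/\thl}$, while $Q^u(n)$ is Poisson with mean $\gu C$, and the two are independent. Hence the scaled logarithmic moment generating function converges, as $C\to\infty$, to
\begin{equation*}
\Lambda(\theta) = \thl \log\left(e^{-\gm/\thl} + \left(1 - e^{-\gm/\thl}\right)e^{\theta}\right) + \gu\left(e^{\theta} - 1\right),
\end{equation*}
a finite, smooth, strictly convex function of $\theta$.

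Next, since the stability condition \eqref{eq:stability} guarantees that the mean occupancy $\thl A^m+\gu$ is strictly below $1$, the threshold $C$ lies above the mean of $S^m(n)+Q^u(n)$, so the decay rate is governed by the Legendre transform evaluated at the boundary,
\begin{equation*}
\dl_1(\gu,\gm,\thl) = \Lambda^*(1) = \sup_{\theta>0}\left[\theta - \Lambda(\theta)\right].
\end{equation*}
The upper bound on the outage probability follows immediately from the Chernoff inequality $P_1(\outa)\leq \inf_{\theta>0} e^{-\theta C}\,E[e^{\theta(S^m(n)+Q^u(n))}]$; the matching lower bound I would obtain either by invoking the Gärtner--Ellis theorem (whose hypotheses hold because $\Lambda$ is finite and essentially smooth everywhere) or by an explicit exponential change of measure that tilts the distribution so that the event becomes typical.

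It then remains to carry out the optimization. Differentiating, the stationarity condition $\Lambda'(\theta)=1$ at the optimizer $\theta^{*}$ reads $\thl \frac{(1-e^{-\gm/\thl})e^{\theta}}{e^{-\gm/\thl}+(1-e^{-\gm/\thl})e^{\theta}} + \gu e^{\theta}=1$. Writing $y_1=e^{\theta^{*}}$ and clearing the denominator turns this into a quadratic in $y_1$, whose relevant positive root is the expression stated in the theorem (after multiplying numerator and denominator by $e^{\gm/\thl}$ to match the stated form). Substituting $\theta^{*}=\log y_1$ back into $\theta-\Lambda(\theta)$ then yields $\dl_1=\log(y_1)+\gu(1-y_1)-\thl\log(e^{-\gm/\thl}+y_1(1-e^{-\gm/\thl}))$, which is exactly \eqref{eq:S1}.

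The routine parts are the moment-generating-function computation and the algebra of the quadratic. The main obstacle is the clean two-sided large-deviations estimate: the Chernoff upper bound is straightforward, but establishing the matching lower bound requires care, because $S^m(n)$ is a binomial whose number of summands grows with $C$ while $Q^u(n)$ is a Poisson whose parameter grows with $C$, so one must verify the Gärtner--Ellis steepness and smoothness conditions (or justify the tilted-measure estimate) for this triangular-array sum rather than for a fixed i.i.d. sequence.
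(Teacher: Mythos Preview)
Your proposal is correct and follows essentially the same route as the paper: identify the outage event as $\{S^m(n)+Q^u(n)>C\}$, compute the limiting scaled cumulant generating function, invoke a large-deviations principle (the paper simply cites Cram\'er's theorem), and solve the stationarity condition as a quadratic in $y_1=e^{\theta^*}$. Your discussion of the matching lower bound via G\"artner--Ellis is in fact more careful than the paper's one-line appeal to Cram\'er, since the sum here is a triangular array rather than a fixed i.i.d.\ sequence.
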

\begin{proof}
Please refer to Appendix \ref{app:S1}.
\end{proof}
Theorem \ref{th:S1} thus tightly characterizes the diversity gain of the network in Scenario 1. The expression of $\dl_1(\gu,\gm,\thl)$, however, is not insightful, so it will be compared graphically to the results of the other scenarios.

\subsubsection{Scenario 2: Unicast is Non-predictive but Multicast is Predictive}
In this scenario, the network can predict the multicast requests $T$ slots ahead, whereas the unicast traffic is urgent.
We consider a scheduling policy $\pi_2$ to establish a lower bound on the optimal diversity gain, denoted $\dl_2(\gu,\gm,\thl)$, of this scenario.
\begin{definition}[Scheduling Policy $\pi_2$]
At each slot $n$, the scheduling policy $\pi_2$ serves as much as possible of the existing requests in the system in the following order:
\begin{enumerate}
\item Multicast data sources demanded by urgent requests, $N^u_0(n)$.
\item Unicast requests, $Q^u(n)$.
\item The rest of the multicast data sources according to EDF. 
\end{enumerate}  
\end{definition}
The policy $\pi_2$ is a slightly modified version of EDF with priority given to urgent multicast requests.
\begin{theorem}
\label{th:S2}
Let the outage probability in Scenario 2 under the scheduling policy $\pi_2$ be denoted $P_2(\outa)$ and the optimal diversity gain be denoted by $\dl_2(\gu,\gm,\thl)$, then
\begin{equation}
\label{eq:S2_l2}
\begin{split}
\dl_2(\gu,\gm,\thl)&\leq  \min\Bigl\{\dl_N(\gu),(T+1)\log y_2\\
                   &\quad-(T+1)\gu(y_2-1)\\
                   &\quad-\thl\log(1-\xm+\xm y_2)\Bigr\},\\
                   & \triangleq \mathbb{L}_2.     
\end{split}
\end{equation}
and
\begin{equation}
\label{eq:S2_u2}
\dl_2(\gu,\gm,\thl)\leq \dl_N(\gu)\triangleq \mathbb{U}_2,
\end{equation}
where $\dl_N(\gu)$ is as derived in \eqref{eq:div_rea_lin} with $\gl=\gu$, and
\begin{equation*}
\begin{split}
y_2=&\frac{1}{2\xm\gu (T+1)}\Biggl[\Biggl(\Bigl((1-\xm)^2{\gu}^2+2\xm\gu(1-\xm)\\
    & +{\xm}^2\Bigr)^2T^2 +\Bigl([2\xm\gu(1-\xm)-2{\xm}^2]\thl\\
    &+2{\xm}^2(1-\xm)^2+4\xm\gu(1-\xm)+2{\xm}^2\Bigr)T\\
    &+[2\xm\thl(1-\xm)-2{\xm}^2]\thl+{\gu}^2(1-\xm)^2\\
    &+2\xm\thl(1-\xm)+{\xm}^2(1+\thl)^2\Biggr)^{\frac{1}{2}}\\
    &+\Bigl((\xm-1)\gu\Bigr)T-\xm\thl+\gu(\xm-1)+\xm\Biggr].
\end{split}
\end{equation*}
\end{theorem}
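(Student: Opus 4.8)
The plan is to sandwich the optimal diversity $\dl_2(\gu,\gm,\thl)$ between the achievability bound $\mathbb{L}_2$, realized by the explicit policy $\pi_2$, and the universal converse $\mathbb{U}_2=\dl_N(\gu)$; that is, to show $\mathbb{L}_2\le\dl_2\le\mathbb{U}_2$. I would dispose of the converse first. Because the unicast stream is non-predictive, every unicast request arriving at slot $n$ carries deadline $n$, so $N^u_0(n)=Q^u(n)$ and the single-slot event $\{Q^u(n)>C\}$ forces an outage regardless of the multicast backlog and of the scheduling rule. Hence $\{Q^u(n)>C\}\subseteq\outa$ for every policy, giving $P_2(\outa)\ge P(Q^u(n)>C)$. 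The right-hand side is the non-predictive outage of a Poisson stream of rate $\gu C$, whose decay rate is $\dl_N(\gu)=\gu-1-\log\gu$ by Theorem~\ref{th:diversity_reac}. Thus $\dl_2\le\dl_N(\gu)=\mathbb{U}_2$, and this holds in particular for the optimal policy.

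For the achievability bound I would upper bound $P_2(\outa)$ under $\pi_2$. The key observation is that $\pi_2$ is a work-conserving earliest-deadline-first rule applied to a two-point look-ahead population: unicast has look-ahead $0$ while multicast has deterministic look-ahead $T$, and giving urgent multicast priority over the (equally urgent) unicast is only an EDF tie-break, which does not change whether some deadline-$n$ request expires. I can therefore specialize the necessary-condition construction of Lemma~\ref{lem:Random_T} (with $\Tmin=0$, $\Tmax=T$, the $\Tmin$-stream being unicast and the $\Tmax$-stream being multicast), replacing raw multicast request counts by \emph{distinct}-source counts through the alignment property used in Theorem~\ref{th:div_multi_pred}. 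This yields
\begin{equation*}
\outa\subseteq\underbrace{\Bigl\{B+\textstyle\sum_{j=0}^{T}Q^u(n-j)>C(T+1)\Bigr\}}_{\text{full window}}\;\cup\;\bigcup_{k=0}^{T-1}\underbrace{\Bigl\{\textstyle\sum_{j=0}^{k}Q^u(n-j)>C(k+1)\Bigr\}}_{\text{pure unicast sub-windows}},
\end{equation*}
where $B\sim\mathrm{Binomial}(\thl C,\xm)$ is independent of the unicast arrivals, because a source demanded at least once over the $T+1$-slot window (probability $\xm=1-e^{-(T+1)\gm/\thl}$) is cleared with a single resource by alignment. The sub-window events carry no multicast contribution (a multicast deadline falls inside a window of length $k+1<T+1$ only through the full window), so their rate is $(k+1)\dl_N(\gu)\ge\dl_N(\gu)$, minimized at $k=0$ where the event is exactly $\{Q^u(n)>C\}$.

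It then remains to evaluate the full-window tail, a sum of an independent Binomial and Poisson, by the Chernoff/Gärtner--Ellis bound. Writing $y=e^{s}$ for the tilt, the per-$C$ exponent is
\begin{equation*}
I=\sup_{y>0}\bigl[(T+1)\log y-(T+1)\gu(y-1)-\thl\log(1-\xm+\xm y)\bigr],
\end{equation*}
whose stationarity condition $\tfrac{T+1}{y}=(T+1)\gu+\tfrac{\thl\xm}{1-\xm+\xm y}$ is a quadratic in $y$ with relevant root the stated $y_2$; substituting $y_2$ reproduces the displayed expression. A union bound over the events above then gives $P_2(\outa)$ a decay rate equal to the minimum of the individual rates, which by the previous paragraph is $\min\{\dl_N(\gu),I\}=\mathbb{L}_2$. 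Combined with the converse this establishes $\mathbb{L}_2\le\dl_2\le\mathbb{U}_2$.

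The hard part will be the necessary-condition step, not the Chernoff calculus. One must justify rigorously that under $\pi_2$ the aligned multicast work over the window may be charged as the distinct-source Binomial count $B$ rather than the total request count, even though $\pi_2$ defers non-urgent multicast behind the urgent traffic, and one must control any multicast backlog carried into the window so that the $C(T+1)$ service budget is accounted correctly. Showing that this deferral never manufactures an outage beyond the window overload — i.e., that the containment above is genuinely valid and that the intermediate windows $0<k<T$ never beat the two extremes in the exponent — is where the real effort lies; solving the quadratic for $y_2$ and checking the min is comparatively routine.
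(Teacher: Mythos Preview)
Your proposal is correct and reaches the same two-term union bound as the paper, but the route differs. The paper does not invoke Lemma~\ref{lem:Random_T} at all: instead it argues directly that an outage under $\pi_2$ splits into (i) $Q^u(n)>C$ or (ii) $Q^u(n)\le C$ with $N_0^m(n)>0$, and in case~(ii) the survival of a multicast request from slot $n-T$ forces the system to have been at full capacity throughout $[n-T,\dots,n]$, whence $N^m(n-T)+\sum_{i=0}^{T}Q^u(n-i)>C(T+1)$; bounding $N^m(n-T)$ by $Z^m_T(n-T)\sim\mathrm{Binomial}(\thl C,\xm)$ gives exactly your full-window event. So only two events appear from the start, and the Chernoff step is identical to yours.

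Your detour through Lemma~\ref{lem:Random_T} produces the same full-window event together with the extra pure-unicast sub-windows $\{\sum_{j=0}^{k}Q^u(n-j)>C(k+1)\}$ for $k=0,\dots,T-1$, which you then discard because their exponents $(k+1)\dl_N(\gu)$ dominate the $k=0$ case. That works, but it costs you the additional burden of adapting Lemma~\ref{lem:Random_T} to a setting with two traffic classes and multicast alignment---precisely the step you flag as ``the hard part.'' The paper's direct saturation argument sidesteps this entirely: once $N_0^m(n)>0$ is observed, alignment is invoked only through $N^m(n-T)\le Z^m_T(n-T)$ (cf.\ Theorem~\ref{th:div_multi_pred}), and no backlog carried into the window needs separate control. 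Your approach has the virtue of being systematic (it would extend more mechanically to other look-ahead mixtures), while the paper's is shorter and avoids reproving a lemma that was never stated for multicast.
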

\begin{proof}
Please refer to Appendix \ref{app:S2}.
\end{proof}

The upper and lower bounds on $\dl_2(\gu,\gm,\thl)$ established in Theorem \ref{th:S2} match each other as $T$ increases. In fact, the second term in $\min\{.,.\}$ of expression \eqref{eq:S2_l2} is monotonically increasing in $T$, and hence $\exists t$ such that $T\geq t$ implies $\dl_2(\gu,\gm,\thl)=\dl_N(\gu)$. This result means that, efficient scheduling of the predictable multicast traffic results in the same diversity gain that will be obtained if the system sees only the unicast traffic.
\begin{figure}
	\centering
		\includegraphics[width=0.42\textwidth]{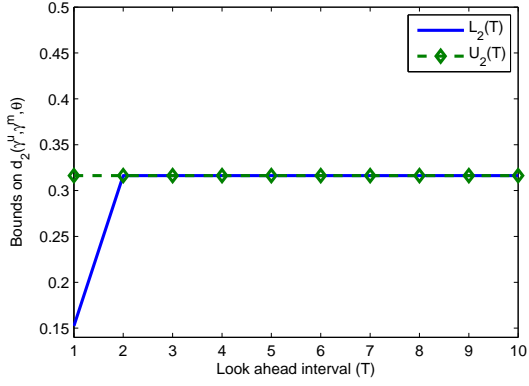}
	\caption{As $T$ increases, the system attains the same diversity gain of the non-predictive unicast network. In this figure, $\thl=0.7$, $\gm=0.9$ and $\gu=0.4$.}
	\label{fig:Journal_bounds_on_d2_vs_T}
\end{figure}
This result is clarified in Fig. \ref{fig:Journal_bounds_on_d2_vs_T} where the lower bound $\mathbb{L}_2$ increases in $T$ until it becomes dominated by $\dl_N(\gu)$ at $T=2$, and from this point on, $\mathbb{L}_2$ and $\mathbb{U}_2$ coincide and the diversity gain of the network is only determined by the non-predictive unicast traffic.    
\subsubsection{Scenario 3: Both Unicast and Multicast Traffics are Predictive}
In this scenario we assume that both traffics are predictable with the same look-ahead interval of $T$ slots. The scheduling policy we consider is EDF where requests are served in the order of their arrival.
\begin{theorem}
\label{th:S3}
Let the outage probability of the network in Scenario 2 under EDF scheduling policy be denoted by $P_3(\outa)$ and the optimal diversity gain of this scenario be denoted by $\dl_3(\gu,\gm,\thl)$, then
\begin{equation}
\label{eq:S3_l3}
\begin{split}
\dl_3(\gu,\gm,\thl)& \geq (T+1)\log y_2 -(T+1)\gu(y_2-1)\\
                   &\quad-\thl\log(1-\xm+\xm y_2)\\
                   &\triangleq \mathbb{L}_3.
\end{split}
\end{equation} 
\end{theorem}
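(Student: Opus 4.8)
The plan is to lower-bound the diversity gain by upper-bounding the outage probability $P_3(\outa)$ through a necessary condition on the outage event followed by a Chernoff (large-deviation) estimate. Since $\dl_3(\gu,\gm,\thl)=\lim_{C\to\infty}-\log P_3(\outa)/C$, any event $\mathcal{E}$ with $\outa\subseteq\mathcal{E}$ gives $P_3(\outa)\le P(\mathcal{E})$, so the decay rate of $P(\mathcal{E})$ is a valid lower bound on $\dl_3$. Crucially, only the Markov/Chernoff direction of large deviations is needed here, not the matching lower bound on the probability, which keeps the argument short.

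First I would derive the necessary condition, adapting the EDF argument of Lemma \ref{lem:Det_T} to the combined traffic. Because both streams share a deterministic look-ahead $T$ and EDF is work-conserving, an expiry at slot $n$ forces the total demand that must be cleared within a window of $T+1$ consecutive slots to exceed the aggregate capacity $(T+1)C$. The new ingredient is the multicast alignment property: all requests for a common data source falling in the window are served by a single unit of resource, so the effective multicast load over the window is the number of \emph{distinct} sources demanded, not the raw request count. Under the symmetric Poisson model a given source is absent from one slot with probability $e^{-\gm/\thl}$, hence present somewhere in $T+1$ independent slots with probability $\xm=1-e^{-(T+1)\gm/\thl}$, exactly the quantity in the statement. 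The distinct-source count is therefore $\mathrm{Binomial}(\thl C,\xm)$, the aggregate unicast over the window is $\mathrm{Poisson}((T+1)\gu C)$, and the two are independent.

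With $M\sim\mathrm{Binomial}(\thl C,\xm)$ and $U\sim\mathrm{Poisson}((T+1)\gu C)$ I would then write $P_3(\outa)\le P(U+M>(T+1)C)$ and apply the Chernoff bound $P(U+M>(T+1)C)\le e^{-\theta(T+1)C}\,E[e^{\theta U}]\,E[e^{\theta M}]$, the factorization being justified by independence. Using the Poisson MGF $E[e^{\theta U}]=\exp((T+1)\gu C(e^\theta-1))$ and the binomial MGF $E[e^{\theta M}]=(1-\xm+\xm e^\theta)^{\thl C}$, dividing the exponent by $C$, and optimizing over $\theta>0$ (equivalently $y=e^\theta>1$), the decay rate is at least
\begin{equation*}
\sup_{y>1}\Big[(T+1)\log y-(T+1)\gu(y-1)-\thl\log(1-\xm+\xm y)\Big].
\end{equation*}
Differentiating in $y$ and clearing denominators yields a quadratic whose positive root is precisely the $y_2$ of Theorem \ref{th:S2}; substituting it back produces $\mathbb{L}_3$, so $\dl_3\ge\mathbb{L}_3$. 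Note that $\mathbb{L}_3$ is literally the second entry of the $\min\{\cdot,\cdot\}$ in the Scenario-2 bound \eqref{eq:S2_l2}, which is why the same optimizer $y_2$ reappears: once the unicast stream is also predictable, the $\dl_N(\gu)$ bottleneck of Scenario 2 is lifted and only this term survives.

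The main obstacle I anticipate is the first step — establishing the necessary condition rigorously. One must track how EDF interleaves predictable unicast and aligned multicast requests over the window and argue that an outage forces the combined effective demand to exceed $(T+1)C$; the delicate point, absent from Lemma \ref{lem:Det_T}, is the bookkeeping that collapses repeated multicast demands for the same source into a single distinct-source indicator while preserving the independence across the $T+1$ slots on which the Chernoff factorization relies. The subsequent optimization is routine calculus, and the identification of the optimizer with $y_2$ is immediate from the structure of the Scenario-2 result.
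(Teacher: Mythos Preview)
Your approach is essentially the same as the paper's: it upper-bounds $P_3(\outa)$ by $P\bigl(\sum_{i=0}^{T}Q^u(n-i-T)+Z^m_T(n-T)>C(T+1)\bigr)$ via the EDF necessary condition together with the bounds $N^u(n-T)\le\sum_{i=0}^{T}Q^u(n-i-T)$ and $N^m(n-T)\le Z^m_T(n-T)$, then invokes the already-computed decay rate \eqref{eq:p2} from the Scenario-2 proof to obtain $\mathbb{L}_3$. The obstacle you flag is not delicate in practice, since $Z^m_T(n-T)$ is by definition the distinct-source count over a fixed $(T{+}1)$-slot window and is therefore $\mathrm{Binomial}(\thl C,\xm)$ independent of the unicast arrivals; the bookkeeping reduces to the single inequality $N^m(n-T)\le Z^m_T(n-T)$.
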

\begin{proof}
Please refer to Appendix \ref{app:S3}.
\end{proof}
In Scenario 3 one should expect that the optimal diversity gain should be the largest amongst the other three scenarios. To highlight this intuition, an upper bound will be established on the diversity gain of Scenario 4.

\subsubsection{Scenario 4: Unicast is Predictive but Multicast is Non-predictive}
Assuming that the unicast traffic is predictable with a look-ahead window of $T$ slots, and the multicast traffic is urgent.
\begin{theorem}
\label{th:S4}
Let the optimal diversity gain of Scenario 4 be denoted by $\dl_4(\gu,\gm,\thl)$ and the minimum possible outage probability be denoted by $P^*_4(\outa)$, then
\begin{equation}
\label{eq:S4_u4}
\begin{split}
\dl_4(\gu,\gm,\thl)& \leq \dl_1(\gu,\gm,\thl)+T\Bigl[2\log y_4-\gu(y_4-1)\\
                   &\quad -2\thl\log(1-A^m+A^m y_4)\Bigr]\\
                   &\triangleq \mathbb{U}_4, 
\end{split}
\end{equation}
where
\begin{equation*}
\begin{split}
y_4=\frac{1}{2\gu A^m}\Biggl[& \Bigl ((4\thl^2-4\thl(\gu+2)+(2-\gu)^2){A^m}^2+{\gu}^2\\
    &+(4\gu\thl-2{\gu}^2+4\gu)A^m\Bigr)^{\frac{1}{2}}\\
& +(-2\thl+\gu+2)A^m-\gu\Biggr].
\end{split}
\end{equation*}
\end{theorem}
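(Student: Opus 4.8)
The plan is to turn the claimed upper bound on the \emph{optimal} diversity gain into a \emph{lower} bound on the minimum achievable outage probability $P^*_4(\outa)$: it suffices to exhibit an event that forces an outage under \emph{every} scheduling policy and whose probability decays no faster than the rate $\mathbb{U}_4$. First I would fix a window $I=\{n-2T,\ldots,n\}$ of $2T+1$ consecutive slots and identify the requests that are \emph{locked into} $I$, i.e. whose entire feasible service interval lies inside $I$ so that no policy can serve them elsewhere. Since the multicast traffic is urgent, every multicast arrival in $I$ is locked in, contributing $\sum_{k=n-2T}^{n}S^m(k)$ (all $2T+1$ slots); since unicast is deferrable by exactly $T$ slots, a unicast arrival at $m$ has deadline $m+T$ and is locked into $I$ iff $m\in\{n-2T,\ldots,n-T\}$, contributing $\sum_{m=n-2T}^{n-T}Q^u(m)$ ($T+1$ slots). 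Because the capacity available in $I$ is $(2T+1)C$, if this locked-in work exceeds $(2T+1)C$ then a request must expire irrespective of the policy.

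Next I would manufacture such an overload as an intersection of independent sub-events living on disjoint slots, so that its probability factorizes and the associated rates add. Partition the $2T+1$ multicast slots and $T+1$ unicast slots of $I$ into one \emph{base} block (one multicast slot, one unicast slot) and $T$ \emph{bracket} blocks (two multicast slots, one unicast slot each). Set $B_0=\{S^m+Q^u>C\}$ and, for $i=1,\ldots,T$, $B_i=\{S^m_{i,1}+S^m_{i,2}+Q^u_i>2C\}$. On $B_0\cap\bigcap_{i=1}^{T}B_i$ the locked-in work exceeds $C+2TC=(2T+1)C$, forcing outage; and as the blocks use disjoint arrival slots they are independent, whence $P^*_4(\outa)\geq P(B_0)\prod_{i=1}^{T}P(B_i)$.

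Finally I would evaluate each factor with the sharp large-deviations (Cram\'er) estimates already used in the earlier appendices, via the log-MGFs $\log E[e^{sS^m}]=\thl C\log(1-A^m+A^m e^s)$ (a sum of $\thl C$ i.i.d.\ Bernoulli$(A^m)$) and $\log E[e^{sQ^u}]=\gu C(e^s-1)$ (Poisson). The base block is precisely the Scenario~1 outage event, so $-\frac1C\log P(B_0)\to\dl_1(\gu,\gm,\thl)$. For each bracket block, optimizing $2sC-2\thl C\log(1-A^m+A^m e^s)-\gu C(e^s-1)$ over the tilt $y=e^s$ gives the positive root $y_4$ of the associated quadratic (exactly the $y_4$ in the statement) and the per-block rate $2\log y_4-\gu(y_4-1)-2\thl\log(1-A^m+A^m y_4)$. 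Summing the rates yields $\dl_4(\gu,\gm,\thl)\le\dl_1(\gu,\gm,\thl)+T[\,2\log y_4-\gu(y_4-1)-2\thl\log(1-A^m+A^m y_4)\,]=\mathbb{U}_4$.

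The main obstacle is the combinatorial step: arguing rigorously that the overload over $I$ cannot be evaded by any policy. This hinges on the correct locked-in accounting --- all urgent multicast in $I$, but only the $T+1$ unicast slots whose deadlines also fall in $I$ --- and on placing the sub-events on disjoint arrival slots so that independence, and hence additivity of exponents, is legitimate. One must also invoke the large-deviations \emph{lower} bound, not merely the Chernoff upper bound, to lower-bound each $P(B_i)$; this is available from the earlier appendices since every block is a simple $\{\text{sum of i.i.d.}>\text{threshold}\}$ event to which Cram\'er's theorem applies with a matching exponent.
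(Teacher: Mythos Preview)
Your proposal is correct and follows essentially the same route as the paper. Both construct the identical sufficient-for-outage event---one ``base'' block $\{Q^u+S^m>C\}$ together with $T$ independent ``bracket'' blocks $\{Q^u+S^m+S^m>2C\}$ living on disjoint arrival slots---factorize by independence, identify the base block with the Scenario~1 event, and compute the bracket rate via the same tilted optimization giving $y_4$. Your window-capacity pigeonhole justification (locked-in work exceeding $(2T{+}1)C$ forces an expiry in $I$) is a slightly cleaner way of arguing sufficiency than the paper's operational ``unicast backlog builds up'' description, and you are right to emphasize that the large-deviations \emph{lower} bound (Cram\'er), not the Chernoff upper bound, is what is needed for each factor---the paper's appendix is loose on this point.
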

\begin{proof}
Please refer to Appendix \ref{app:S4}.
\end{proof}
\begin{figure}
	\centering
		\includegraphics[width=0.42\textwidth]{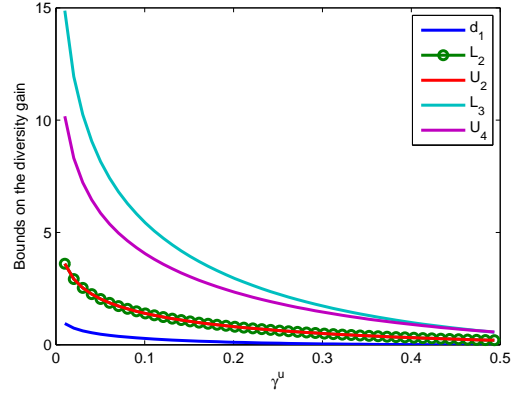}
	\caption{Bounds on the optimal diversity gain versus the unicast traffic factor $\gu$. In this figure, $\gm=0.9$, $\thl=0.7$ and $T=4$ for any predictive network.}
	\label{fig:Journal_scenarios_bounds}
\end{figure}

\begin{figure}
	\centering
		\includegraphics[width=0.42\textwidth]{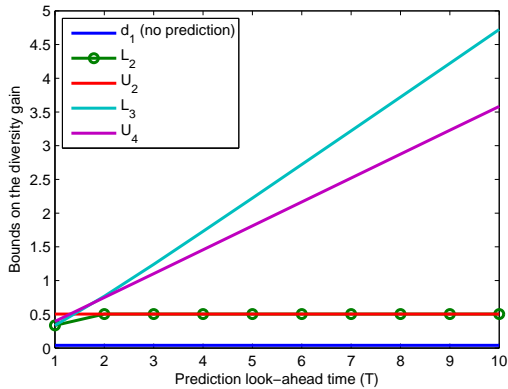}
	\caption{Bounds on the optimal diversity gain versus the prediction look-ahead time $T$. In this figure, $\gu=0.4$, $\gm=0.9$ and $\thl=0.7$.}
	\label{fig:Journal_scenarios_vs_T}
\end{figure}

To collectively compare the obtained bounds on the optimal diversity gain of the discussed scenarios, Fig.\ref{fig:Journal_scenarios_bounds} plots the different bounds obtained in the last four theorems versus $\gu$, where the range of $\gu$ ensures that \eqref{eq:stability} is satisfied, and hence the non-predictive network always sees a positive diversity gain.
It is clear from the figure that the totally predictive network (of Scenario 3) has the highest possible diversity gain as the lower bound $\mathbb{L}_3$ even exceeds the upper bound $\mathbb{U}_4$ on the entire range of plotted $\gu$. Also, it shows that $\mathbb{L}_2$ and $\mathbb{U}_2$ are coinciding at $\dl_N(\gu)$, and of course this is the best diversity gain that the network can achieve with unpredictable unicast traffic.

Also, Fig. \ref{fig:Journal_scenarios_vs_T} demonstrates the effect of the prediction look-ahead period $T$ on the derived bounds. It shows that both $\mathbb{L}_3$ and $\mathbb{U}_4$ are both increasing in $T$, and that as $T$ increases $\mathbb{L}_3$ exceeds $\mathbb{U}_4$ and $\mathbb{L}_2$ matches $\mathbb{U}_2$.


\section{Simulation Results}
\label{sec:num_res}
The analytical results obtained in this paper are demonstrated through numerical simulations in this section. The outage probability is quantified as the ratio of the number of slots that suffer expired requests to the total number of simulated slots. Each simulation result is obtained by averaging a $100$ sample paths each contains a $1000$ slots.  
\subsection{Diversity Gain of Deterministic and Random $T$ Scenarios}
\begin{figure}[ht]
  \centering
  \subfloat[Linear scaling regime: $\gl^p=0.8$.]{\label{fig:Paper_FR_T_Linear}\includegraphics[width=0.4\textwidth]{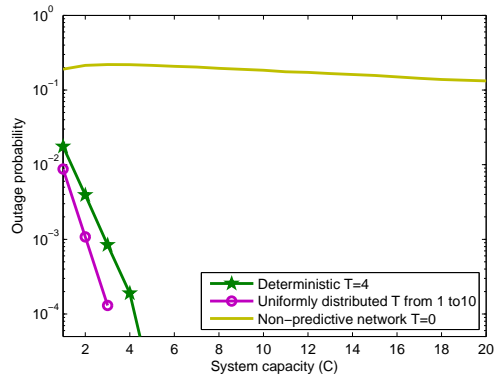}}\\   \subfloat[Polynomial scaling regime: $\gs^p=0.8$.]{\label{fig:Paper_sim_FR_T}\includegraphics[width=0.4\textwidth]{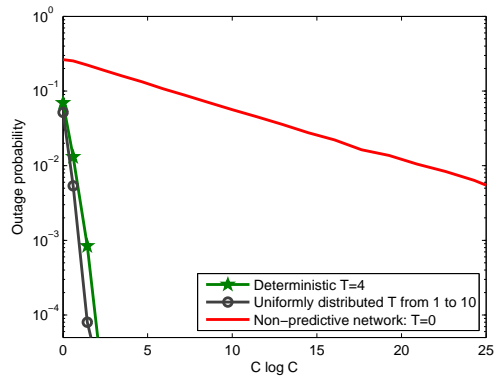}} 
  \caption{Outage probability is significantly improved by proactive networks.}
  \label{fig:Outage_proactive}
\end{figure} 

Fig. \ref{fig:Outage_proactive} compares the outage probability of proactive networks with different look-ahead schemes to the non-predictive network. The results obtained for the linear scaling regime are plotted versus $C$ in Fig. \ref{fig:Paper_FR_T_Linear} and for the polynomial scaling regime are plotted versus $C\log C$ in Fig. \ref{fig:Paper_sim_FR_T}. It is obvious from both figures that being proactive significantly enhances the outage probability performance at a given capacity, or it considerably reduces the required capacity to satisfy a given level of outage performance. This ascribes to the more flexibility given to the predictive network that allows it to schedule the arriving requests over a longer time horizon compared to the urgent demand of the non-predictive network. The effect of the distribution of random look-ahead prediction interval is demonstrated in Fig. \ref{fig:Dist_T} for both the linear and polynomial scaling regimes. 

\begin{figure}[ht]
  \centering
  \subfloat[Linear scaling regime: $\gl^p=0.6$.]{\label{fig:Journal_Outage_Random_T_Dist_Linear}\includegraphics[width=0.4\textwidth]{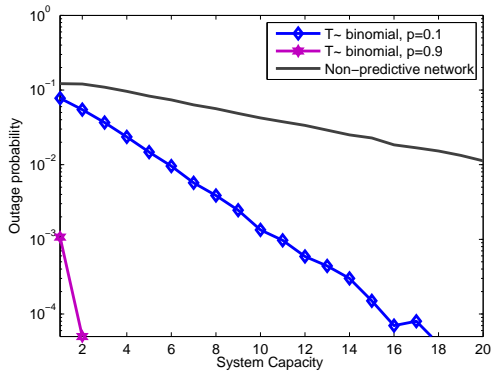}}\\   \subfloat[Polynomial scaling regime: $\gs^p=0.9$.]{\label{fig:Journal_Outage_Random_T_Dist_Sublinear}\includegraphics[width=0.4\textwidth]{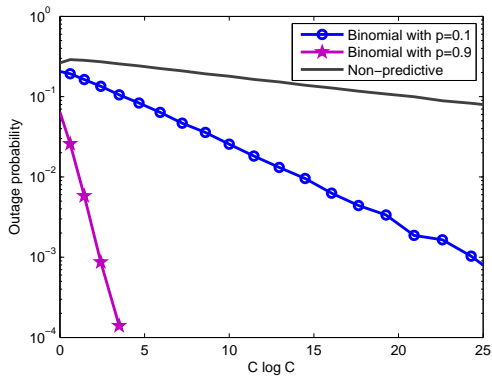}} 
  \caption{Outage probability is significantly improved by proactive networks.}
  \label{fig:Dist_T}
\end{figure} 
The predictive network in each regime is assumed to anticipate requests by a random period which varies between $\Tmin$ and $\Tmax$ where $\Tmin=0$ and $\Tmax=5$. We consider a general binomial distribution with parameter $p$, $0\leq p\leq 1$ to represent the PDF of the look-ahead interval. Hence, the probability that an arriving request at the beginning of time slot $n$ has a deadline at slot $n+T$, $\Tmin\leq T\leq \Tmax$, is given by
\begin{equation}
P(T_{q(n)}=T)=p_{T}=\binom{\Tmax}{T}p^{T}(1-p)^{\Tmax-T}.
\end{equation}  
We consider different values of $p$ in each regime in addition to the non-predictive network scenario. The obtained results for the linear scaling regime are shown in Fig. \ref{fig:Journal_Outage_Random_T_Dist_Linear} where at $p=0.1$, $\dl_{PR}(\gl)\geq\gl p_0-\log(\gl p_0)-1$, and $\dl_{PR}(\gl)=(\Tmax+1)(\gl-1-\log \gl)$ at $p=0.9$. The results of the polynomial scaling regime are shown in Fig. \ref{fig:Journal_Outage_Random_T_Dist_Sublinear}. Although the diversity gain is tantamount to that of the non-predictive network, it is clear from the figure that the outage probability is significantly improved. Here, we want to point out that diversity gain represents the asymptotic decay rate of  the outage probability with the system capacity (or $C\log C$), but it does not capture the relative difference between the outage probability curves themselves. This is why the curves show different trends at small values of $C$. After all, the figure shows that even if $\Tmin=0$ the network achieves a significantly better outage  performance when it follows a proactive resource allocation technique.

Finally, from Figs. \ref{fig:Journal_Outage_Random_T_Dist_Linear}, \ref{fig:Journal_Outage_Random_T_Dist_Sublinear}, we can roughly infer that as $p$ in creases, it is more likely to have arriving requests with larger prediction interval and hence the network gets more degrees of freedom in scheduling such requests in an efficient way that reduces the number of outage events.

\subsection{Two-QoS Network}
\begin{figure}[ht]
  \centering
  \subfloat[Linear scaling regime: $\gl^p=0.6$, $\gl^s=0.1$.]{\label{fig:Journal_S_outage_selfish_primary_linear}\includegraphics[width=0.4\textwidth]{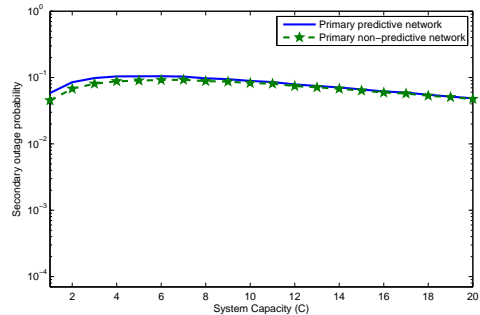}} \\              
  \subfloat[Polynomial scaling regime: $\gs^p=0.75$, $\gs^s=0.05$.]{\label{fig:Journal_S_outage_selfish_primary_sublinear}\includegraphics[width=0.4\textwidth]{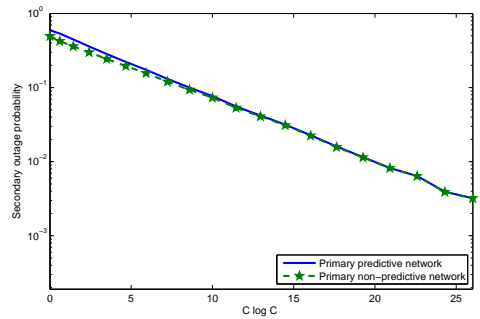}}
  
  \caption{Selfish primary predictive network cannot improve the outage probability of the secondary.}
  \label{fig:Outage_not_improved}
\end{figure}

Fig. \ref{fig:Outage_not_improved} demonstrates the result \eqref{eq:ineq_outage_2} for both the linear scaling and polynomial scaling regimes. The simulation is run assuming $10^3$ time slots and averaged over $10^2$ sample paths. For the selfish predictive primary network, we assume that $T=4$ and the primary requests are served according to EDF. The results of the linear scaling regime are depicted in Fig. \ref{fig:Journal_S_outage_selfish_primary_linear}, whereas that of the polynomial scaling regime are depicted in Fig. \ref{fig:Journal_S_outage_selfish_primary_sublinear}.

\begin{figure}
	\centering
		\includegraphics[width=0.40\textwidth]{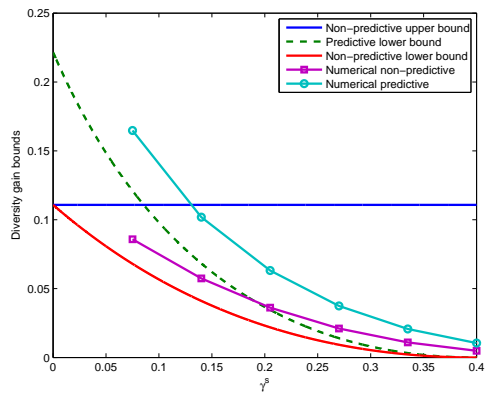}
	\caption{Improvement in the diversity gain of the secondary network under predictive primary with $T=1$ and dynamic capacity assignment. Considered in the figure is the linear scaling regime with $\gl^p=0.6$. The lower bound on $\dl_P(\gl^p,\gl^s)$ is shown in red, and obviously it strictly exceeds the upper bound on $\dl_N(\gl^p,\gl^s)$ determined in Theorem \ref{th:Div_sec_non_pred} plotted in blue.}
	\label{fig:Improvement_in_dPs_T1}
\end{figure}
Figure \ref{fig:Improvement_in_dPs_T1} shows the potential improvement in the diversity gain of the secondary network by efficient use of prediction at the primary side only. Also, simulation results and analytical results are plotted together on the same figure to show the relative differences.

\begin{figure}[ht]
  \centering
  \subfloat[Linear scaling regime: $\gl^p=0.6$, $\gl^s=0.1$.]{\label{fig:Journal_dynamic_capacity_linear}\includegraphics[width=0.4\textwidth]{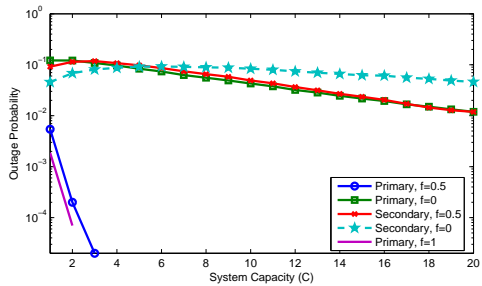}}\\
  \subfloat[Polynomial scaling regime: $\gs^p=0.75$, $\gs^s=0.05$.]{\label{fig:Journal_dynamic_capacity_sublinear}\includegraphics[width=0.4\textwidth]{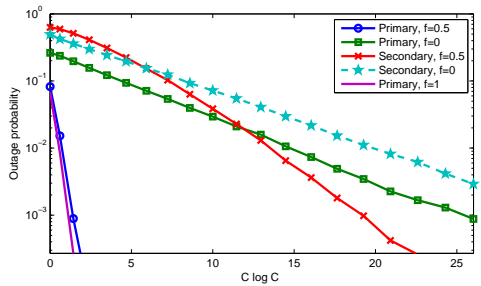}}
  
  \caption{Primary predictive network can tolerate a trivial loss in outage probability at a significant improvement in the secondary outage probability.}
  \label{fig:Dynamic_primary_capacity}
\end{figure}

The performance of the dynamic-primary-capacity scheme, has been evaluated numerically and plotted in Fig. \ref{fig:Dynamic_primary_capacity} for different values of $f$ and under the two scaling regimes, namely, the linear scaling in Fig. \ref{fig:Journal_dynamic_capacity_linear} and the polynomial scaling in Fig. \ref{fig:Journal_dynamic_capacity_sublinear}. The prediction interval is chosen to be $T=4$ and at each slot $n$, the primary network is assumed to serve the $C^p(n)$ primary requests according to EDF policy. For the two schemes, the selfish primary network, at $f=1$, results in the smallest primary outage probability, while at $f=0.5$, the primary outage probability is slightly increased beyond the selfish case, but the secondary outage probability outperforms its counterpart of the non-predictive primary network obtained at $f=0$. It is clear from the figures that at $f=0.5$ the secondary outage probability achieves the primary outage probability of the primary non-predictive network at $f=0$ in the linear scaling regime, and is even better in the polynomial scaling regime. The simulation is for $10^3$ time slots averaged over $10^2$ sample paths.

\subsection{Proactive Multicasting with Symmetric Demands}
\begin{figure}
	\centering
		\includegraphics[width=0.45\textwidth]{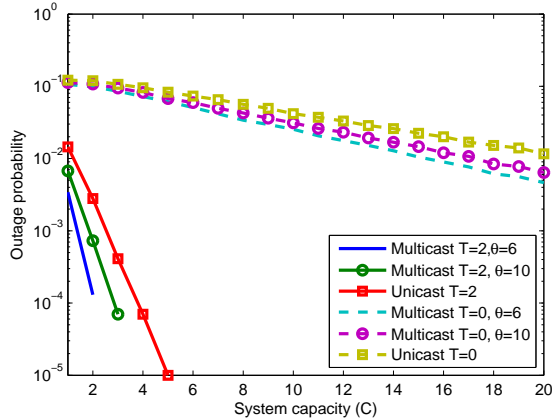}
	\caption{Outage probability versus $C$. In this simulation, $\gu=0.6$.}
	\label{fig:Journal_outage_multi_uni_pred_linear}
\end{figure}

The outage probability of the predictive multicast and unicast networks of the symmetric input traffic is compared numerically to that of non-predictive network and is plotted in Fig. \ref{fig:Journal_outage_multi_uni_pred_linear}. The figure shows the significant enhancement to the outage probability of the multicast network when prediction is employed. Moreover, we can see that the outage probability of the unicast predictive network is better than that of the multicast non-predictive network. The impact of $\thl$ also appears clearly, as it can easily be noticed that as the $\thl$ decreases, the outage performance is enhanced even for the same value of $T$. When $\thl\to\infty$ the multicast curves coincide on the unicast as shown in Section \ref{sec:Multicast}. 

\section{Conclusion and Discussion}
\label{sec:Conc}
We have proposed a novel paradigm for wireless resource allocation which exploits the predictability of user behavior to minimize the spectral resources (e.g., bandwidth) needed to achieve certain QoS metrics. Unlike the traditional reactive resource allocation approach in which the network can only start serving a particular user request upon its initiation, our proposed scheme anticipates future requests. This grants the network more flexibility in scheduling those potential requests over an extended period of time. By adopting the outage (blocking) probability as our QoS metric, we have established the potential of the proposed framework to achieve significant spectral efficiency gains in several interesting scenarios.

More specifically, we have introduced the notion of prediction diversity gain and used it to quantify the gain offered by the proposed resource allocation algorithm under different assumption on the performance of the traffic prediction technique. Moreover, we have shown that, in the cognitive network scenario, prediction at one side only does not only enhance its diversity gain, but it also improves the diversity gain performance of the other user class. On the multicasting front, we have shown that the diversity gain of predictive multicast network scales super-linearly with the prediction window.
Our theoretical claims were supported by numerical results that demonstrate the remarkable gains that can be leveraged from the proposed techniques. 

We believe that this work has only scratched the surface of a very interesting research area which spans several disciplines and could potentially have a significant impact on the design of future wireless networks. In fact, one can immediately identify a multitude of interesting research problems at the intersection of information theory, machine learning, behavioral science, and networking. For example, the analysis have focused on the case of { fixed supply and variable demand}. Clearly, the same approach can be used to { match} demand with supply under more general assumptions on the two processes.


\appendices
\section{Proof of Theorem \ref{th:diversity_reac}}
\label{app:non_pred}
Let $\Lambda_Q(r)$ denote the log moment generating function \cite{Big_Queues} of a Poisson random variable $Q(n), n>0$ with mean $\lambda$, i.e., $$\Lambda_Q(r)=\lambda(e^r-1),\quad r\in\mathbb{R}.$$
For the linear scaling regime, let $\Xl_i,$ $i=1,2,\cdots$ be a sequence of independent and identically distributed (IID) random variables, each with a Poisson distribution with mean $\gl$, and define $$\Sl_C\triangleq\sum_{i=1}^{C}{\Xl_i}.$$ The outage probability, $\PlN$, can then be written as
\begin{equation}
\label{eq:Pl_N_out}
\begin{aligned}
 \PlN&=P(\Ql(n)>C)\\
            &=P\left(\frac{\Sl_C}{C}>1\right)
\end{aligned}            
\end{equation}
Applying Cramer's theorem \cite{Big_Queues} to \eqref{eq:Pl_N_out}, we get
\begin{equation}
\label{eq:Cram_1}
\lim_{C\rightarrow\infty} \frac{1}{C} \log P\left(\frac{\Sl_C}{C}>1\right)=\inf_{r>0}\left\{ \Lambda_X(r)-r\right\},
\end{equation}
where $\Lambda_X(r)=\gl(e^r-1)$. By the convexity of the log moment generating function, we obtain $$\inf_{r>0} \left\{\Lambda_X(r)-r\right\}=1-\gl+\log \gl.$$
Then, it follows that
\begin{equation}
\label{eq:dl_N}
\begin{split}
\dl_N(\gl)&=-\lim_{C\rightarrow\infty} \frac{\log(\Pl)}{C}\\
          &=\gl-1-\log\gl, \quad 0<\gl<1.
\end{split}
\end{equation}

For the polynomial scaling regime, we determine the diversity gain using tight lower and upper bounds.
First, the outage probability is given by
\begin{align}
\label{eq:Ps_N_out}
\PsN&=P(\Qs(n)>C) \\
           &=\sum_{k=C+1}^{\infty}{\frac{(C^{\gs})^k}{k!} e^{- C^{\gs}}} \nonumber \\
           &\geq \frac{(C^{\gs})^{(C+1)}}{(C+1)!}e^{-C^{\gs}} \nonumber.
\end{align} 
Using Stirling's formula to approximate the factorial function, we have $$(C+1)!\doteq\sqrt{2\pi(C+1)}\left(\frac{C+1}{e}\right)^{(C+1)},$$ where $\doteq$ means that the left hand side approaches the right hand side in the limit as $C\rightarrow\infty$. Hence,
\begin{multline*}
\lim_{C\rightarrow\infty}-\frac{\log\PsN}{C\log C}\leq \\ \lim_{C\rightarrow\infty} -\frac{1}{C\log C} \log\left(\frac{e^{-C^{\gs}}}{\sqrt{2\pi(C+1)}}\left(\frac{C^{\gs} e}{C+1}\right)^{C+1}\right)
\end{multline*}
Therefore,
\begin{equation}
\label{eq:ds_N_upper}
\ds_N(\gs)\leq 1-\gs.
\end{equation}
Second, applying tightest Chernoff bound \cite{Big_Queues} on \eqref{eq:Ps_N_out}, we have
\begin{equation}
\label{eq:Chernoff_1}
P(\Qs(n)>C)\leq \inf_{r>0} e^{\Lambda_{\Qs}(r)-rC}
\end{equation}
where $\Lambda_{\Qs}(r)=C^{\gs}(e^r-1)$. And since $\Lambda_{\Qs}(r)-r$ is convex in $r$, by simple differentiation, we get
\begin{equation}
\label{eq:Chernoff_2}
\PsN\leq  e^{C-C^{\gs}-(1-\gs)C\log C}.
\end{equation}
Now, taking the logarithm of both sides of \eqref{eq:Chernoff_2}, dividing by $-C\log C$, and letting $C\rightarrow\infty$, it follows that
\begin{equation}
\label{eq:ds_N_lower}
\ds_N(\gs)\geq 1-\gs.
\end{equation}
By \eqref{eq:ds_N_upper}, \eqref{eq:ds_N_lower}, $$1-\gs\leq \ds_N(\gs) \leq 1-\gs,$$ then
\begin{equation}
\label{eq:ds_N}
\ds_N(\gs)=1-\gs, \quad 0<\gs<1.
\end{equation}

\section{Proof of Lemma \ref{lem:Det_T}}
\label{app:Det_T}
For $\mathcal{U}_D$, we need to show that the outage occurring at time slot $n$ implies $\sum_{i=0}^{T}{Q(n-T-i)}>C(T+1)$. To see this, assume there is an outage at slot $n$. Since in our scenario EDF reduces to FCFS, then: 1) the outage at slot $n$ occurs only on the arrivals of slot $n-T$ and 2) during the interval of slots $n-T, n-T+1, \cdots, n$, the system does not serve any of the arriving requests at slots beyond $n-T$. Let $N(m), m>0$ denote the number of requests in the system at the beginning of slot $m$, then having an outage at slot $n$ implies $N(n-T)>C(T+1)$. And since at any slot $m>0$, there are no requests in the system arriving at slots prior to $m-T$, it follows that $\sum_{i=0}^{T} Q(n-T-i)\geq N(n-T)>C(T+1)$.

For $\mathcal{L}_D$, we need to show that $Q(n-T)>C(T+1)$ implies an outage at slot $n$. This is straightforward as the arrivals at slot $n-T$ can not remain in the system at any slot beyond $n$, furthermore, since $Q(n-T)>C(T+1)$, the capacity of the system at the slot of arrival in addition to the next $T$ slots is not sufficient to serve the $Q(n-T)$ requests, hence the system encounters an outage at slot $n$.

\section{Proof of Theorem \ref{th:diversity_proac_d}}
\label{app:div_pred_d}
For the linear scaling regime, we have from Lemma \ref{lem:Det_T}, $\Pl_{PD}\leq P({\overline{\mathcal{U}_D}})$, hence,
\begin{equation}
\PlPd\leq P\left(\sum_{i=0}^{T} \Ql(n-T-i)>C(T+1)\right).
\end{equation}
Using the same definition of the sequence of IID random variables $X_i, i>0$ as in the proof of Theorem \ref{th:diversity_reac}, we have $\Sl_{C(T+1)}=\sum_{i=1}^{C(T+1)}{X_i}$ and 
\begin{equation}
P\left(\sum_{i=0}^{T} {\Ql(n-T-i)}>C(T+1)\right)=P\left(\frac{\Sl_{C(T+1)}}{C(T+1)}>1\right).
\end{equation}
Using Cramer's theorem,
\begin{equation}
\lim_{C\rightarrow\infty} -\frac{\log P(\overline{\mathcal{U}_D})}{C(T+1)}=\gl-1-\log \gl.
\end{equation}
Since $P^*_{PD}(\overline{\out})\leq\PlPd\leq P(\overline{\mathcal{U}_D})$, we have
\begin{equation}
\begin{split}
\lim_{C\rightarrow\infty}-\frac{\log P^*_{PD}(\overline{\out})}{C}&\geq\lim_{C\rightarrow\infty} -\frac{\log P(\overline{\mathcal{U}_D})}{C}\\
                                                        &=(T+1)(\gl-1-\log \gl),
\end{split}
\end{equation}
for which \eqref{eq:dl_Pd_lower} follows.

For the polynomial scaling regime, first we use the upper bound $\PsPd\leq P(\widetilde{\mathcal{U}_D})$ to establish a lower bound on the optimal diversity gain $\ds_{PD}(\gs)$ as follows.
Using Chernoff bound on $P(\widetilde{\mathcal{U}_D})$,
\begin{equation}
\begin{split}
\PsPd&\leq P\left(\sum_{i=0}^{T}\Qs(n-T-i)>C(T+1)\right)\\
              &\leq \inf_{r>0} e^{(T+1)\Lambda_{\Qs}(r)-C(T+1)r},
\end{split}
\end{equation}
where $\Lambda_{\Qs}(r)=C^{\gs}(e^{r}-1)$. Then, using differentiation,
\begin{equation}
\PsPd\leq e^{(T+1)(C-C^{\gs})-(T+1)(1-\gs)C\log C}.
\end{equation}
And since $P^*_{PD}(\widetilde{\out})\leq \PsPd$, we get
\begin{equation}
\label{eq:ds_Pd_lower}
\ds_{PD}(\gs)\geq (1+T)(1-\gs).
\end{equation}
Second, we use the lower bound $\PsPd\geq P(\widetilde{\mathcal{L}_D})$ to establish an upper bound on $\ds_{PD}(\gs)$.
\begin{equation*}
\begin{split}
P(\widetilde{\mathcal{L}_D})&=P(\Qs(n-T)>C(T+1))\\
                  &=\sum_{k=C(T+1)+1}^{\infty}{\frac{(C^{\gs})^k}{k!} e^{- C^{\gs}}}\\
                  &\geq \frac{(C^{\gs})^{(C(T+1)+1)}}{(C(T+1)+1)!}e^{-C^{\gs}}\\
                  &\doteq \frac{e^{-C^{\gs}}}{\sqrt{2\pi(C(T+1)+1)}}\left(\frac{C^{\gs} e}{C(T+1)+1}\right)^{C(T+1)+1}
\end{split}
\end{equation*}
And since $P^*_{PD}(\widetilde\out)\leq\PsPd$, we obtain
\begin{equation}
\label{eq:ds_Pd_upper}
\ds_{PD}(\gs)\leq (1+T)(1-\gs).
\end{equation}
By \eqref{eq:ds_Pd_lower}, \eqref{eq:ds_Pd_upper}, it follows that $$\ds_{PD}(\gs)=(1+T)(1-\gs),\quad 0<\gs<1.$$

\section{Proof of Lemma \ref{lem:Random_T}}
\label{app:Random_T}
First, we show that $\Ur$ is a necessary condition for the outage event, that is, if an outage occurs at slot $n$, then $\Ur=\I\cup\J$ occurs. Suppose there is an outage at slot $n$. This outage occurs on the arrivals, $Q_{k}(n-k), \quad k=\Tmin,\cdots,\Tmax$, hence, $\sum_{i=0}^{\Tmin}N_{i}(n-\Tmin)>C(\Tmin+1)$, i.e., in the interval $n-\Tmin, \cdots, n$ the system is serving requests with deadlines not exceeding $n$.

Event $\I$ represents the case when at slot $n-\Tmax$, the number of requests in the system in addition to the requests that will arrive with deadlines not larger than $n$ is larger than $C(\Tmax+1)$, i.e., larger than the maximum number of requests that the system can serve in the subsequent $\Tmax+1$ slots (Fig. \ref{fig:Lem2_a} shows the requests considered in event $\I$ as blue circles for $\Tmin=1$, $\Tmax=3$.). However, event $\I$ alot is not a necessary condition for an outage as, for instance, we may have $Q_{\Tmin}(n-\Tmin)>C(\Tmin+1)$ but $\sum_{j=0}^{\Tmax}\sum_{i=\Tmin}^{\Tmax}Q_i(n-i-j)<C(\Tmax+1)$.

Now, suppose that $I$ did not occur because of the outage at slot $n$, then there exists at least one slot $n-l$, $\Tmin<l\leq\Tmax$ such that $\sum_{i=0}^{l}N_i(n-l)\leq C$ (Otherwise, the system will be serving requests with deadline of at most $n$ in slots $n-\Tmax, \cdots, n-\Tmin$ which implies $n\in\I$.). In other words, at slot $l$, the system will be empty of all requests that have deadlines not beyond slot $n$. Let $$l^*=\min\left\{l:\sum_{i=0}^{l}N_i(n-l)\leq C, \quad \Tmin<l\leq\Tmax\right\},$$ then $\sum_{j=\Tmin}^{l^*-1}\sum_{i=\Tmin}^{j}Q_i(n-j)>Cl^*$, hence $J$ occurs. That is, all of the arriving requests in slots $n-l^*+1, \cdots, n-\Tmin$ with deadlines not beyond $n$ are more than $Cl^*$ (Fig. \ref{fig:Lem2_b} shows the case when event $\I$ is not occurring while $l^*=3$.).

\begin{figure}[ht]
  \centering
  \subfloat[Blue circles represent an upper bound on the requests that \textbf{must} be served by slot $n$ in the interval of slots $n-\Tmax, \cdots, n$. Red circles represent requests that are no longer in the system at slot $n-\Tmax$ whereas white circles represent requests with deadline larger than $n$.]{\label{fig:Lem2_a}\includegraphics[width=0.4\textwidth]{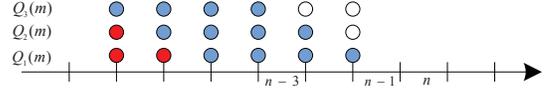}}\\                
  \subfloat[Here, event $\I$ is not satisfied. At slot $n-3$ the system has managed to serve all requests with deadlines not exceeding $n$. However, $l^*=3$, meaning that all of the next arrivals with deadlines not exceeding $n$ will consume the whole system capacity till slot $n$ inclusive.]{\label{fig:Lem2_b}\includegraphics[width=0.4\textwidth]{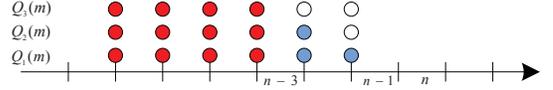}}
  
  \caption{An outage occurs at slot $n$ where $\Tmin=1$, $\Tmax=3$. At the beginning of any time slot, arriving requests with the same deadline are represented by a circle.}
  \label{fig:Example_Random_T}
\end{figure}

Second, we show that $\Lr$ is a sufficient condition on the outage event. The proof is straightforward as for every $k$, $\Tmin\leq k\leq\Tmax$, the event that $\sum_{i=\Tmin}^{k}Q_i(n-i)>C(k+1)$ means the number of requests that must be served in the interval $n-k, \cdots, n$ is larger than $C(k+1)$ which is sufficient to cause an outage at slot $n$. Then, taking the union over all $k\in\{\Tmin, \cdots, \Tmax\}$ is also a sufficient condition for an outage at slot $n$.  

\section{Proof of Theorem \ref{th:Random_T}}
\label{app:R_T}
\begin{equation*}
\begin{split}
P_{PR}(\out) &\leq P(\Ur)\\
             &\leq P\left(\sum_{j=0}^{\Tmax}\sum_{i=\Tmin}^{\Tmax}Q_i(n-j-i)>C(\Tmax+1)\right)\\
             &\quad +\sum_{k=\Tmin}^{\Tmax-1}P\left(\sum_{j=\Tmin}^{k}\sum_{i=\Tmin}^{j}Q_i(n-j)>C(k+1)\right).\\
             &\leq \inf_{r_I>0}e^{\Lambda_{Q_I}(r_I)-r_IC(\Tmax+1)}\\
             &\quad +\sum_{k=\Tmin}^{\Tmax-1}\inf_{r_k>0}e^{\Lambda_{Q_k}(r_k)-r_kC(k+1)}
\end{split}
\end{equation*}
where $\Lambda_{Q_I}(r_I)=\lambda (\Tmax+1)(e^{r_I}-1)$ and $\Lambda_{Q_k}(r_k)=\lambda\sum_{i=0}^{k-\Tmin}F_{k-i}$, $\Tmin\leq k\leq\Tmax-1$.

For the linear scaling regime,
\begin{equation*}
\begin{split}
\PlPr&\leq e^{(1-\gl)C(\Tmax+1)+C(\Tmax+1)\log\gl}\\
              &\quad +\sum_{k=\Tmin}^{\Tmax-1} e^{C(k+1)\left[1-\frac{\gl \sum_{i=0}^{k-\Tmin}F_{k-i}}{k+1}+\log\frac{\gl \sum_{i=0}^{k-\Tmin}F_{k-i}}{k+1}\right]}.
\end{split}
\end{equation*}
Let
\begin{multline*}
\vl(C)\triangleq\max_{\Tmin\leq k\leq \Tmax-1}\left\{C(k+1)\left[1-\frac{\gl \sum_{i=0}^{k-\Tmin}F_{k-i}}{k+1} \right.\right. \\
\left.\left.+\log\frac{\gl \sum_{i=0}^{k-\Tmin}F_{k-i}}{k+1}\right]\right\}
\end{multline*}
and
\begin{equation*}
\ml(C)=\max\left\{C(\Tmax+1)(1-\gl+\log\gl), \vl(C)\right\},
\end{equation*}
then
\begin{align}
\dl_{PR}(\gl)&\geq\lim_{C\rightarrow\infty}-\frac{1}{C}\log e^{\ml(C)} \nonumber \\
             &=\lim_{C\rightarrow\infty} -\frac{\ml(C)}{C} \nonumber \\
             &=\min\{(\Tmax+1)(\gl-1-\log \gl), \vl_*\}
\end{align}
which proves \eqref{eq:dl_Pr_Lower}.

For the polynomial scaling regime,
\begin{equation*}
\begin{split}
\PsPr&\leq e^{(\Tmax+1)(C-C^{\gs}-C\log C^{1-\gs})}\\
              &\quad+\sum_{k=\Tmin}^{\Tmax-1}e^{C(k+1)\left[1-\frac{\sum_{i=0}^{k-\Tmin}{F_{k-i}}}{C^{1-\gs}(k+1)}+\log\frac{\sum_{i=0}^{k-\Tmin}F_{k-i}}{C^{1-\gs}(k+1)}\right]}.
\end{split}
\end{equation*}
Let
\begin{multline}
\label{eq:vs}
\vs(C)=\max_{\Tmin\leq k\leq \Tmax-1}\left\{C(k+1)\left[1-\frac{\sum_{i=0}^{k-\Tmin}{F_{k-i}}}{C^{1-\gs}(k+1)} \right.\right. \\ \left.\left.+\log\frac{\sum_{i=0}^{k-\Tmin}F_{k-i}}{C^{1-\gs}(k+1)}\right]\right\}
\end{multline}
and
\begin{equation*}
\ms(C)\triangleq \lim_{C\to\infty}-\frac{\ms(C)}{C\log C},
\end{equation*}
for large values of $C$, the terms in the $\max\{.\}$ of \eqref{eq:vs} are decreasing in $k$, hence
\begin{equation}
\label{eq:ds_Pr_Lower}
\ds_{PR}(\gs)\geq (\Tmin+1)(1-\gs).
\end{equation}

Then, we use the event $\Lr$ with the polynomial scaling as follows.
\begin{equation*}
\begin{split}
\PsPr&\geq P(\widetilde{\Lr})\\
              &\geq \max_{\Tmin\leq k\leq \Tmax}\left\{P\left(\sum_{i=\Tmin}^{k}\Qs_i(n-i)>C(k+1)\right)\right\}\\
              &\geq \max_{\Tmin\leq k\leq \Tmax}\frac{\left(F_kC^{\gs}\right)^{C(k+1)+1}}{C(k+1)+1!}e^{-F_kC^{\gs}}\\
              &\doteq \max_{\Tmin\leq k\leq \Tmax} \left(\frac{F_kC^{\gs}e}{C(k+1)+1}\right)^{C(k+1)+1}\\
              &\quad\times\frac{e^{-F_k C^{\gs}}}{\sqrt{2\pi (C(k+1)+1)}}\\
              &= \left(\frac{F_k C^{\gs}e}{C(\Tmin+1)+1}\right)^{C(\Tmin+1)+1}\\
              &\quad\times\frac{e^{-p_{\Tmin}C^{\gs}}}{\sqrt{2\pi (C(\Tmin+1)+1)}}.
\end{split}
\end{equation*}
Hence,
\begin{equation}
\label{eq:ds_Pr_Upper}
\ds_{PR}(\gs)\leq (\Tmin+1)(1-\gs).
\end{equation}
From \eqref{eq:ds_Pr_Lower} and \eqref{eq:ds_Pr_Upper}, result \eqref{eq:ds_Pr} follows.

\section{Proof of Theorem \ref{th:Div_sec_non_pred}}
\label{app:Div_sec_non_pred}
Let the outage probability of the secondary user while the primary network is non-predictive be denoted by $P_N(\out^s)$, then
\begin{equation}
\label{eq:P_N^s}
P_N(\out^s)=P(Q^p(n)+Q^s(n)>C, Q^s(n)>0).
\end{equation}
Since $Q^p(n)+Q^s(n)$ and $Q^s(n)$ are two dependent random variables, we use upper and lower bounds on $P_N^s(\mathcal{O})$ to characterize $d_N^s(\gamma^p,\gamma^s)$ as follows.
\begin{align}
P_N(\out^s)&= P(Q^p(n)+Q^s(n)>C|Q^s(n)>0)P(Q^s(n)>0) \nonumber \\
           &\overset{(a)}{\geq} P(Q^p(n)>C|Q^s(n)>0)P(Q^s(n)>0) \nonumber \\
\label{eq:P_N^s_lower}           
           &\overset{(b)}{=} P(Q^p(n)>C)P(Q^s(n)>0),
\end{align}
where (a) follows from the fact that $Q^s(n)\geq 0$ and (b) follows as $Q^p(n)$ and $Q^s(n)$ are independent. 
Moreover, since $P(\mathcal{A},\mathcal{B})\leq P(\mathcal{A})$, then, from \eqref{eq:P_N^s}, we can write
\begin{equation}
\label{eq:P_N^s_upper}
P_N(\out^s)\leq P(Q^p(n)+Q^s(n)>C).
\end{equation}
For the linear scaling regime, we have $\bar{\lambda}^p=\gl^pC$ and $\bar{\lambda}^s=\gl^sC$. From \eqref{eq:a1}, \eqref{eq:a2} we obtain $0<\gl^s<\gl^p<1$ and $\gl^s+\gl^p<1$.
From \eqref{eq:P_N^s_lower},
\begin{equation*}
\begin{split}
\PlNs &\geq  P(\Ql^p(n)>C) P(\Ql^s(n)>0)\\
             &=     P(\Ql^p(n)>C) \left(1-e^{-\gl^sC}\right).
\end{split}
\end{equation*}
Hence
\begin{equation}
\begin{split}
\dl_N^s(\gl^p,\gl^s)&\leq \lim_{C\rightarrow\infty} \frac{-\log P(\Ql^p(n)>0)}{C}-\frac{\log\left(1-e^{-\gl^sC}\right)}{C}\\
                    &\overset{(c)}{=} \gl^p-1-\log(\gl^p),
\end{split} 
\end{equation}
where (c) follows by Cramer's theorem. This proves \eqref{eq:dl_N^s_upper}.
Since $\Ql^p(n)$, $\Ql^s(n)$ are independent Poisson random variables, then $\Ql^p(n)+\Ql^s(n)$ is a Poisson process with rate $(\gl^p+\gl^s)C$.  Applying Cramer's theorem to \eqref{eq:P_N^s_upper}, we obtain
\begin{equation*}
\dl_N^s(\gl^p,\gl^s) \geq (\gl^p+\gl^s)-1-\log(\gl^p+\gl^s)
\end{equation*}
which proves \eqref{eq:dl_N^s_lower}.

For the polynomial scaling regime, $\tilde{\lambda}^p=C^{\gs^p}$, $\tilde{\lambda}^s=C^{\gs^s}$. From \eqref{eq:a1}, \eqref{eq:a2}, we get $0<\gs^s<\gs^p<1$.
From \eqref{eq:P_N^s_upper},
\begin{equation*}
\begin{split}
\PsNs&\geq P(\Qs^p(n)>C)P(\Qs^s(n)>0)\\
             &= P(\Qs^p(n)>C)(1-e^{-C^{\gs^s}})\\
             &\geq \frac{C^{\gs^p{(C+1)}}}{(C+1)!}e^{-C^{\gs^p}}\left(1-e^{-C^{\gs^s}}\right)\\
             &\doteq \left(\frac{C^{\gs^p}e}{C+1}\right)^{C+1} \frac{e^{-C^{\gs^p}}}{\sqrt{(2\pi (C+1))}}\left(1-e^{-C^{\gs^s}}\right).
\end{split}
\end{equation*}
Hence,
\begin{equation}
\label{eq:ds_N^s_upper}
\begin{split}
\ds_N^s(\gs^p,\gs^s)&\leq \lim_{C\rightarrow\infty} \frac{-\log P(\Qs^p(n)>C)}{C\log C}-\frac{\left(1-e^{-C^{\gs^s}}\right)}{C\log C}\\
                    &\leq 1-\gs^p.
\end{split}
\end{equation}
From \eqref{eq:P_N^s_upper}, we obtain, using tightest Chernoff bound,
\begin{equation}
\PsNs\leq \inf_{r>0} e^{\Lambda_{\Qs^s+\Qs^p}(r)-rC},
\end{equation}
where $\Lambda_{\Qs^p+\Qs^r}(r)=(C^{\gs^p}+C^{\gs^s})(e^r-1)$. Then it follows that,
\begin{equation}
\label{eq:ds_N^s_lower}
\begin{split}
\ds_N^s(\gs^p,\gs^s)&\geq 1-\max\{\gs^p,\gs^s\}\\
                    &=1-\gs^p.
\end{split}
\end{equation}
From \eqref{eq:ds_N^s_upper} and \eqref{eq:ds_N^s_lower}, the result \eqref{eq:ds_N^s} follows.

\section {Proof of Theorem \ref{th:dynamic}}
\label{app:dynamic}
Let the outage probability of the primary network under the dynamic scheduling policy be denoted by $P_P(\out^p)$. To upper bound this outage probability, it suffices to show that $f\in[0.5,1]$ implies $P_P(\out^p)\leq P(\mathcal{U}_D)$, where $\mathcal{U}_D$ is as defined in Lemma \ref{lem:Det_T}. So, suppose that there is an outage at slot $n$, hence, according to the dynamic policy, $C^p(n)=C$ as $N_0^p(n)>C$. Moreover, that outage is occurring on $Q^p(n-T)$.

Now, at time slot $n-1$, assume towards contradiction that $C^p(n-1)<C$, then $f N_1(n-1)<C$. This must lead to $N_0(n)\leq (1-f) N_1(n-1)<C$ as $1-f\leq f$, $f\in[0.5,1]$, which is a contradiction. Therefore, $C^p(n-1)=C$.

Since the EDF nature of the dynamic policy implies that the network resources are only dedicated to serve primary requests that arrived prior to slot $n-T+1$, then $C^p(n-1)$ and $C^p(n)$ represent the served requests that arrived at slots $n-T-1$ and $n-T$.
But, $C^p(k)\leq \min\{C, f(C^p(n-1)+C^p(n))\}$, $k=n-T,\cdots, n$. Hence, $C^p(k)=C$ for all $k=n-T,\cdots, n$ as $f\in[0.5,1]$.

Therefore, an outage at slot $n$ implies $\sum_{i=0}^{T}Q^p(n-i-T)>C(T+1)$, and consequently, we obtain the lower bounds on $\dl_P^p(\gl^p)$ and $\ds_P^s(\gs^p)$ in the same manner as in Theorem \ref{th:diversity_proac_d}.

Also, it is straightforward to see that the event $\mathcal{L}_D$ of Lemma 1 satisfies $P(\mathcal{L}_D)\leq P_P(\out^p)$. So the diversity gain of the polynomial scaling regime is fully determined.

\section{Proof of Theorem \ref{th:div_sec_dyn}}
\label{app:div_sec_dyn}
We will show the result for the linear scaling regime while its polynomial scaling regime counterpart is obtained through the same approach by taking into account the difference in the diversity gain definitions.

From \eqref{eq:P_P(Os)1} and \eqref{eq:T_1_cap}, we can upper bound $P_P(\out^s)$ by
\begin{align*}
P_P(\out^s) & \leq P(Q^s(n)+N_0^p(n)+0.5 Q^p(n)>C, \\
            & \quad C^p(n-1)<C)\\
            & \quad +P(Q^s(n)+N_0^p(n)+0.5 Q^p(n)>C,\\
            & \quad C^p(n-1)=C).
\end{align*}
But $C^p(n-1)<C$ implies $N_0^p(n)=0.5 Q^p(n-1)$ and hence the joint event $Q^s(n)+N_0^p(n)+0.5 Q^p(n)>C,C^p(n-1)<C$ implies $Q^s(n)+0.5 Q^p(n-1)+0.5 Q^p(n)>C$. Therefore, 
\begin{equation}
\label{eq:P_P(Os)2}
\begin{split}
P_P(\out^s) & \leq P(Q^s(n)+0.5 Q^p(n-1)+0.5 Q^p(n)>C)\\
            & \quad +P(C^p(n-1)=C).
\end{split}            
\end{equation}
Now, we show that the decay rate of the second term on the right hand side of \eqref{eq:P_P(Os)2} with $C$ is larger than the first. We start with the second term $P(C^p(n-1)=C)$ which can be upper bounded by
\small
\begin{align}
P(C^p(n-1)=C)&\leq P(N_0^p(\eta)+0.5 Q^p(\eta)>C, \nonumber \\
           & \quad C^p(\eta-1)<C\text{ for some }\eta\leq n-1) \nonumber\\
           &\quad + P(N_0^p(m)+0.5 Q^p(m)>C,\nonumber\\
           &\quad C^p(m-1)=C \text{ for all }m \leq n-1)\nonumber\\
           & \leq P(0.5 Q^p(\eta-1)+0.5 Q^p(\eta)>C)\nonumber\\
           \label{eq:all_C}
           &\quad + P(C^p(m)=C, \text{ for all } m\leq n-1).
\end{align}
\normalsize
Fix $0\leq M \leq n-1$. The last term on the right hand side of \eqref{eq:all_C} satisfies
\small
\begin{equation*}
\begin{split}
P(C^p(m)=C,\text{ for all }m\leq n-1)\leq & P(C^p(1)=\\ 
                                          & \cdots = C^p(M)=C),
\end{split}
\end{equation*}
\normalsize
where
\small
\begin{flalign*}
P(C^p(1)=\cdots=C^p(M)=C)&\leq \quad\quad \quad &
\end{flalign*}
\begin{align*}
\quad\quad & P(C^p(1)=\cdots=C^p(M)=C,\text{ No outages in }1,\cdots, M)\\
           & + \sum_{l=1}^{M} P(C^p(1)=\cdots=C^p(M)=C, l \text{ outages in }1,\cdots, M)
\end{align*}
\normalsize
implying
\small
\begin{flalign*}                         
P(C^p(1)=\cdots=C^p(M)=C)&\leq \quad\quad\quad & 
\end{flalign*}
\begin{align*}
\quad\quad & P(C^p(1)=\cdots=C^p(M)=C, \text{ No outages in }1,\cdots, M)\\
           & + (2^M-1)P_P^p(\out^p).
\end{align*}
\normalsize
Since $M$ is constant, the term $(2^M-1) P_P^p(\out^p)$ decays with the system capacity as $d_P^p(\gamma^p)$.
The joint event $C^p(1)=\cdots=C^p(M)=C$ and no outage in $1,\cdots, M$ implies
\begin{equation*}
\begin{split}
N_0^p(M)&=N_0^p(1)-(M-1)C+\sum_{i=1}^{M-1}Q^p(i)\\
        &\leq -(M-1)C+\sum_{i=0}^{M-1}Q^p(i).
\end{split}
\end{equation*}
and hence,
\small
\begin{align*}
 & P(C^p(1)=\cdots=C^p(M)=C, \text{ No outage in }1,\cdots,M) \\
\leq &  P\left(-(M-1)C+\sum_{i=0}^{M-1}Q^p(i)+0.5Q^p(M)>C\right)    \\
\leq & P\left(\sum_{i=0}^{M}Q^p(i)>MC\right)                      \\
\leq &  \inf_{r>0}\left\{e^{\Lambda(r)-rMC}\right\},
\end{align*}
\normalsize
where, for the linear scaling regime,  $$\overline{\Lambda}(r)=(M+1)\gl^pC(e^r-1).$$ Hence, 
\small
\begin{multline}
\label{eq:div_no_outage}
\lim_{C\to \infty}-\frac{1}{C}\log P\Bigl(\overline{C}^p(1)=\cdots=\overline{C}^p(M)=C,\\ \text{ No outage in }1,\cdots,M\Bigr)\geq \\
(M+1)\gl^p-M+M\log\left(\frac{M}{(M+1)\gl^p}\right)
\end{multline}
\normalsize
with the right hand side of \eqref{eq:div_no_outage} monotonically increasing in $M$ as long as $\frac{M}{M+1}>\gl^p$. Then, $M$ can be chosen sufficiently large\footnote{The system is assumed to operate in the steady state, i.e., $n\gg 1$.} so that
\small
\begin{multline*}
\lim_{C\to \infty}-\frac{1}{C}\log P\left(\overline{C}^p(m)=C\text{ for all }m\leq n-1\right)\geq\dl_P^p(\gl^p)\\
                                                                                             =2(\gl^p-1-\log \gl^p).
\end{multline*}
\normalsize

Also, the first term on the right hand side of \eqref{eq:all_C} can be written as
\small
\begin{equation*}
\begin{split}
P(0.5Q^p(\eta-1)+0.5Q^p(\eta)>C)&=P(Q^p(\eta-1)+Q^p(\eta)>2C)\\
                                & \geq P_P^p(\out^p),
\end{split}
\end{equation*}
\normalsize
where $T=1$.
Hence,
\begin{equation}
\begin{split}
\lim_{C\to \infty}-\frac{\log P\left(\overline{C}^p(n-1)=C\right)}{C}&\geq \dl_P^p(\gl^p)\\
                                                                     &=2(\gl^p-1-\log \gl^p).
\end{split}
\end{equation}

Now, comparing the two terms $P(Q^s(n)+0.5 Q^p(n-1)+0.5Q^p(n)>C)$ in \eqref{eq:P_P(Os)2} and $P(0.5Q^p(\eta-1)+0.5Q^p(\eta)>C)$ in \eqref{eq:all_C}, we have by the stationarity of $Q^p(n),n>0$ and the non-negativity of $Q^s(n), n>0$,
\small
\begin{multline*}
P(Q^s(n)+0.5 Q^p(n-1)+0.5Q^p(n)>C)\geq\\ P(0.5Q^p(\eta-1)+0.5Q^p(\eta)>C).
\end{multline*}
\normalsize 
This implies that the asymptotic decay rate of $\log P_P(\out^s)$ with $C$ is lower bounded by the decay rate of $P(Q^s(n)+0.5 Q^p(n-1)+0.5Q^p(n)>C)$ with $C$.

Now, we can use Chernoff bound to lower bound $\dl_P^s(\gl^p,\gl^s)$ as follows
\begin{equation}
\label{eq:dominating_term}
P(\Ql^s(n)+0.5 \Ql^p(n-1)+0.5\Ql^p(n)>C)\leq \inf_{r>0}\left\{e^{\overline{\Lambda}_{tot}(r)-rC}\right\},
\end{equation}
where $$\overline{\Lambda}_{tot}(r)=\gl^sC(e^r-1)+2\gl^pC(e^{0.5r}-1).$$
By differentiation, the optimal value of $r$, denoted $r^*$, satisfies
\begin{equation*}
\gl^s {e^r}^*+{\gl^p}^{0.5 r^*}-1=0.
\end{equation*}
Let $\yl\triangleq e^{0.5 r^*}$, we obtain
\begin{equation*}
\yl=-\frac{{\gl^p}}{2\gl^s}+\frac{\sqrt{4\gl^s+{\gl^p}^2}}{2\gl^s}
\end{equation*}
and $$r^*=2\log \yl.$$
Substituting with $r^*$ in \eqref{eq:dominating_term}, taking $-\log$ of both sides, dividing by $C$ and sending $C\to\infty$, the diversity gain of the secondary network in the linear scaling regime satisfies
\begin{equation*}
\dl_P^s(\gl^p,\gl^s)\geq -\gl^s(\yl^2-1)-2\gl^p(\yl-1)+2\log(\yl).
\end{equation*}

\section{Proof of Theorem \ref{th:Div_multi_rea}}
\label{app:Div_multi_rea}
\begin{equation*}
\begin{split}
P_N(\outm)&=P\left(S^m(n)>C\right)\\
          &=P\left(\frac{S^m(n)}{\thl C}>\frac{1}{\thl}\right)\\
          &=P\left(\frac{\sum_{l=1}^{\thl C}X^{[l]}}{\thl C}>\frac{1}{\thl}\right).
\end{split}
\end{equation*}
Applying Cramer's Theorem \cite{Big_Queues},
\begin{equation}
\label{eq:d_N(symm_multi)}
\dl_N(\gm,\thl)=-\inf_{r>0}\{ \thl \Lambda_{X^{[l]}}(r)-\thl r\},
\end{equation}
but
\begin{equation*}
\begin{split}
\Lambda_{X^{[l]}}(r)&=\log(1-A^m+A^m e^{r})\\
                    &=\log\left(e^{-\frac{\gm}{\thl}}+\left(1-e^{-\frac{\gm}{\thl}}\right)e^r\right),
\end{split}
\end{equation*}
Then, $$r^*=\log\left(\frac{e^{-\frac{\gm}{\thl}}}{(\thl-1)\left(1-e^{-\frac{\gm}{\thl}}\right)}\right).$$
The conditions $0<\gl<1$, $\thl>1$ ensure that $r^*>0$. Substituting with $r^*$ in \eqref{eq:d_N(symm_multi)}, we obtain \eqref{eq:Div_multi_rea_lin}. 

\section{Proof of Theorem \ref{th:div_multi_pred}}
\label{app:div_multi_pred}
Under EDF scheduling, an outage occurs in slot $n\gg 1$ if and only if $N^m(n-T)>C(T+1)$, where $N^m(n-T)$ is the number of distinct multicast data sources targeted by existing requests in the system at slot $n-T$. Hence
\begin{equation*}
P^*_P(\outm)\leq P_P(\outm)=P(N^m(n-T)>C(T+1)). 
\end{equation*}

Let $Z^m_T(n-T)$ be the number of distinct data sources that were requested in the window of slots $[n-2T, \cdots, n-T]$, then according to EDF, 
\begin{equation}
\label{eq:Aux_ineq}
N^m(n-T)\leq Z^m_T(n-T).
\end{equation}
Therefore $P(N^m(n-T)\leq C(T+1))\leq P(Z^m_T(n-T)>C(T+1))$.

Since each data source is requested independently of the others at each slot and from slot to another, then the probability that a data source is requested at least once in a window of $T+1$ slots, denoted $\xm$, is equal to
\begin{equation*}
\begin{split}
\xm&=1-(1-A^m)^{T+1}\\
   &=1-\exp\left(-\frac{(T+1)\gm}{\thl}\right),
\end{split}
\end{equation*}
hence \footnotesize{$$P(Z^m_T(n-T)=k)=\begin{cases} \binom{\thl C}{k}{\xm}^{k}(1-\xm)^{\thl C-k},&k=0,\cdots,\thl C\\ 0, & \text{otherwise.}\end{cases}$$}
\normalsize

Now we can upper-bound $P^*_P(\outm)$ using Chernoff bound as
\begin{equation*}
\begin{split}
P^*_P(\outm)&\leq P_P(\outm)\\
            &\leq P(Z^m_T(n-T)>C(T+1))\\
            &\leq \inf_{r>0} \{e^{\Lambda_Z(r)-rC(T+1)}\},
\end{split}
\end{equation*}
where $\Lambda_Z(r)=\thl C\log\left(1-\xm+\xm e^r\right).$
Solving for $r^*>0$ that minimizes $e^{\Lambda_Z(r)-rC(T+1)}$, we obtain $$r^*=\log\left(\frac{(1-\xm)(T+1)}{\xm(\thl-(T+1))}\right).$$

Now, taking $-\log P^*_P(\gm,\thl)$, dividing by $C$ and taking the limit as $C\to\infty$, we obtain \eqref{eq:div_multi_pro}.

\section{Proof of Theorem \ref{th:S1}}
\label{app:S1}
We have by the definition of $\outa$ in Scenario 1 that
\begin{equation*}
P(\outa)=P(S^m(n)+Q^u(n)>C).
\end{equation*}
By Cramer's theorem, we have
\begin{equation}
\label{eq:S1_Cramer}
\dl_1(\gu,\gm,\thl)=\inf_{r>0}\{r-\Lambda_{m+u}(r)\},
\end{equation}
where $$\Lambda_{m+u}(r)=\gu(e^r-1)+\thl\log\left(e^{-\frac{\gm}{\thl}}+e^r-e^{r-\frac{\gm}{\thl}}\right).$$

Differentiating $r-\Lambda_{m+u}(r)$ with respect to $r$ and equating with $0$, we obtain
\begin{equation}
\label{eq:S1_eq}
\gu\left( e^{\frac{\gm}{\thl}} -1\right)e^{2r^*}+\left((\thl-1)e^{\frac{\gm}{\thl}}-\thl+\gu+1\right)e^{r^*}-1=0.
\end{equation}
Set $y_1=e^{r^*}$, then \eqref{eq:S1_eq} is a quadratic equation in $y_1$, that can be solve analytically for two possible roots. Choosing the root $y_1>1$ for $r^*>0$, we get
\begin{equation*}
\begin{split}
y_1=&\frac{1}{2\gu \left(e^{\frac{\gm}{\thl}}-1\right)}\Biggl[\Bigl((\thl^2-2\thl+1)e^{\frac{2\gm}{\thl}}\\
   &+\bigl(-2\thl^2+2\thl(\gu+2)+2(\gu-2)\bigr)e^{\frac{\gm}{\thl}}+\thl^2\\
   &-2\thl(\gu+1)+{\gu}^2-2\gu+1\Bigr)^{\frac{1}{2}}+(1-\thl)e^{\frac{\gm}{\thl}}\\
   &+\thl-\gu-1\Biggr].
\end{split}
\end{equation*}
Substitution with $y_1=e^{r*}$ into \eqref{eq:S1_Cramer}, we obtain \eqref{eq:S1}.

\section{Proof of Theorem \ref{th:S2}}
\label{app:S2}

Under the policy $\pi_2$, suppose that an outage event has occurred in slot $n\gg 1$, then $N^m_0(n)+Q^u(n)>C$, which can be decomposed to either of the following to events: 1) $Q^u(n)>C$ or 2) $Q^u(n)\leq C$ but $N^m_0(n)>0$ so that $N^m_0(n)+Q^u(n)>C$. 
Now, focus on the second event, specifically, $N^m_0(n)>0$. To each data source of the $N^m_0(n)$, at least one corresponding request has already arrived at slot $n-T$. Since $N^m_0(n)>0$ and $N^m_0(n)+Q^u(n)>C$, then the system is operating at full capacity in the slots $[n-T,\cdots,n]$. That is, $$N^m(n-T)+\sum_{i=0}^{T}Q^u(n-i)>C(T+1),$$ where $N^m(n-T)$ is the number of distinct multicast data sources demanded by at least one request existing in the system at slot $n-T$.

From \eqref{eq:Aux_ineq}, $N^m(n-T)\leq Z^m_T(n-T)$, where $Z^m_T(n-T)$ is as defined in Appendix \ref{app:div_multi_pred}, then we can now write
\begin{align*}
P_2(\outa)&\leq P(Q^u(n)>C)\\
          &\quad +P\Biggl(\sum_{i=0}^{T}Q^u(n-i)+Z^m_T(n-T)>C(T+1),\\
          &\quad Q^u(n)<C\Biggr)\\
          &\leq P(Q^u(n)>C)\\
          &\quad +P\left(\sum_{i=0}^{T}Q^u(n-i)+Z^m_T(n-T)>C(T+1)\right).
\end{align*} 

We have from Theorem \ref{th:diversity_reac} that \begin{equation}\label{eq:p1}\lim_{C\to\infty}-\frac{\log P(Q^u(n)>C)}{C}=\gu-1-\log \gu.\end{equation} Also, Cramer's theorem can be used in the same way of Appendix \ref{app:S1} to show that

\begin{align}
\label{eq:p2}
& \quad \lim_{C\to\infty}-\frac{1}{C}\log P\Biggl(\sum_{i=0}^{T}Q^u(n-i)+Z^m_T(n-T)\nonumber\\
&\quad \quad >C(T+1)\Biggr)\nonumber\\
&= (T+1)\log y_2-(T+1)\gu(y_2-1) \nonumber\\
&\quad-\thl\log(1-\xm+\xm y_2),
\end{align}

where
\begin{equation*}
\begin{split}
y_2=&\frac{1}{2\xm\gu (T+1)}\Biggl[\Biggl(\Bigl((1-\xm)^2{\gu}^2+2\xm\gu(1-\xm)\\
    & +{\xm}^2\Bigr)^2T^2 +\Bigl([2\xm\gu(1-\xm)-2{\xm}^2]\thl\\
    &+2{\xm}^2(1-\xm)^2+4\xm\gu(1-\xm)+2{\xm}^2\Bigr)T\\
    &+[2\xm\thl(1-\xm)-2{\xm}^2]\thl+{\gu}^2(1-\xm)^2\\
    &+2\xm\thl(1-\xm)+{\xm}^2(1+\thl)^2\Biggr)^{\frac{1}{2}}\\
    &+\Bigl((\xm-1)\gu\Bigr)T-\xm\thl+\gu(\xm-1)+\xm\Biggr].
\end{split}
\end{equation*}
Therefore, from \eqref{eq:p1} and \eqref{eq:p2}, \eqref{eq:S2_l2} follows.

To see  \eqref{eq:S2_u2}, it suffices to note that $Q^u(n)>C$ is a sufficient condition for an outage at slot $n$ independently of the service policy used.
Hence, $P_2(\outa)\geq P(Q^u(n)>C)$, therefore,
$\dl_2(\gu,\gm,\thl)\leq \dl_N(\gu)$.

\section{Proof of Theorem \ref{th:S3}}
\label{app:S3}
An outage event at slot $n$ implies $N^u(n-T)+N^m(n-T)>C(T+1)$ where $N^u(n-T)$ is the number of unicast requests existing in the network at time slot $n-T$. Hence
\begin{equation*}
P_3(\outa)\leq P(N^u(n-T)+N^m(n-T)>C(T+1)),
\end{equation*}
but $$N^u(n-T)\leq \sum_{i=0}^{T}Q^u(n-i-T),$$ and $$N^m(n-T)\leq Z^m_T(n-T).$$
Therefore
\begin{equation*}
P_3(\outa)\leq P\Biggl(\sum_{i=0}^{T}Q^u(n-i-T)+Z^m_T(n-T)>C(T+1)\Biggr).
\end{equation*}

Since $\{Q^u(i)\}_i$ are IID random variables, then from \eqref{eq:p2}, we obtain \eqref{eq:S3_l3}.

\section{Proof of Theorem \ref{th:S4}}
\label{app:S4}
Regardless of the scheduling policy used, the following event is sufficient for an outage at slot $n$.
\begin{equation*}
\begin{split}
& Q^u(n-i-T)>2C-S^m(n-2i)-S^m(n-2i+1),\\
&\hspace{2.5 in} i=1,\cdots,T,\\
&\text{and}\\
& Q^u(n-T)>C-S^m(n).
\end{split}
\end{equation*}
The above event ensures that the number of delayed unicast requests is increasing over the window of slots $[n-2T,\cdots,n-T]$ where at slot $n-T$, the network will end up having $$\sum_{i=0}^T Q^u(n-i-T)+S^m(n-i)>C(T+1),$$ implying that the total number of resources that have to be consumed by slot $n$ inclusive is greater than the aggregate available capacity $C(T+1)$ which would cause an outage.

Noting that $\{S^m(i)\}_i$ are IID, we can write
\begin{equation*}
\begin{split}
P^*_4(\outa)\geq &  P(Q^u(n-T)+S^m(n)>C)\\
                 & \times P\Bigl( Q^u(n-T+1)+S^m(n-2)\\
                 & +S^m(n-1)>2C\Bigr)^T,
\end{split}
\end{equation*}
which, using Chernoff bound, leads to \eqref{eq:S4_u4}.

\end{document}